\def\mdseries@tt{m}             
\newtheorem{theorem}{Theorem}
\newtheorem{definition}{Definition}
\newtheorem{example}{Example}
\newcommand{\alt}{~~|~~}
\newcommand{\binopdef}     \oplus 
\newcommand{\unopdef}      \ominus 
\newcommand{\ife}      [3] {\ifline{#1}~\thenline{#2}~\elseline{#3}}
\newcommand{\ifop}         {\mathsf{if}}
\newcommand{\ifline}   [1] {\ifop~ #1}
\newcommand{\thenline} [1] {\mathsf{then}~#1}
\newcommand{\elseline} [1] {\mathsf{else}~#1}
\DeclareRobustCommand*\cal{\@fontswitch\relax\mathcal}
\def\cA{{\cal A}}
\def\cD{{\cal D}}
\def\cE{{\cal E}}
\def\cG{{\cal G}}
\def\cL{{\cal L}}
\def\cN{{\cal N}}
\newcommand{\tup} [1] {\langle #1 \rangle}
    \newcommand{\infral} [3] {\infer[\textsc{#3}]{\begin{array}{c} #2 \end{array} }{ \begin{array}{c} #1  \end{array} } }
\newcommand{\exec} [1] {\llbracket #1 \rrbracket}
\newcommand{\TODO}[1]{\hl{\textbf{TODO:} #1}\xspace}
\newcommand{\Comment} [1] {}
\newcommand{\proofblocks} [1] {}
\newcommand{\dref} [1] {}
\DeclareMathOperator*{\argmin}{arg\,min}
\begin{document}

\sloppy                         
\title[]{Program Synthesis Over Noisy Data with Guarantees} 


\author{Shivam Handa}
\affiliation{
  \department{EECS}              
  \institution{Massachusetts Institute of Technology}            
  \country{USA}                    
}
\email{shivam@mit.edu}          

\author{Martin Rinard}
\affiliation{
  \department{EECS}             
  \institution{Massachusetts Institute of Technology}           
  \country{USA}                   
}
\email{rinard@csail.mit.edu}         

\begin{abstract}
    We explore and formalize the task of synthesizing programs
over noisy data, i.e., data that may contain corrupted input-output
examples. 
By formalizing the concept of a Noise Source, an Input Source, and a prior
distribution over programs, we formalize the probabilistic process which
constructs a noisy dataset.
This formalism allows us to define the correctness of a synthesis algorithm, 
in terms of its ability to synthesize the
hidden underlying program. The probability of a synthesis algorithm being
correct depends upon the match between the 
Noise Source and the Loss Function used in the synthesis algorithm's
optimization process.  We formalize the concept of an optimal Loss
Function given prior information about the Noise Source.
We provide a technique to design optimal Loss Functions given perfect and
imperfect information about the Noise Sources.
We also formalize the concept and conditions required for convergence, i.e.,
conditions under which the probability that the synthesis algorithm produces
a correct program increases as the size of the noisy data set increases. 
This paper presents the first formalization of the concept of optimal Loss Functions, the
first closed form definition of optimal Loss Functions, 
and the first conditions that ensure that a noisy synthesis algorithm 
will have convergence guarantees. 

\end{abstract}

\begin{CCSXML}
<ccs2012>
<concept>
<concept_id>10011007.10011006.10011008</concept_id>
<concept_desc>Software and its engineering~General programming languages</concept_desc>
<concept_significance>500</concept_significance>
</concept>
<concept>
<concept_id>10003456.10003457.10003521.10003525</concept_id>
<concept_desc>Social and professional topics~History of programming languages</concept_desc>
<concept_significance>300</concept_significance>
</concept>
</ccs2012>
\end{CCSXML}

\ccsdesc[500]{Software and its engineering~General programming languages}
\ccsdesc[300]{Social and professional topics~History of programming languages}


\maketitle

\section{Introduction}
Program synthesis has been successfully used to synthesize programs
from examples, for domains such as string transformations~\cite{gulwani2011automating, 
singh2016transforming}, 
data wrangling~\cite{feng2017component}, data completion \cite{wang2017synthesis}, and
data structure manipulation~\cite{feser2015synthesizing, 
osera2015type, yaghmazadeh2016synthesizing}. 
In recent years, there has been interest in synthesizing programs
from input-output examples in presence of noise/corruptions~
\cite{handa2020inductive, raychev2016learning, peleg2020perfect}. 
The motivation behind this line of work is to extract information left intact in
noisy/corrupted data to synthesize the correct program. 
These techniques 
have empirically shown that synthesizing the correct program, even in presence
of substantial noise, is possible.


\subsection{Noisy Program Synthesis Framework}
Previous work~\cite{handa2020inductive, peleg2020perfect} 
formulates the noisy program synthesis problem as an 
optimization problem over the program space and the noisy dataset.
Given a Loss Function, these techniques return the program which best fits the
noisy dataset. 
A {\bf Loss Function}, within this context, measures the cost of the 
input-output examples on which the program $p$ produces a different output 
than the output in the noisy dataset.

However, no previous research explores the connection 
between the best-fit program and the hidden underlying
program which produced (via the Noise Source) the
noisy data set. No previous research characterizes when the noisy program
synthesis algorithm, working with a given Loss Function, 
will synthesize a program equivalent to the hidden underlying program. 
Nor does it specify how to pick an appropriate Loss Function
to maximize the synthesis algorithm's probability of synthesizing the correct program.

We formulate the correctness of Loss Function based 
noisy program synthesis techniques in
terms of their ability to find the underlying hidden program. We achieve this by
formalizing and specifying the underlying hidden probabilistic process which
selects the hidden program, generates inputs, and 
corrupts the program's outputs to produce the noisy data set provided
to the synthesis algorithm. 
Our formalism uses the following concepts:
\begin{itemize}[leftmargin=*]
    \item {\bf Program Space:} A set of programs $p$ defined by a 
grammar $\cG$, 
\item {\bf Prior Distribution over Programs:} $\rho_p$ from which a {\it
    hidden underlying program} $p_h$ is sampled from,
\item {\bf Input Source:} A probability distribution $\rho_i$ which generates $n$
    input examples $\vec{x} = \tup{x_1, \ldots x_n}$,
\item {\bf Hidden Outputs:} The correct outputs computed by the hidden program
    $p_h$ over the input examples $\vec{x}$. 
\item {\bf Noise Source:} A probability distribution $\rho_N$, which corrupts
    the {\it hidden outputs} to construct a set of noisy outputs $\vec{y} =
        \tup{y_1, \ldots y_n}$, 
\item {\bf Noisy Dataset:} The collection of inputs and noisy outputs $\cD =
    (\vec{x}, \vec{y})$ which are visible to the synthesis algorithm to
        synthesize the {\it hidden underlying program} $p_h$. 
\end{itemize}

\noindent Working with this formalism, we present the following two results:
\begin{itemize}[leftmargin=*]
\item {\bf Optimal Loss Functions:} Given information about the Noise Source,
we formally define the optimal Loss Function and provide a technique to 
design the optimal Loss Function given perfect and imperfect information about the
Noise Source.
\item {\bf Convergence:} We formalize the concept of convergence and the conditions
required for the program synthesis technique to converge to the correct program 
as the size of the noisy data set increases. 
\end{itemize}

\Comment{
\subsection{The Synthesis Technique}
Handa et al. proposed a technique to synthesize programs over noisy datasets,
allowing for a large class of Noise Sources~\cite{handa2020inductive}.
Their technique however lack the ability to incorporate prior information about
the underlying hidden program.
A standard way to incorporate prior information about a hidden model (in this
case, a program) is to
introduce it via a prior probability distribution~\cite{gelman2013bayesian}.
Given a noisy dataset,
a prior distribution over programs allows us to direct our synthesis to take 
into account relative probabilities of programs.
Within this paper,
we propose a modification to the synthesis algorithm
in~\cite{handa2020inductive} to incorporate prior
probabilities over programs. This is the first technique which can solve the 
noisy synthesis problems for a large class of Noise Sources, with guarantees, 
and allows us to introduce prior information about the hidden program.

The synthesis algorithm is built upon the following concepts:
\begin{itemize}[leftmargin=*]
\item {\bf Loss Function:} A Loss Function $\cL(p,\cD)$ that measures the cost of the input-output examples on which
$p$ produces a different output than the output in the data set $\cD$, 
\item {\bf Prior Distribution over Programs:} A probability distribution over
    programs in DSL $\cG$ expressed via weights over terminals and production
        rules in $\cG$, 
\item {\bf Complexity Measure:} A complexity measure $C(p)$ 
    that measures the complexity
of a given program $p$. 
\end{itemize}
}
\Comment{
\TODO{Change THis}
Our technique uses a finite tree automata (FTA) to partition the space of programs
$p$ defined by the grammar $G$ into equivalence classes. Each equivalence class
consists of all programs with identical input-output behavior on the inputs in
the given dataset. These equivalence classes allow us to select a set of optimal
outputs which optimally trade-offs between its loss with respect to the noisy
outputs, and the prior probability of these outputs. Given the optimal outputs,
our technique selects a program which minimizes the complexity measure and
returns the optimal outputs over the given inputs.
}

\subsection{Optimal Loss Functions}
Different Loss Functions can be appropriate for different Noise Sources
and synthesis domains.
For example, the $0/1$ Loss Function, which counts the number of input-output
examples where the noisy dataset and synthesized program $p$ disagree,
is a general Loss Function that assumes that the Noise Source can corrupt the output
arbitrarily.
The Damerau-Levenshtein (DL) Loss Function~\cite{damerau1964technique} 
measures the edit difference
under character insertions, deletions, substitutions, and/or transpositions.
It can
extract information present in partially corrupted outputs and thus can be appropriate for measuring
loss when text strings are corrupted by a discrete Noise Source.
\Comment{
The $0/\infty$ Loss Function, which is $\infty$ unless $p$ agrees
with all of the input-output examples in the data set $\cD$, 
specializes our technique to work with a
Noise Source which will not corrupt output examples i.e.
the standard programming-by-example program synthesis scenario.
}

If the Noise Source and a Loss Function are a 
good match, the noisy synthesis is often able to extract
enough information left intact in corrupted outputs to synthesize the correct program.
The better the match between the Noise Source and Loss Function, the higher
the probablity that the noisy synthesis algorithm will produce a correct
program (i.e, a program equivalent to the underlying hidden program). 
Given a noisy dataset and a Noise Source, an optimal Loss Function has the
highest probability of returning a correct program. 
We formally define the concept of an optimal Loss Function and 
derive a closed-form expression for optimal Loss Functions (Section~\ref{sec:optimal}). 

We also derive optimal Loss Functions for Noise
Sources previously studied in the noisy synthesis literature~\cite{handa2020inductive}
(Section~\ref{sec:casestudies}). 
These case studies provide a theoretical explanation for the empirical results provided
in~\cite{handa2020inductive}. 
\Comment{
Given a dataset is corrupted by 1-Delete Noise Source, our 
Specifically, it explains the reasoning behind
1-Delete Loss Function's ability to tolerate more noise in the dataset, when it
is 
corrupted by the 1-Delete Noise Source, compared
to the Damerau-Levenshtein Loss Function.
It also provided a theoretical reasoning behind $n$-Substitution Loss Function's
inability to synthesize the correct program in the same experiment.
}
\Comment{
The experimental results presented in Section~\ref{sec:casestudies} showcase how
optimal Loss Functions require lower number of correct input-output examples, 
compared to Non-optimal Loss Functions,
to synthesize the correct program.
}

\subsection{Convergence}
Convergence is an important property which is well studied in the statistics and
machine learning literature~\cite{kearns1994introduction, gelman2013bayesian}. 
Convergence properties connect the size of the noisy
dataset to the probability of synthesizing a correct underlying program.
Given a synthesis setting (i.e. a set of programs, an Input Source, a Noise
Source, a prior distribution, and a Loss Function), we define a convergence property
that allows us to guarantee that the probability of the synthesis algorithm
producing the correct underlying program increases as the size of the noisy
dataset increases. 

This is the first paper to formulate the conditions required for the synthesis
algorithm to have convergence guarantees. 
Given an Input Source and a Noise Source, these conditions allow us to
select an appropriate Loss Function, which will allow our synthesis algorithm to
guarantee convergence.

\Comment{
\subsection{Experimental Results} 
We have implemented our technique and applied it to various programs in the 
SyGuS 2018 benchmark set~\cite{SyGuS2018}.
The results indicate that the technique is effective at solving program synthesis
problems over strings with modestly sized solutions.
For discrete Noise Sources studied in noisy program synthesis 
literature~\cite{handa2020inductive}, the experimental results follow our 
theoretical results in this paper.
}

\subsection{Contributions}
This paper makes the following contributions:
\begin{itemize}[leftmargin=*]
\item {\bf Formalism:} It presents and formalizes a framework for noisy
    program synthesis. The framework includes the concepts of a prior probability
        distribution over programs, an Input Source, and a Noise Source. 
        It then formalizes how these structured distributions interact to form a 
        probabilistic process which creates the Noisy Dataset. Given a noisy
        data set, this formalism allows us to define a correct solution for the 
        noisy synthesis problem over that data set. 

        \Comment{
\item {\bf Technique:} It presents an implemented technique for inductive
    program synthesis over noisy data which incorporates the concept prior probability
        distributions over programs. This technique allows us to introduce prior
        information about the underlying program into the synthesis routine.
    }

\item {\bf Optimal Loss Function:}
It presents a framework to calculate the expected reward associated with
        predicting a program, given a noisy dataset. Based on this framework,
        it presents a technique to design Loss Functions which are optimal
        , i.e.,
        have the highest probability of returning the correct solution to the
        synthesis problem.

\item {\bf Convergence:} It formalizes the concept of convergence, i.e.,
    for any probability tolerance $\delta$,
        there exists a threshold dataset size $k$, 
        such that, given a random noisy dataset of size $n \geq k$ generated by
        a hidden program $p$, the synthesis algorithm will synthesize a program
        equivalent to $p$ with probability greater than $\delta$.
        Based on this definition, this paper formulates conditions on the Input
        Source, the Loss Function, and the Noise Source which ensure
        convergence.

\item {\bf Case Studies:}
It presents multiple case studies highlighting Input Sources, Noise Sources, and
Loss Functions which break these conditions and thus make convergence
        impossible. It also proves that these conditions hold for some Noise Sources
        and Loss Functions studied in prior work, thus providing a theoretical 
        explanation for the reported empirical results.
\end{itemize}

\Comment{
This paper's contributions will allow us to formally analyze correctness
properties of techniques designed for noisy program synthesis. The formalism
provided here will allow us to compare different synthesis algorithms, and thus
allow us to improve synthesis techniques in the future.
}

\section{Synthesis over Noisy Data}
Within this section, we formalize a conceptual framework to view Noisy
program synthesis.
We also introduce a modified noisy synthesis algorithm 
to accommodate the concept of prior distribution of programs.

\subsection{Noisy Program Synthesis Framework}
We first define the programs we consider, how inputs to the program are 
specified, and the program semantics. Without loss of generality, we assume
programs $p$ are specified as parse trees in a 
{\bf domain-specific language (DSL)} grammar $\cG$.
Internal nodes represent function invocations; leaves are 
constants/0-arity symbols in the DSL. 
A program $p$ executes in an input $x$.  
$\llbracket p \rrbracket x$ denotes the
output of $p$ on input $x$ 
($\llbracket . \rrbracket$ is defined in Figure~\ref{fig:exec_sem}).

\begin{figure}
    \[
        \begin{array}{c}
            \begin{array}{cc}
            \infral{}
            {\llbracket c \rrbracket x \Rightarrow c}
            {(Constant)}
            &
            \infral{}
            {\llbracket t \rrbracket x \Rightarrow x(t)}
            {(Variable)}
            \end{array}
            \\
            \\
            \infral{
            \llbracket n_1 \rrbracket x \Rightarrow v_1 ~~~~~ 
            \llbracket n_2 \rrbracket x \Rightarrow v_2 ~~~~ \ldots ~~~~
            \llbracket n_k \rrbracket x \Rightarrow v_k
            }
            {\llbracket f(n_1, n_2, \ldots n_k) \rrbracket x  \Rightarrow
            f(v_1, v_2, \ldots v_k)}{(Function)}
    \end{array}
    \]
\vspace{-.1in}
    \caption{Execution semantics for program $p$}
\vspace{-.1in}
    \label{fig:exec_sem}
\end{figure}

All valid programs (which can be executed) are defined by a DSL 
grammar $\cG = (T, N, P, s_0)$ where:
\begin{itemize}[leftmargin=*]
    \item $T$ is a set of terminal symbols. These may include 
        constants and symbols which may change value depending on the input
        $x$.
    \item $N$ is the set of nonterminals that represent subexpressions in our
        DSL.
    \item $P$ is the set of production rules of the form  $s \rightarrow f(s_1,
        \ldots, s_n)$, where $f$ is a built-in function in the DSL and 
        $s, s_1, \ldots, s_n$ are non-terminals in the grammar.
    \item $s_0 \in N$ is the start non-terminal in the grammar.
\end{itemize}

We assume that we are given a black box implementation of each built-in function $f$
in the DSL.  In general, all techniques explored within this paper can be generalized to
any DSL which can be specified within the above framework. This is a standard
way of specifying DSLs in program synthesis literature~\cite{handa2020inductive,
wang2017program}

\begin{example}\label{ex:dsl}
    The following DSL defines expressions over input x, 
    constants 2 and 3, and addition and multiplication:
    \[
        \begin{array}{rcl}
            n &:=& x \alt n + t \alt n \times t; \\
            t &:=& 2 \alt 3;
        \end{array}
    \]
\end{example}
{\noindent \bf Notation:} 
We will use the notation $p(x)$ 
to denote $\llbracket p \rrbracket
x$ within the rest of the paper. Given a vector of input values $\vec{x} =
\tup{x_1, x_2, \ldots x_n}$, we use the notation $p[\vec{x}]$ to denote vector 
$\tup{p(x_1), p(x_2), \ldots p(x_n)}$.

\noindent We use the notation $G \subseteq \cG$ to denote a finite subset of
programs accepted by $\cG$.
Given an input vector $\vec{x}$,
we use the notation $G[\vec{x}]$ to denote the set of outputs produced by
programs in $G$. Formally,
\[
    G[\vec{x}] = \{p[\vec{x}]~\vert~p \in G\}
\]
Given a set of programs $G$, an input vector $\vec{x}$, and an output vector $\vec{z}$, we
use the notation $G_{\vec{x}, \vec{z}}$ to denote the set of programs in $G$,
which given input vector $\vec{x}$ produce the output $\vec{z}$. Formally,
\[
    G_{\vec{x}, \vec{z}} = \{p \in G~\vert p[\vec{x}] = \vec{z} \}
\]
\noindent Given two programs $p_1, p_2 \in G$, we use the notion $p_1 \approx p_2$ to
imply program $p_1$ is equivalent to $p_2$, i.e., for all $x \in X$, $p_1(x) =
p_2(x)$. Given an input vector $\vec{x}$, we use the notion $p_1
\approx_{\vec{x}} p_2$ to imply that program $p_1$ and $p_2$ have the same
outputs on input vector $\vec{x}$, i.e., $p_1[\vec{x}] = p_2[\vec{x}]$.

\noindent
Given a set of programs $G$,
we use the notation $G_{p_h}$ to denote the set of programs in $G$ which are
equivalent to program $p_h$. Formally, 
\[G_{p_h} = \{p \in G~\vert~p \approx
p_h\}\]
We use the notation ${G}^C_{p_h}$ to denote the set of programs in $G$ which
are not equivalent to program $p_h$. Formally, 
\[{G}^C_{p_h} = \{p \in G~\vert~p \napprox p_h\}\]

\noindent{\bf Prior Distribution over Programs:}
Given a set of programs $G$, let $\rho_p$ be a prior distribution over programs in $G$.
We assume the hidden underlying program $p_h$ is sampled from the prior
distribution with probability $\rho_p(p_h)$.
Prior distributions over models are a standard way to introduce prior knowledge
withing systems which learn from data~\cite{gelman2013bayesian}.
The prior distribution allows us to direct the synthesis procedure
by introducing information about the underlying process.
The synthesis algorithm use this distribution to trade off performance
over the noisy data set and 
synthesizing the most likely program. 

Within this paper, we assume that $\rho_p$ is expressed via a 
set $G \subseteq \cG$ and a weight function $w_\cG$ over the DSL $\cG$. 
We assume, we are given a weight function $w_\cG$, which assigns a weight to each
terminal $t \in T$, and each production $s \leftarrow f(s_1, \ldots s_k)$ is
$P$.

Given a program $p \in G$, $\rho_p(p)$ is defined as 
\[
    \rho_p(p) = \frac{w(s_0, p)}{\sum\limits_{p \in G}w(s_0, p)}
\]
where $w(p)$ is computed via the following recursive definition:
\[
    \begin{array}{rcl}
        w(t, t) &:=& w_G(t)\\
        w(s, f(e_1, \ldots e_n)) &:=& 
        \sum\limits_{pd \in P}
        w_G(pd)\times \prod\limits_{i=1}^n w(s_i, e_i)
    \end{array}
\]
where $pd$ are productions of the form $s \leftarrow f(s_1, \ldots s_n)$.

\noindent{\bf Input Source:}
An Input Source is a probabilistic process which generates the inputs provided
to the hidden underlying program. Formally, an Input Source is a probability
distribution $\rho_i$, from which $n \in \mathds{N}$ inputs $\vec{x} = \tup{x_1,
\ldots x_n}$ are sampled with probability $\rho_i(\tup{x_1,
        \ldots x_n}~\vert~p_h,~n)$. Where $p_h$ is hidden underlying program.
        
        \noindent{\it Note:} 
        Within this paper, we assume that
        inputs are independent of the hidden program $p_h$. We leave the
        exploration of idea that the inputs maybe selected to provide
        information about the hidden program for future work. 

\noindent{\bf Noise Source:}
A Noise Source $N$ is a probabilistic process which corrupts the correct
outputs returned by the hidden program to create the noisy outputs. Formally, 
a Noise Source $N$ is attached with a probability distribution $\rho_N$. Given a
hidden program $p_h$ and a set of outputs $\tup{z_1, \ldots z_n}$, 
the noisy outputs $\tup{y_1, \ldots y_n}$ are sampled from the probability
distribution $\rho_N$, with probability 
        \[\rho_N(\tup{y_1, \ldots y_n}~\vert~\tup{z_1, \ldots z_n})\]
\noindent{\it Note:} Within this paper, we will use the notation $N$ and
$\rho_N$ for a Noise Source interchangeably.

\noindent{\bf Noisy Dataset:}
A noisy dataset $\cD$ is a set of input values, denoted by $\vec{x}$ and noisy output
values, denoted by $\vec{y}$.
Within this paper, we assume the dataset $\cD = (\vec{x}, \vec{y})$ of size $n$ is
constructed by the following process:
\begin{itemize}[leftmargin=*]
    \item A hidden program $p_h$ 
        is sampled from $G$ with probability
        $\rho_p(p_h)$.
    \item $n$ inputs $\vec{x} = \tup{x_1, \ldots x_n}$
        are sampled from probability distribution $\rho_i(\tup{x_1,
        \ldots x_n}~\vert~n)$.    
    \item We compute outputs $\vec{z} = \tup{z_1, \ldots z_n}$, where $z_i = p_h(x_i)$.
    \item The Noise Source introduces noise by corrupting outputs $z_1,
        \ldots z_n$ to $\vec{y} = \tup{y_1, \ldots y_n}$ with probability
        \[\rho_N(\tup{y_1, \ldots y_n}~\vert~\tup{z_1, \ldots z_n})\]
\end{itemize}

\noindent{\bf Correctness:}
The goal of the synthesis procedure is to find a program $p$ such that $p$ is
equivalent to $p_h$, i.e., $p \approx p_h$. But
even in absence of noise, it maybe be
impossible to synthesize a program equivalent to $p_h$. 

\noindent
Therefore, similar to Noise-Free programming-by-example systems \cite{wang2017program, polozov2015flashmeta},
we relax the synthesis requirements to find a program $p$ such that
$p$ and $p_h$ have the same outputs on the input
vector $\vec{x}$ ($p \approx_{\vec{x}} p_h$), 
given input vector $\vec{x} = \tup{x_1, \ldots x_n}$. 
Note that, even in this relaxed
setting, the noise introduced by the Noise Source may make it impossible to
acquire enough information to synthesize the correct program.

In Section~\ref{sec:convergence}, we will tackle the harder problem of convergence, i.e., 
probability of synthesizing a program $p$, equivalent to the
hidden program $p_h$ ($p \approx p_h$)
will improve as we increase the size of
the noisy dataset.

\subsection{Loss Function}\label{sec:lossFunction}
Given a dataset $\cD = (\vec{x}, \vec{y})$ 
(inputs $\vec{x} = \tup{x_1, \ldots x_n}$ 
and outputs $\vec{y} = \tup{y_1, \ldots
y_n}$), 
and a program $p$, a {\bf Loss Function} 
$\cL(p, \cD)$
calculates how incorrect the program is with respect to the given
dataset. A Loss Function, only depends on outputs
$\vec{y}$, 
and the outputs of the program $p$ over inputs $\vec{x}$ 
(i.e., $\vec{z} = p[\vec{x}]$).
Given programs $p_1, p_2$, such
that for all $x \in \vec{x}$, $p_1(x) = p_2(x)$, 
then $\cL(p_1, \cD) = \cL(p_2, \cD)$. 
We can rewrite the Loss Function as 
\[
    \cL(p, \cD) = \cL(p[\vec{x}], \vec{y})
\]

\begin{definition}{\bf Piecewise Loss Function:}
A Piecewise Loss\\ Function $\cL(p, \cD)$ can be expressed in the following form
\[
    \cL(p, \cD) = \sum\limits_{i=1}^n L(p(x_i), y_i)
\]
    where $\cD = (\vec{x}, \vec{y})$, $\vec{x} = \tup{x_1, \ldots x_n}$, 
    $\vec{y} = \tup{y_1, \ldots y_n}$, and 
    $L(z, y)$ is the per-example Loss Function.
\end{definition}

\begin{definition}
{\bf $0/1$ Loss Function:} The $0/1$ Loss Function \\ 
$\cL_{0/1}(p, \cD)$ counts the number of
    input-output examples where $p$ does not agree with the data set $\cD$:
    \[
        \cL_{0/1}(p, \cD) = \sum\limits_{i=1}^n 1 ~~\mathrm{if}~~ (y_i \neq 
        p(x_i)) ~~\mathrm{else}~~ 0
    \]
\end{definition}


\subsection{Complexity Measure}
\label{sec:complexityMeasure}
As previously defined in literature~\cite{handa2020inductive},
given a program $p$, a\\ {\bf Complexity Measure} $C(p)$ ranks 
programs independent of the input-output examples in the 
dataset $\cD$.  
This measure is used to synthesize a simple program out of the current correct
candidate programs.
Formally, a complexity measure is a function
$C(p)$ that maps each program $p$ expressible in $G$ 
to a real number. The following $\mathrm{Cost}(p)$ complexity
measure computes the complexity of given program $p$ represented as a 
parse tree recursively as follows:
\[
\begin{array}{rcl}
\mathrm{Cost}(t) &=& \mathrm{cost}(t)\\
\mathrm{Cost}(f(e_1, e_2, \ldots e_k))  &=& \mathrm{cost}(f) + \sum\limits_{i = 1}^k \mathrm{Cost}(e_i) 
\end{array}
\]
where $t$ and $f$ are terminals and  built-in functions in our DSL $\cG$ respectively.
Setting $\mathrm{cost}(t) = \mathrm{cost}(f) = 1$ delivers a complexity measure 
$\mathrm{Size}(p)$ that computes the size of $p$.  

\subsection{Program Synthesis over Noisy Data}
We modify the synthesis algorithm presented in
\cite{handa2020inductive} to include the concept of prior distributions over programs.
The synthesis algorithm builds upon Finite Tree Automata.

\begin{definition}[\bf FTA]
    A finite tree automaton  (FTA) over alphabet $F$ is a tuple
    $\cA = (Q, F, Q_f, \Delta)$ where $Q$ is a set of states, $Q_f
    \subseteq Q$ is the set of accepting states, and $\Delta$ is a set of
    transitions of the form $f(q_1, \ldots, q_k) \rightarrow q$ where 
    $q, q_1, \ldots q_k$ are states, $f \in F$.
\end{definition}

\begin{figure}
    \[
        \begin{array}{c}
                \infral{
                    t \in T, \vec{c} = \tup{\exec{t}x_1, \ldots \exec{t}x_n}
                }
                {
                    q^{\vec{c}}_t \in Q 
                }
                {(Term)}
                \\
                \\
                \infral{
                    q^{\vec{c}}_{s_0} \in Q
                }
                {
                    q^{\vec{c}}_{s_0} \in Q_f 
                }
                {(Final)}
            \\
            \\
            \infral{
                s \leftarrow f(s_1, \ldots s_k) \in P,
                \{q^{\vec{c}_1}_{s_1}, \ldots q^{\vec{c}_k}_{s_k}\} \subseteq Q,
                \\
                \vec{c} = \tup{c_{o,1}, \ldots c_{o, n}}, c_{o, i} = \exec{f(
                c_{1, i}, \ldots c_{l, i}
                )}
            }
            {
                q^{\vec{c}}_s \in Q,  f(q^{\vec{c}_1}_{s_1}, \ldots 
                q^{\vec{c}_k}_{s_k}) \rightarrow q^{\vec{c}}_s \in \Delta
            }
            {(Prod)}
        \end{array}
    \]
    \caption{Rules for constructing a FTA $\cA = (Q, F, Q_f, \Delta)$
    given inputs $\vec{x} = \tup{x_1, \ldots x_n}$ 
    and DSL $\cG = (T, N, P, s_0)$.}
    \label{fig:wfta_rules}
\end{figure}

Given an input vector $\vec{x} 
= \tup{x_1, \ldots x_n}$, 
Figure~\ref{fig:wfta_rules} presents rules for constructing a FTA
that accepts all programs in grammar $\cG$. The alphabet 
of the FTA contains built-in functions within the DSL. The states 
in the FTA are of the form $q_s^{\vec{c}}$, where $s$ is a symbol (terminal or
non-terminal) in $\cG$ and $\vec{c}$ is a vector of values.
The existence of state $q^{\vec{c}}_s$ implies that there exists a 
partial program  
which can 
map $\vec{x}$ to values $\vec{c}$. Similarly, the existence of 
transition $f(q_{s_1}^{\vec{c}_1}, 
q_{s_2}^{\vec{c}_2} \ldots q_{s_k}^{\vec{c}_k}) \rightarrow
q_s^{\vec{c}}$ implies $\forall j \in [1, n].f(\vec{c}_{1, j}, 
\vec{c}_{2, j} \ldots \vec{c}_{k, j}) =
\vec{c}_j$. 

The $\mathsf{Term}$ rule states that if we have a terminal $t$ (either 
a constant in our language or input symbol $x$), execute it with the input
$x_i$ and construct a state $q_t^{\vec{c}}$ (where $\vec{c}_i = \llbracket t
\rrbracket x_i$). The $\mathsf{Prod}$  rule 
states that, if we have a production rule 
$f(s_1, s_2, \ldots s_k) \rightarrow s \in \Delta$, and 
there exists states $q_{s_1}^{\vec{c}_1}, q_{s_2}^{\vec{c}_2} \ldots
q_{s_k}^{\vec{c}_k} \in Q$, 
then we also have state $q_s^{\vec{c}}$ in the FTA and a transition 
$f(q_{s_1}^{\vec{c}_1}, q_{s_2}^{\vec{c}_2}, \ldots q_{s_k}^{\vec{c}_k}) \rightarrow
q_{s}^{\vec{c}}$.

The FTA $\mathsf{Final}$ rule (Figure~\ref{fig:wfta_rules}) marks all states 
$q^{\vec{c}}_{s_0}$ with start symbol $s_0$ as accepting states regardless of the
values $\vec{c}$ attached to the state. 


The FTA divides the set of programs in the DSL into subsets. 
Given an input set $\vec{x}$, all programs in a subset produce the same
outputs (based on the accepting state), i.e., 
if a program $p$ is accepted by the
accepting state $q^{\vec{c}}_{s_0}$, then $p[\vec{x}] = \vec{c}$.

In general, the rules in Figure~\ref{fig:wfta_rules} may result in a FTA which 
has infinitely many states. To control the size of the resulting FTA, we do not 
add a new state within the constructed FTA if the smallest 
tree it will accept is larger 
than a given threshold $d$. 
This results in a FTA which accepts all programs which are consistent with the 
input-output example but are smaller than the given threshold (it may
accept some programs which are larger than the given threshold but it will never
accept a program which is inconsistent with the input-output example).
This is a standard practice in the synthesis
literature~\cite{wang2017program, polozov2015flashmeta}.  

We denote the FTA constructed from DSL $\cG$, given input vector $\vec{x}$
 and threshold $b$  as
$\cA_\cG^d[\vec{x}]$. We omit the subscript grammar $\cG$ and threshold $b$
wherever it is obvious from context.

\begin{figure}[tb]
        \begin{tikzpicture}[->,>=stealth',shorten >=1pt,auto,node distance=1.4cm]

                        \tikzset{every state/.append style={rectangle}}
            \node[state, accepting, 
            initial by arrow, initial text={$x$}]           (1)  
            {$1$};
            \node[state, accepting]          (2)   [right of=1]    {$2$};
            \node[state, accepting]    (4) [above=1.0cm of 2]    {$4
            $};
            \node[state, accepting]    (3) [below=1.0cm of 2]    {$3
            $};
            \node[state, accepting]   (6) [right=1.0cm of 4] {$6$};
            \node[state, accepting]   (5) [below right=0.3cm and 1.4cm of 2]
            {$5$};
            \node[state, accepting]   (9) [below right=0.1cm and 1.4cm of 3]
            {$9$};
            \node[state, accepting]   (8)  [above=1.0cm of 4]  {$8$};
            \node[state, accepting]   (12) [left of=8]  {$12$};
            \node[state, accepting]   (7)  [right of=8] {$7$};

            \path 
                    (1)   edge node [sloped, below]{$+2, \times 3$} (3)
                        edge node [sloped, above] {$\times 2$}  (2)
                        edge node [sloped, above] {$+3$}  (4)
                  (2)   edge node [sloped, above] {$+2, \times 2$} (4)
                        edge node [sloped, above] {$\times 3$} (6)
                        edge node [sloped, above] {$+3$} (5)
                  (3)   edge node [sloped, above]{$\times 2, + 3$} (6)
                        edge node [sloped, above]{$+2$} (5)
                        edge node [sloped, above] {$\times 3$} (9)
                  (4)   edge node [right] {$\times 2$} (8)
                        edge node [sloped, above] {$+2$} (6)
                        edge node [left] {$\times 3$} (12)
                        edge node [right] {$+3$} (7)
            ;
        \end{tikzpicture}
    \caption{The FTA constructed for Example~\ref{ex:wfta1}}
    \label{fig:wfta1}
\end{figure}

\begin{example}\label{ex:wfta1}
    Consider the DSL presented in Example~\ref{ex:dsl}.
    Given input $x=1$, 
    Figure~\ref{fig:wfta1} presents the
    FTA which represents all
    programs of height less than $3$.
\end{example}
\noindent For readability, we omit the states for terminals $2$ and $3$.

Algorithm~\ref{alg:regsyn} presents a modified version of the synthesis
algorithm presented by~\cite{handa2020inductive} to
synthesize programs within the noisy synthesis settings.
The algorithm first constructs a FTA over input vector $\vec{x}$. 
It then finds the state $q^{\vec{c}}_{s_0}$ which optimizes $\cL(\vec{c}, \vec{y}) 
- \log~
\pi(q^{\vec{c}}_{s_0}, \cG, \cA, d)$, where 
$\pi(q^{\vec{c}}_{s_0}, \cG, \cA, d)$ is defined in figure~\ref{fig:pi}.

$w(q^{\vec{c}}_s, m)$ denotes the sum of weights of partial programs of size
$\leq m$, accepted by the state $q^{\vec{c}}_s$.
$\pi(q^{\vec{c}}_{s_0}, \cG, \cA, d)$ denotes the sum of probabilities of all
programs accepted by state $q^{\vec{c}}_{s_0}$ of size $\leq d$. 
Note that, 
$\pi(q^{\vec{c}}_{s_0}, \cG, \cA, d) = \rho_p(G_{\vec{x}, \vec{c}})$, where $G$
is the subset of programs in $\cG$ of height $\leq d$.

\begin{figure*}
$
    \begin{array}{rcl}
        w(q^{\vec{c}}_t, m) &:=& w_\cG(t)\\
        w(q^{\vec{c}}_s, m) &:=& 
        \sum\limits_{f(q^{\vec{c}_1}_{s_1}, \ldots q^{\vec{c}_k}_{s_k}) 
        \rightarrow q^{\vec{c}}_s \in \Delta} 
        w_\cG(f)(\prod\limits_{i \in [1, k]}
        w(q^{\vec{c}_i}_{s_i}, m - 1) )\\
        \pi(q^{\vec{c}}_{s_0}, \cG, \cA, d) &:=& 
            w(q^{\vec{c}}_{s_0}, d)/
            \sum\limits_{q^{\vec{z}}_{s_0} \in Q_f}w(q^{\vec{z}}_{s_0}, d)
    \end{array}
$
        \caption{
            Rules for calculating $\pi(q^{\vec{c}}_{s_0}, \cG, \cA, d)$}
        \label{fig:pi}
    \end{figure*}

Note that $\vec{c}$ minimizes $\cL(\vec{c}, \vec{y}) - \log \rho_p(G_{\vec{x},
\vec{c}})$, where $G_{\vec{x}, \vec{c}}$ is the set of programs in $G$ which map
input $\vec{x}$ to output $\vec{c}$.

Given the optimal state $q^*$, we find the program 
accepted by $q^*$ which minimizes our complexity metric $C(p)$. 
The following equations hold true:
\[
    \vec{c} = \argmin_{\vec{c} \in G[\vec{x}]} \cL(\vec{c}, \vec{y}) 
    - \log \rho_p(G_{\vec{x}, \vec{c}}),~~~
    p^* = \argmin_{p \in G_{\vec{x}, \vec{c}}}C(p)
\]
\[
    p^* \in G_{\vec{x}, \vec{c}}
    \iff \vec{c} = p^*[\vec{x}] \text{ and } p^* \in G
\]

 \begin{algorithm}
     \SetAlgoLined
     \SetKwInOut{Input}{Input}
     \Input{DSL $\cG$, prior distribution $\pi$, 
     threshold $d$, data set $\cD = (\vec{x}, \vec{y})$, 
     Loss function $\cL$, 
     complexity measure $C$}
\KwResult{Synthesized program $p^*$}
     $\cA^\cG_d[\vec{x}] = (Q, F, Q_f, \Delta)$\\ 
     $q^* \leftarrow \mathsf{argmin}_{q^{\vec{c}}_{s_0} 
     \in Q_f}~ \cL(\vec{c}, \vec{y}) - \log \pi(q^{\vec{c}}_{s_0}, \cG, \cA, d)$\\
     $p^* \leftarrow \mathsf{argmin}_{p \in (Q, F, \{q^*\}, \Delta)} C(p)$\\
 \caption{Synthesis Algorithm}
     \label{alg:regsyn}
 \end{algorithm}

Given a FTA $\cA$, we can use dynamic programming to find the 
minimum complexity parse tree 
(under the certain $\mathrm{Cost}(p)$ like measures) 
accepted by $\cA$~\cite{gallo1993directed}. 
In general, given a FTA $\cA$, we assume 
we are provided with a method to find the program $p$ accepted by $\cA$ which
minimizes the complexity measure.

\section{Optimal Loss Function}
\label{sec:optimal}
Within this section, we will formalize the connection between the Noise Source
and the Loss Function. We then derive a closed form expression for the optimal
Loss Function in case of perfect and imperfect information about the Noise
Source.

\subsection{Optimal Loss Function given a Noise Source}
\label{subsec:optimalloss}

\Comment{
Given dataset $\cD = (\vec{x}, \vec{y})$, we assume it was constructed by 
the following underlying process:
\begin{itemize}
    \item A hidden program $p_h$ is sampled from the set $G$ with probability
        distribution $\rho_p$.
    \item $n$ inputs $\vec{x} = \tup{x_1, \ldots x_n}$
        are sampled from probability distribution $\rho_i(\tup{x_1,
        \ldots x_n}~\vert n)$.   
    \item The process then computes outputs 
        $\vec{z} = \tup{z_1, \ldots z_n}$, where $z_i = p_h(x_i)$.
    \item A Noise Source $N$, introduces noise by corrupting outputs $z_1,
        \ldots z_n$ to $\vec{y} = \tup{y_1, \ldots y_n}$ with probability
        \[\rho_N(\tup{y_1, \ldots y_n} \vert \tup{z_1, \ldots z_n})\]
 \end{itemize}
Given $G$, $x_1, \ldots x_n$, and $y_1, \ldots y_n$, we want to
        synthesize program $p \approx p_h$. This problem is hard even in the
        absence of noise, i.e., when the dataset $\cD = \cD_c = (\vec{x},
        \vec{z})$. Even the correct dataset $\cD_c$ may not contain enough
        information to . Therefore, similar to previous work in non-noisy inductive
        program synthesis settings, we are instead search for a program $p$
        which has the same output as program $p_h$ on input set $\vec{x}$
        (denoted by $p \approx_{\vec{x}} p_h$). Formally, $p \approx_{\vec{x}}
        p_h$ if and only if $p[\vec{x}] = p_h[\vec{x}]$.
    }

Given dataset $\cD = (\vec{x}, \vec{y})$, let us  
assume that a synthesis
procedure predicts $p$ to be the underlying hidden program. The {\bf Expected Reward} is the probability
that $p$ generated dataset $\cD$.
Formally, given $\cD = (\vec{x},
\vec{y})$, 

\[\cE(p \vert \vec{x}, \vec{y}) = \sum\limits_{p_h \in G} \mathds{1}(p
\approx_{\vec{x}} p_h) 
\rho(p_h \vert \vec{x}, \vec{y})
\]
\[
    = \frac{1}{\rho(\vec{y}~\vert~\vec{x})}\sum\limits_{p_h \in G} \mathds{1}(p \approx_{\vec{x}} p_h) 
    \rho_p(p_h \vert \vec{x}) \rho_N(\vec{y} \vert p_h[\vec{x}])
\]
\[
    = \frac{1}{\rho(\vec{y}~\vert~\vec{x})}\sum\limits_{p_h \in G} \mathds{1}(p \approx_{\vec{x}} p_h) 
     \rho_p(p_h) \rho_N(\vec{y} \vert p_h[\vec{x}])
\]
\[
    \cE(p~\vert~\vec{x}, \vec{y}) = \frac{1}{\rho(\vec{y}~\vert~\vec{x})}
    \rho_p(G_{\vec{x}, p[\vec{x}]})\rho_N(\vec{y}~\vert~p[\vec{x}])
\]
Since the above reward only depends on the output of program $p$ on input set
$\vec{x}$, 
we can rewrite the above reward as
\[
    \cE(\vec{c}~\vert~\vec{x}, \vec{y}) = \frac{1}{\rho(\vec{y}~\vert~\vec{x})}\rho_p(G_{\vec{x},
    \vec{c}})\rho_N(\vec{y}~\vert~\vec{c})
\]
where $\vec{c} = p[\vec{x}]$. 

Therefore, given dataset $\cD$, 
the probability $\cD$ was generated by the synthesized program $p$ is
$\cE(p[\vec{x}]~\vert~\vec{x},\vec{y})$.

Given a Loss Function $\cL$ and prior distribution $\rho_p$, our synthesis algorithm is
correct with probability $\cE(p_l[\vec{x}]~\vert~\vec{x}, \vec{y})$, where $p_l$
is:
\[
    \vec{c} = \argmin_{\vec{c} \in G[\vec{x}]} \cL(\vec{c}, \vec{y}) - \log
    \rho_p(G_{\vec{x},
    \vec{c}}), ~~~
    p_l = \argmin_{p \in G_{\vec{x}, \vec{c}}} C(p)
\]
$p_l$ will be the ideal prediction if it maximizes the expected reward.
Let $\cE_L(p[\vec{x}]~\vert~\vec{x}, \vec{y}) = - \log \cE(p[\vec{x}]~\vert~\vec{x}, \vec{y})$.
\[
    \cE_L(p[\vec{x}]~\vert~\vec{x}, \vec{y}) = - \log \rho_p(G_{\vec{x},
    p[\vec{x}]}) + (- \log \rho_N(\vec{y}~\vert~p[\vec{x}])) + C
\]

Therefore, given a set of programs $G$, dataset $\cD$,
prior distribution $\rho_p$, and Noise Source $N$, 
and no other information
about the hidden program $p_h$, 
the synthesis algorithm will always return the program which maximizes the
expected reward if the Loss
Function $\cL(\vec{c}, \vec{y})$ is equal to 
$ - \log
\rho_N(\vec{y}~\vert~\vec{c}) + C$, for any constant $C$.
Hence, within this setting, $-\log \rho_N(\vec{y}~\vert~\vec{c}) + C$ is the optimal
Loss Function.

Note that, 
in expectation, no other Loss Function 
will out perform the above optimal version. 

\subsection{Optimal Loss Function given imperfect information}
Let us now consider a scenario where we are presented with some 
imperfect information about the Noise Source, i.e., the Noise Source corrupting
the correct output belongs to the set $\cN$ and we are presented with a prior
probability distribution $\rho_\cN$ over possible Noise Sources in $\cN$.
The probability
that $N \in \cN$ is 
the underlying Noise Source corrupting the correct dataset is $\rho_\cN(N)$.

Given dataset $\cD = (\vec{x}, \vec{y})$, we assume it was constructed by 
the following underlying process:
\begin{itemize}[leftmargin=*]
    \item A Noise Source $N \in \cN$ is sampled from the prior distribution
        $\rho_\cN$,
        with probability $\rho_\cN(N)$.
    \item A hidden program $p_h$ is sampled from the set $G$ with probability
        distribution $\rho_p$.
    \item $n$ inputs $\vec{x} = \tup{x_1, \ldots x_n}$
        are sampled from probability distribution $\rho_i$ with probability $\rho_i(\tup{x_1,
        \ldots x_n}~\vert n)$. 
    \item The process then computes outputs 
        $\vec{z} = \tup{z_1, \ldots z_n}$, where $z_i = p_h(x_i)$.
    \item The sampled Noise Source $N$, introduces noise by corrupting outputs $z_1,
        \ldots z_n$ to $\vec{y} = \tup{y_1, \ldots y_n}$ with probability
        \[\rho_C(\tup{y_1, \ldots y_n} \vert \tup{z_1, \ldots z_n}, N)\]
        which is equal to  
        \[
            \rho_N(\tup{y_1, \ldots y_n}~\vert~\tup{z_1, \ldots z_n})
        \]
 \end{itemize}

\noindent{\bf Expected Reward:} Given dataset $\cD$, 
let us  
assume that a synthesis
procedure predicts $p$ to be the underlying hidden program. The {\bf Expected Reward} is the probability
dataset $\cD$ was generated by $p$ as the hidden underlying program.
Formally, given $\cD = (\vec{x},
\vec{y})$, 

\[\cE(p \vert \vec{x}, \vec{y}) = \sum\limits_{p_h \in G} \mathds{1}(p
\approx_{\vec{x}} p_h) 
\rho(p_h \vert \vec{x}, \vec{y})
\]
\[
    \propto
    \sum\limits_{N \in \cN} \sum\limits_{p_h \in G} \mathds{1}(p \approx_{\vec{x}} p_h) 
    \rho_p(p_h \vert \vec{x}) \rho_C(\vec{y} \vert p_h[\vec{x}], N) \rho_\cN(N)
\]
\[
    = \sum\limits_{N \in \cN} \sum\limits_{p_h \in G} \mathds{1}(p \approx_{\vec{x}} p_h) 
     \rho_p(p_h) \rho_N(\vec{y} \vert p_h[\vec{x}], N) \rho_\cN(N)
\]
\[
    = \rho_p(G_{\vec{x}, p[\vec{x}]})
(
\sum\limits_{N \in \cN}
    \rho_C(\vec{y}~\vert~p[\vec{x}], N)\rho_\cN(N)
)
\]
Let $\cE_L(p[\vec{x}]~\vert~\vec{x}, \vec{y}) = - \log
\cE(p[\vec{x}]~\vert~\vec{x}, \vec{y})$
\[
    = - \log \rho_p(G_{\vec{x},
    p[\vec{x}]}) + (- \log (\sum\limits_{N \in \cN} \rho_\cN(N)
    \rho_C(\vec{y}~\vert~p[\vec{x}])))
\]
%

Therefore, given a set of programs $G$, dataset $\cD$,
prior distribution $\rho_p$,  
and no other information
about the hidden program $p_h$ or the hidden Noise Source, 
the synthesis algorithm will always return the 
program which maximizes the expected reward 
if the Loss
Function $\cL(\vec{c}, \vec{y})$ is equal to 
\[ - [\log
\sum\limits_{N \in \cN} \rho_C(\vec{y}~\vert~\vec{c}, N) \rho_\cN(N)] + C
   = -\log E[\rho_N(\vec{y}~\vert~\vec{c})] + C
\]
for any constant $C$.
Hence, the optimal Loss function, in presence of imperfect information of the
Noise Source, is  the negative $\log $ of the expected probability of output
$\vec{c}$ being corrupted to noisy output $\vec{y}$.

\section{Convergence}
\label{sec:convergence}
Within this section, we explore the conditions under which the synthesis
algorithm will have convergence guarantees.

Given a synthesis setting (i.e., a finite set of programs $G$, an Input Source
$\rho_i$, a Noise Source $\rho_N$, a prior probability $\rho_p$, and a Loss
Function $\cL$), convergence property allows us to guarantee synthesis
algorithm's output will be the correct underlying program with high probability
if we are providing the algorithm with a large enough dataset.

Given a Noise Source $N$, a Loss Function $\cL$, 
prior probability $\rho_p$,
a positive probability hidden program $p_h$ (i.e., $\rho_p(p_h) > 0$), 
and a dataset size $n$,  
let $Pr[p^n_s \approx p_h~\vert~p_h, N]$ denote the probability on synthesizing
program equivalent $p_h$ on a random data set $(\vec{x}, \vec{y})$ of size $n$, constructed
assuming $p_h$ as the hidden program.
Formally, $Pr[p^n_s \approx p_h~\vert~p_h, N]$ is the probability of
the following process returning true.
\begin{itemize}[leftmargin=*]
    \item Sample $n$ inputs $\vec{x}$ with probability $\rho_i(\vec{x}~\vert~n)$.
    \item $\vec{z}_h = p_h[\vec{x}]$, $\vec{y}$ is sampled from the distribution
        $\rho_N(\vec{y}~\vert~\vec{z}_h)$.
    \item Return true, if for all programs $p \in G^C_{p_h}$:
        \begin{equation*}
            \begin{split}
            \cL(p_h[\vec{x}],\vec{y}) - \log \rho_p(G_{\vec{x}, p_h[\vec{x}]}) < 
                \cL(p[\vec{x}],\vec{y}) - \log \rho_p(G_{\vec{x}, p[\vec{x}]}) 
            \end{split}
        \end{equation*}
        \noindent
        and there exists a program $p^n_s \in G_{p_h}$, such that for all $p \in
        G^C_{p_h}$, where $p \approx_{\vec{x}}
        p_h$,
        $
            C(p^n_s) < C(p)
        $.
\end{itemize}
Note that if the procedure returns true then the following is true:
        \[
            p_h[\vec{x}] = \argmin\limits_{\vec{z} \in G[\vec{x}]} \cL(\vec{z}, \vec{y}) -
            \log \rho_p(G_{\vec{x}, \vec{z}}),~~
            p^n_s = \argmin\limits_{p \in G_{\vec{x}, p_h[\vec{x}]}} C(p)
        \]
i.e., $p^n_s \approx p_h$ is synthesized.

$Pr[p^n_s \approx p_h~\vert~p_h, N]$ measures the probability that a 
program equivalent to the hidden program is synthesized, given a random noisy dataset
using $p_h$ as the underlying program.

\begin{definition}{\bf Convergence:}
Given a finite set of programs $G$, a Loss Function $\cL$,
    a Noise Source $N$, an Input Source $\rho_i$, and a probability distribution
    over programs $\rho_p$, 
    the synthesis will converge, 
    if for all $\delta > 0$, there exists a natural number $k$, such that for
    all positive probability hidden program $p_h$ and $n \geq k$:
    \[
        Pr[p^n_s \approx p_h~\vert~p_h, N] \geq (1- \delta)
    \]
    i.e., for all probability tolerance $\delta > 0$, we can find a minimum
    dataset size $k$, such that for all hidden programs and dataset sizes $\geq
    k$, the
    probability the synthesized program will be equivalent to the hidden program
    is greater that $(1 - \delta)$.
\end{definition}

If the convergence property is true for a 
finite set of programs $G$, a Loss Function $\cL$,
    a Noise Source $N$, an Input Source $\rho_i$, and a probability distribution
    over programs $\rho_p$, 
then for all $\delta > 0$,
there exists a natural number $k$, such that for
all positive probability programs $p_h$ (i.e., there is positive probability
that $p_h$ will be sampled), 
for any dataset of size greater than $k$, with probability $(1- \delta)$, the
algorithm will
synthesize a program equivalent to $p_h$.

\subsection{Differentiating Input Distributions}
Even in absence of noise, the Input Source may 
hinder the ability of the synthesis algorithm to converge to the hidden program.
For example, consider an Input Source which only generates vectors $\vec{x}$,
such that, there exist two programs $p_1$ and $p_2$ which have the same outputs
on input $\vec{x}$ (i.e., $p_1[\vec{x}] = p_2[\vec{x}]$).
For such an Input Source, the synthesis algorithm, even in the absence of noise,
cannot differentiate between datasets produced assuming $p_1$ is the underlying
program and datasets produced assuming $p_2$ is the underlying program.
This issue comes up in traditional Noise-Free programming-by-example synthesis
as well. Noise-Free synthesis frameworks assume that the Input Source will
eventually provide input-output examples to differentiate the underlying program
from all other possible candidate programs to guarantee convergence.
We take a similar approach and constrain the Input Source to provide convergence
guarantees.

Let $d$ be some distance metric which measures distance
between two equally sized output vectors. 
For any underlying program and a probability tolerance, increasing the dataset size
should eventually allow us to differentiate between this program and any other
program in $G$.

\begin{definition}{\bf Differentiating Input Source:}
    Given a set of programs $G$ and a distance metric $d$, an Input Source
    $\rho_i$ is  
\\
    {\bf differentiating}, if for a
    large enough dataset size, the distribution will return an input
    dataset which will differentiate any two programs within 
    $G$, with a high probability.
\end{definition}

An Input Source is {\bf differentiating} if 
for all $\delta > 0$ and $\epsilon > 0$, 
there exists a minimum dataset size $k$, 
such that for dataset sizes $n \geq k$
and all programs $p_h \in G$,
the following process returns
true with probability greater than $(1 - \delta)$:
\begin{itemize}[leftmargin=*]
    \item Sample $\vec{x}$ of size $n$ from the distribution $\rho_i(\vec{x})$. 
    \item Return true if $\forall p \in \tilde{G}_{p_h}$, $d(p[\vec{x}],
        p_h[\vec{x}]) \geq \epsilon$.
\end{itemize}

Formally, given a set of programs $G$ and a distance metric $d$, an Input Distribution
    $\rho_i$ is  {\bf differentiating}, if 
for all $\delta > 0$, for all distance $\epsilon > 0$, 
there exists a natural number $k$, such that for all natural numbers $n \geq k$,
and for all programs $p_h \in G$, the following statement is true:
\[
    \int\limits_{\vec{x} \in X^n} \mathds{1}(
    \forall p \in G^C_{p_h}.~d(p_h[\vec{x}], p[\vec{x}]) 
    > \epsilon)\rho_i(d \vec{x}) \geq (1 - \delta)
\]
i.e., When sampling input vectors $\vec{x}$ of length $n$, with probability 
greater than $(1 - \delta)$, the distance
between $p_h[\vec{x}]$ (output of program $p_h$ on input $\vec{x}$) and
$p[\vec{x}]$, for all programs $p$ not equivalent to $p_h$, 
is greater than $\epsilon$.

Having a differentiating Input Source ensures that as we increase the size of
the dataset, a random dataset will contain inputs which will allow us to
differentiate a program with other programs with high probability.

\subsection{Differentiating Noise Sources}
Even if we are given an Input Source which allows us to differentiate between
the hidden underlying program and other programs in $G$ in the absence of
noise, the Noise Source can, in theory, make convergence impossible.
For example, consider a Noise Source which corrupts all outputs $z$ to a single
noisy output value $y^*$.
A dataset corrupted by this Noise Source contains no information about the
underlying correct outputs. A synthesis process cannot extract any
information about the hidden underlying program from such a dataset.
Therefore, no
synthesis algorithm will be able to differentiate between different programs 
in the input program space.
Restrictions have to placed on the types of Noise Sources a synthesis algorithm
can handle in order to provide convergence guarantees.

\begin{definition}{\bf Differentiating Noise Source:}
    Given a finite set of programs $G$, a distance metric $d$, and a Loss function $\cL$,  
    a Noise Source $\rho_N$ is
    {\bf differentiating}, if
for all $\delta > 0$ and $\gamma > 0$, there exists a natural number
$k$, and $\epsilon \in \mathbb{R}^+$, such that for all $n \geq k$, for all
vectors $\vec{z}_h$ of length $n$, the following is true:
\begin{equation*}
    \begin{split}
        \rho_N[\forall \vec{z} \in Z^n.~\cL(\vec{z}, \vec{y}) - \cL(\vec{z}_h, \vec{y})
        \leq
    \gamma \\ \implies d(\vec{z}, \vec{z}_h) < \epsilon~\vert~z_h] \geq (1-\delta)
    \end{split}
\end{equation*}
\end{definition}
\Comment{
\noindent    The above condition is equivalent to
\begin{equation*}
    \begin{split}
    \rho_N\Big[
        \forall \vec{z} \in Z^n.~d(\vec{z}, \vec{z}_h) \geq \epsilon \implies \\
        \cL(\vec{z}, \vec{y}) - \cL(\vec{z}_h, \vec{y}) > \gamma
~\Big\vert~\vec{z}_h \Big] \geq (1 - \delta) 
    \end{split}
    \end{equation*}
}

If we are using the optimal Loss Function for the given Noise
Source (Subsection~\ref{subsec:optimalloss}), the above condition reduces to:
\[
    \rho_N\Big[
        \forall \vec{z} \in Z^n.~
\frac{\rho_N(\vec{y}~\vert~\vec{z}_h)}
    {\rho_N(\vec{y}~\vert~\vec{z})} \leq \gamma
    \implies
        d(\vec{z}, \vec{z}_h) < \epsilon
        ~\Big\vert~\vec{z}_h \Big] \geq (1 - \delta) 
    \]

\noindent{\bf Convergence:}\\
Convergence is guaranteed in presence of a differentiating Input
Source and a differentiating Noise Source.

\begin{theorem}
    Given a finite set of programs $G$, 
    distribution $\rho_p$ from which the programs are sampled, 
    a Loss Function $\cL$, 
    a {\bf differentiating Input Source} $\rho_i$, and  
    a {\bf differentiating Noise Source} $\rho_N$, then our synthesis
    algorithm will guarantee convergence.
\end{theorem}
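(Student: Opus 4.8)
The plan is to reduce the ``returns true'' test in the definition of $Pr[p^n_s \approx p_h ~\vert~ p_h, N]$ to the two differentiating hypotheses, after first absorbing the prior term into a single constant. Since $G$ is finite, set $\rho_{\min} = \min\{\rho_p(p) : p\in G,\ \rho_p(p)>0\} > 0$ and $\gamma_0 = 1-\log\rho_{\min} > 0$. For any positive-probability hidden program $p_h$ and any wrong program $p \in G^C_{p_h}$, writing $\vec z_h = p_h[\vec x]$ and $\vec z = p[\vec x]$, we have $\rho_p(G_{\vec x,\vec z_h}) \ge \rho_p(p_h) \ge \rho_{\min}$ and $\rho_p(G_{\vec x,\vec z}) \le 1$, so $\log\rho_p(G_{\vec x,\vec z}) - \log\rho_p(G_{\vec x,\vec z_h}) \le -\log\rho_{\min} < \gamma_0$. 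Hence, to obtain the penalized-loss inequality $\cL(\vec z_h,\vec y) - \log\rho_p(G_{\vec x,\vec z_h}) < \cL(\vec z,\vec y) - \log\rho_p(G_{\vec x,\vec z})$, it is enough to force $\cL(\vec z,\vec y) - \cL(\vec z_h,\vec y) > \gamma_0$ for every $p \in G^C_{p_h}$.

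First I would fix $\delta > 0$ and invoke the differentiating Noise Source with tolerance $\gamma_0$ and confidence $\delta/2$, obtaining a threshold $k_N$ and a radius $\epsilon_N > 0$ such that for every $n \ge k_N$ and every $\vec z_h \in Z^n$, with $\rho_N(\cdot~\vert~\vec z_h)$-probability at least $1-\delta/2$ every output vector $\vec z$ with $d(\vec z,\vec z_h)\ge\epsilon_N$ satisfies $\cL(\vec z,\vec y) - \cL(\vec z_h,\vec y) > \gamma_0$ (the contrapositive of the stated implication). Then I would hand exactly this radius $\epsilon_N$ to the differentiating Input Source, with confidence $\delta/2$, obtaining a threshold $k_I$ such that for all $n \ge k_I$ and all $p_h \in G$, with $\rho_i$-probability at least $1-\delta/2$ the sampled $\vec x$ satisfies $d(p_h[\vec x],p[\vec x]) > \epsilon_N$ for every $p \in G^C_{p_h}$. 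Set $k = \max(k_N,k_I)$; note $k$ depends only on $\delta$, not on $p_h$.

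Next, for $n \ge k$ and any positive-probability $p_h$, let $E_I$ be the input event above and $E_N$ the noise event for $\vec z_h = p_h[\vec x]$. On $E_I$, since $d$ is a metric no $p\in G^C_{p_h}$ can have $p\approx_{\vec x}p_h$, so $G_{\vec x,p_h[\vec x]}\subseteq G_{p_h}$; this makes the complexity clause of the test vacuously satisfied, with witness $p^n_s = \argmin_{p\in G_{\vec x,p_h[\vec x]}}C(p)$, which exists by finiteness and lies in $G_{p_h}$. On $E_I\cap E_N$, every $p\in G^C_{p_h}$ has $d(p[\vec x],\vec z_h)\ge\epsilon_N$, hence $\cL(p[\vec x],\vec y) - \cL(\vec z_h,\vec y) > \gamma_0 > \log\rho_p(G_{\vec x,p[\vec x]}) - \log\rho_p(G_{\vec x,\vec z_h})$ by the first paragraph, which is precisely the penalized-loss inequality; so the procedure returns true on $E_I\cap E_N$. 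Finally, because the Noise Source bound holds for \emph{every} $\vec z_h$ it survives averaging over $\vec x$, so a union bound gives $Pr[\neg E_I \cup \neg E_N] \le \delta/2 + \delta/2 = \delta$, and therefore $Pr[p^n_s\approx p_h ~\vert~ p_h,N]\ge 1-\delta$, which is exactly convergence.

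The step I expect to be the real crux is getting the order of the parameter choices right: $\gamma_0$ has to be pinned down from the finiteness of $G$ \emph{before} calling the Noise Source, the Noise Source's radius $\epsilon_N$ has to be the one fed into the Input Source (so that the $d$-separation the Input Source provides is exactly what the Noise Source needs to conclude a loss gap), and one must verify that the Input Source's guarantee genuinely collapses the complexity tie-break to a vacuous condition rather than leaving a residual obligation on the complexity measure $C$.
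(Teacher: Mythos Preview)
Your proposal is correct and follows essentially the same route as the paper's proof: bound the log-prior gap by a constant using finiteness of $G$, invoke the differentiating Noise Source to obtain a radius $\epsilon$, feed that $\epsilon$ to the differentiating Input Source, and combine the two high-probability events via a union/product bound with $k=\max(k_N,k_I)$. Your write-up is in fact tidier in two respects: you explicitly dispose of the complexity tie-break clause (which the paper silently drops from its lower bound), and your bookkeeping of $\gamma_0$ is cleaner than the paper's somewhat muddled definition of $\gamma$.
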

\noindent We present the proof of this theorem in the
Appendix~\ref{subsec:convergence-A}.

\Comment{
\begin{proof}
    Let $\delta > 0$, $\delta_i > 0$, and $\delta_N > 0$ be two real numbers such that
    $\delta = \delta_i\delta_N$. 
    For all positive probability programs $p_h \in G$:
    \begin{equation*}
        \begin{split}
        Pr[p^n_s \approx p_h~\vert~p_h, N]  \geq 
    \int\limits_{\vec{x} \in X^n, 
    \vec{y} \in Y^n}   
            \mathds{1}\big(\forall p \in \tilde{G}_{p_h}.\\
            \cL(p[\vec{x}], \vec{y}) - \log \pi( G_{\vec{x}, p[\vec{x}]}) >
            \cL(p_h[\vec{x}], \vec{y})  - \log \pi (G_{\vec{x},
            p_h[\vec{x}]})\big)\\ 
            \rho_i(d\vec{x})\rho_N(d\vec{y}~\vert~p_h[\vec{x}])
        \end{split}
    \end{equation*}
    Let $\gamma = \argmin_{p \in G} \rho_p(G_p)$. Note that 
    \[\log \pi( G_{\vec{x}, p[\vec{x}]})   - \log \pi (G_{\vec{x},
            p_h[\vec{x}]}) > \gamma\]
     \begin{equation*}
        \begin{split}
        Pr[p^n_s \approx p_h~\vert~p_h, N]  \geq 
    \int\limits_{\vec{x} \in X^n, 
    \vec{y} \in Y^n}   
            \mathds{1}\big(\forall p \in \tilde{G}_{p_h}.\\
            \cL(p[\vec{x}], \vec{y}) - 
            \cL(p_h[\vec{x}], \vec{y})  > \gamma \big)\\ 
            \rho_i(d\vec{x})\rho_N(d\vec{y}~\vert~p_h[\vec{x}])
        \end{split}
    \end{equation*}
      \begin{equation*}
        \begin{split}
        Pr[p^n_s \approx p_h~\vert~p_h, N]  \geq 
    \int\limits_{\vec{x} \in X^n, 
    \vec{y} \in Y^n}   
            \mathds{1}\big(\forall p \in \tilde{G}_{p_h}.\\
            \cL(p[\vec{x}], \vec{y}) - 
            \cL(p_h[\vec{x}], \vec{y})  > \gamma \big)\\ 
        \rho_N(d\vec{y}~\vert~p_h[\vec{x}])
        \mathds{1}(\forall p \in G. d(p_h[\vec{x}], p[\vec{x}]) \geq \epsilon)~\rho_i(d\vec{x})
        \end{split}
    \end{equation*}
       \begin{equation*}
        \begin{split}
        Pr[p^n_s \approx p_h~\vert~p_h, N]  \geq 
    \int\limits_{\vec{x} \in X^n, 
    \vec{y} \in Y^n}   
            \mathds{1}\big(\forall z \in Z^n. 
            d(p_h[\vec{x}], \vec{z}) \geq \epsilon\\ \implies  
            \cL(\vec{z}, \vec{y}) - 
            \cL(p_h[\vec{x}], \vec{y})  > \gamma\big)\\ 
        \rho_N(d\vec{y}~\vert~p_h[\vec{x}])
        \mathds{1}(\forall p \in G. d(p_h[\vec{x}], p[\vec{x}]) \geq \epsilon)~\rho_i(d\vec{x})
        \end{split}
    \end{equation*}
       If $n \geq n_N$, $\epsilon \geq \epsilon_N$, $\delta_N > 0$,
       the following statement is true, for any $z_h$ and $\gamma = \log \pi
       (\tilde{G}_{p_h}) - \log \pi(G_{p_h})$:
\begin{equation*}
    \begin{split}
    \rho_N\Big[
        \forall \vec{z} \in Z^n.~d(\vec{z}, \vec{z}_h) \geq \epsilon \implies \\
        \cL(\vec{z}, \vec{y}) - \cL(\vec{z}_h, \vec{y}) > \gamma
~\Big\vert~\vec{z}_h \Big] \geq (1 - \delta) 
    \end{split}
    \end{equation*}
       And for $n \geq n_i$, and $\delta_i > 0$, the following is true:
\[
    \int\limits_{\vec{x} \in X^n} \mathds{1}(
    \forall p \in \tilde{G}_{p_h}.~d(p_h[\vec{x}], p[\vec{x}]) 
    > \epsilon)\rho_i(d \vec{x}) \geq (1 - \delta)
\]
       Then for $n \geq \max(n_N, n_i)$, the following is true:
       \[
        Pr[p^n_s \approx p_h~\vert~p_h, N] 
            \geq (1 - \delta_N)(1 - \delta_i) \geq (1 - \delta)
        \]
\end{proof}

}

\section{Case Studies}
\label{sec:casestudies}
In the previous section,
we proved how having a differentiating Input
Source and a differentiating Noise Source are sufficient
for the synthesis algorithm to have convergence
guarantees. Within this section, we 
will prove that the Noise Sources and Loss Functions studied in~\cite{handa2020inductive} 
fulfill these requirements, and therefore, allow the synthesis algorithm to
provide convergence guarantees.
We will then show how breaking these requirements makes convergence
impossible. We will also show the importance of picking an appropriate distance
metric $d$ which connects the Input Source to the Noise Source.

\subsection{Differentiating Input Distributions}
In the special case, where each element of the input vector $\vec{x}$ are i.i.d.,
the Differentiating Input Distribution condition can be simplified.

For any vector $\vec{x} = \tup{x_1, x_2, \ldots x_n}$, 
\[\rho_i(\vec{x}) =
\prod\limits_{j = 1}^n \rho_I(x_j)\]
Given any two equal length vectors $\vec{z} = \tup{z_1, \ldots z_n}$
and $\vec{z}_h = \tup{z'_1, \ldots z'_n}$,
let $d_i$ be a distance metric such that 
\[d(\vec{z}, \vec{z}_h) =
\sum\limits_{j=1}^n d_i(z_j, z'_j)\]
We say the input distribution $\rho_I$ is {\bf differentiating}, if for all
program $p_h \in G$, and $p \in G^C_{p_h}$ there exists
an input $x_p$, such that $\rho_I(x_p) > 0$, and
\[
    d_i(p(x_p), p_h(x_p)) > 0
\]

\begin{theorem}
    If $\rho_I$ is differentiating then $\rho_i$ is differentiating. 
\end{theorem}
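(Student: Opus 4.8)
The plan is to exploit two facts: the program set $G$ is finite, and under the i.i.d.\ hypothesis the number of occurrences of any fixed input in a length-$n$ sample is binomially distributed and hence concentrates. Together these reduce the general differentiating condition to a union bound over finitely many program pairs, each controlled by a binomial tail bound.

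First I would fix $\delta > 0$ and $\epsilon > 0$ and fix a hidden program $p_h \in G$. For each competitor $p \in G^C_{p_h}$, the differentiating hypothesis on $\rho_I$ hands us an input $x_p$ with $\alpha_p := \rho_I(x_p) > 0$ and $\beta_p := d_i(p(x_p), p_h(x_p)) > 0$ (the hypothesis is phrased in terms of point masses, which is what lets us talk about $\rho_I(x_p) > 0$). The key observation is that since $d_i \geq 0$,
\[
    d(p_h[\vec{x}], p[\vec{x}]) = \sum_{j=1}^n d_i(p_h(x_j), p(x_j)) \geq \beta_p \cdot N_p(\vec{x}),
\]
where $N_p(\vec{x}) = |\{\, j : x_j = x_p \,\}|$ counts how often the distinguishing input appears in $\vec{x}$. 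Hence, with $t_p := \lfloor \epsilon/\beta_p \rfloor + 1$ (which does not depend on $n$), we get $d(p_h[\vec{x}], p[\vec{x}]) \geq \beta_p t_p > \epsilon$ whenever $N_p(\vec{x}) \geq t_p$.

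Next, under the i.i.d.\ assumption $\rho_i(\vec{x}) = \prod_{j} \rho_I(x_j)$, the count $N_p(\vec{x})$ is $\mathrm{Binomial}(n, \alpha_p)$. Since $\alpha_p > 0$ and $t_p$ is fixed, $\Pr[N_p(\vec{x}) < t_p] \to 0$ as $n \to \infty$ (e.g.\ by a Chernoff bound: once $n\alpha_p/2 \geq t_p$ the probability is at most $e^{-n\alpha_p/8}$). So there is $k_{p_h,p} \in \mathds{N}$ with
\[
    \Pr_{\vec{x} \sim \rho_i}\big[\, d(p_h[\vec{x}], p[\vec{x}]) \leq \epsilon \,\big] \;\leq\; \Pr[N_p(\vec{x}) < t_p] \;\leq\; \frac{\delta}{|G|} \qquad \text{for all } n \geq k_{p_h,p}.
\]
Finally I would take $k := \max_{p_h \in G} \max_{p \in G^C_{p_h}} k_{p_h,p}$, which is a genuine natural number because $G$ is finite (and the condition is vacuous when $G^C_{p_h} = \emptyset$). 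For every $n \geq k$ and every $p_h \in G$, a union bound over the at most $|G|$ programs in $G^C_{p_h}$ gives
\[
    \Pr_{\vec{x} \sim \rho_i}\big[\, \exists p \in G^C_{p_h}.\ d(p_h[\vec{x}], p[\vec{x}]) \leq \epsilon \,\big] \;\leq\; |G|\cdot\frac{\delta}{|G|} \;=\; \delta,
\]
i.e.\ $\int_{\vec{x}\in X^n}\mathds{1}(\forall p \in G^C_{p_h}.\ d(p_h[\vec{x}],p[\vec{x}]) > \epsilon)\,\rho_i(d\vec{x}) \geq 1-\delta$, which is exactly the definition of $\rho_i$ being differentiating. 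The main obstacle is getting the threshold $k$ to be uniform over \emph{all} hidden programs $p_h$ and \emph{all} competitors $p$, and to work for arbitrarily large $\epsilon$; both are handled by finiteness of $G$ (so only finitely many thresholds are maximized over) and by demanding that the distinguishing input $x_p$ recur $t_p = \Theta(\epsilon/\beta_p)$ times rather than merely once.
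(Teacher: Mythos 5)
Your proof is correct and follows essentially the same route as the paper's: pick the distinguishing input $x_p$ for each competitor, use the additive form of $d$ to reduce the distance condition to counting occurrences of $x_p$, apply a binomial tail bound, and use finiteness of $G$ to make the threshold uniform and to combine the competitors (the paper via a product bound $(1-\delta/|G|)^{|G|} \geq 1-\delta$, you via an equivalent union bound). Your write-up is in fact somewhat more careful about the dependence of $x_p$, $t_p$, and the threshold on the pair $(p_h, p)$ than the paper's own argument.
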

\noindent We present the proof of this theorem in the Appendix (
Theorem~\ref{thm:differentiatinginput-A}).

\subsection{Differentiating Noise Distributions and optimal Loss Function}
\noindent{\bf Case 1:} We first consider the case where the noise distribution
never introduces any corruptions in the correct output.
Formally, for all $n \in \mathbb{N}$, for all $\vec{z} \in Z^n$, $
    \rho_N(\vec{z}~\vert~\vec{z}) = 1$. 
    Consider the Loss Function $\cL_{0/\infty}$ and
    the counting distance metric $d_c$:    
\begin{definition}\label{def:loss0infty}
{\bf $0/\infty$ Loss Function:} The $0/\infty$ Loss Function \\ 
    $\cL_{0/\infty}(\vec{z}, \vec{y})$ is 0 if $\vec{z} = \vec{y}$
    and $\infty$ otherwise.
\end{definition}

\begin{definition}
    \label{def:countingdistance}
    {\bf Counting Distance} The counting distance metric $d_c$ 
    counts the number of positions two equal length
    vectors disagree on, i.e.,
    \[
        d_c(\tup{z_1, \ldots z_n}, \tup{z'_1, \ldots z'_n}) =
        \sum\limits_{i=1}^n \mathds{1}(z_i \neq z'_i)
    \]
\end{definition}

    Note that for all $\gamma > 0$, for all $n \geq 1$, for all $\epsilon \geq 1$,
and for all $\vec{z}_h \in Z^n$,
    the
    following is true:
\begin{equation*}
    \begin{split}
    \int\limits_{\vec{y} \in Y^n}\mathds{1}(\forall~z\in Z^n.
    \cL(\vec{z}, \vec{y}) - \cL(\vec{z}_h, \vec{y}) \leq \gamma \implies
   \\ d(\vec{z}, \vec{z}_h) < \epsilon)\rho_N(\vec{y}~\vert~\vec{z}_h)
   \\
 = \mathds{1}(\forall~z\in Z^n.
        \cL(\vec{z}, \vec{z}_h) - \cL(\vec{z}_h, \vec{z}_h) \leq \gamma \implies
   d(\vec{z}, \vec{z}_h) < \epsilon)
   = 1
    \end{split}
    \end{equation*}
Therefore, in this case $\rho_N$ is {differentiating}.

\noindent{\bf Case 2:}
Consider the following $n$-Substitution Noise Source and $n$-Substitution Loss
Function studied in previous work~\cite{handa2020inductive}.

\begin{definition}{\bf $n$-Substitution Noise Source:}\\ The $n$-Substitution
    Noise Source $N_{nS}$, given an output vector $\tup{z_1, \ldots z_n}$
    corrupts each string $z_i$ independently. For each string
    $z = c_1 \cdots c_k$, it replaces character $c_i$ with a random character
    not equal to $c_i$ with probability $\delta_{nS}$.
\end{definition}

\begin{definition}
{\bf $n$-Substitution Loss Function:} \\ The $n$-Substitution Loss Function 
    $\cL_{nS}(\vec{z}, \vec{y})$ uses
    per-example Loss Function $L_{nS}$ that captures a  weighted sum of
    positions where the noisy output string agrees and disagrees with the output
    from the synthesized program.  If the synthesized
program produces an output that is longer or shorter than the output in the 
noisy data set, the Loss Function is $\infty$:
\[
    \cL_{nS}(\tup{z_1, \ldots z_n}, 
    \tup{y_1, \ldots y_n}) = \sum\limits_{i=1}^n L_{nS}(z_i, y_i), \mbox{ where }
\]
\begin{equation*}
    L_{nS}(z, y) = \begin{cases}
        \infty & \vert z \vert \neq \vert y \vert \\
        \sum\limits_{i=1}^{\vert z \vert} -\log \delta_i \mbox{ if } z[i]
        \neq y[i] \mbox{ else } - \log (1 - \delta_i)
    & \vert z \vert = \vert y \vert 
    \end{cases}
\end{equation*}
\end{definition}
Note that this Loss Function is a linear transformation of the $n$-Substitution Loss
Function proposed in~\cite{handa2020inductive}.

\begin{definition}{\bf Length Distance Metric}
    Given two equal length vectors of strings, 
the length distance metric $d_l$ 
    counts the number of positions, which have unequal length strings in the two
    vectors, i.e.,
    \[
        d_l(\tup{z_1, \ldots z_n}, \tup{z'_1, \ldots z'_n}) =
        \sum\limits_{i=1}^n \mathds{1}(|z_i| \neq |z'_i|)
    \]
\end{definition}

\begin{theorem}
    $n$-Substitution Loss Function $\cL_{nS}$ is the optimal Loss Function for
    the $n$-Substitution Noise Source $N_{nS}$. 
    Also, given Length Distance Metric $d_l$
    and the
    $n$-Substitution Loss Function $\cL_{nS}$, $n$-Substitution Noise Source
    $N_{nS}$ is differentiating.
\end{theorem}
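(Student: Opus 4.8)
The statement has two parts --- that $\cL_{nS}$ is optimal for $N_{nS}$, and that $N_{nS}$ is differentiating with respect to $d_l$ and $\cL_{nS}$ --- and I would prove them separately, using the closed form of Section~\ref{subsec:optimalloss} for the first and the $\infty$-branch of $\cL_{nS}$ for the second.

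For optimality, Section~\ref{subsec:optimalloss} tells us it suffices to show that $\cL_{nS}(\vec{z},\vec{y})$ equals $-\log\rho_{N_{nS}}(\vec{y}\mid\vec{z})$ up to an additive constant (more precisely, up to a term independent of the candidate output vector $\vec{z}$, since that is all the $\argmin$ in Algorithm~\ref{alg:regsyn} sees). So the first step is to write $\rho_{N_{nS}}(\vec{y}\mid\vec{z})$ in closed form: since $N_{nS}$ corrupts the $n$ strings independently and, within a string, corrupts each character position independently, $\rho_{N_{nS}}(\vec{y}\mid\vec{z})=\prod_{i=1}^{n}\rho(y_i\mid z_i)$, where $\rho(y_i\mid z_i)=0$ whenever $|y_i|\neq|z_i|$ (a substitution never changes a string's length), and otherwise $\rho(y_i\mid z_i)=\prod_{j=1}^{|z_i|}p_j$ with $p_j=1-\delta_j$ if $z_i[j]=y_i[j]$ and $p_j=\delta_j$ (the probability of the observed substitution) otherwise. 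Taking $-\log$ of this product and comparing term by term with the definition of $L_{nS}$ --- the case $|z|\neq|y|$ corresponding to $\rho=0$, hence $-\log\rho=\infty$ --- yields $\cL_{nS}(\vec{z},\vec{y})=-\log\rho_{N_{nS}}(\vec{y}\mid\vec{z})+C$, so optimality follows immediately. The bookkeeping to watch is the relation between the per-character parameter $\delta_j$ appearing in $L_{nS}$ and the Noise Source parameter $\delta_{nS}$ together with the alphabet over which ``a random character not equal to $c_i$'' ranges, so that $C$ is genuinely independent of $\vec{z}$ (equivalently, of the number of corrupted positions).

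For the differentiating claim, the point is that $d_l$ and the $\infty$-branch of $\cL_{nS}$ both detect exactly the event ``some string changed length'', which $N_{nS}$ can never cause. Concretely, any $\vec{y}$ in the support of $\rho_{N_{nS}}(\cdot\mid\vec{z}_h)$ satisfies $|y_i|=|z_{h,i}|$ for all $i$, so for every $\vec{z}$ with $d_l(\vec{z},\vec{z}_h)\geq 1$ there is a position $j$ with $|z_j|\neq|y_j|$, forcing $L_{nS}(z_j,y_j)=\infty$ and hence $\cL_{nS}(\vec{z},\vec{y})=\infty$, whereas $\cL_{nS}(\vec{z}_h,\vec{y})$ is finite (every position lands in the finite branch, assuming $\delta_j\in(0,1)$). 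Therefore $\cL_{nS}(\vec{z},\vec{y})-\cL_{nS}(\vec{z}_h,\vec{y})=\infty>\gamma$ for every such $\vec{z}$, so the implication ``$\cL_{nS}(\vec{z},\vec{y})-\cL_{nS}(\vec{z}_h,\vec{y})\leq\gamma$ implies $d_l(\vec{z},\vec{z}_h)<\epsilon$'' holds for all $\vec{z}\in Z^n$ the moment we set $\epsilon=1$, and it holds for \emph{every} $\vec{y}$ in the support of $\rho_{N_{nS}}(\cdot\mid\vec{z}_h)$, i.e.\ with probability $1$. Hence $k=1$ and $\epsilon=1$ (independent of $\delta$ and $\gamma$) witness the Differentiating Noise Source condition; equivalently, since $\cL_{nS}$ is the optimal Loss Function by the first part, one can check the reduced form of the condition, noting that $\rho_{N_{nS}}(\vec{y}\mid\vec{z}_h)/\rho_{N_{nS}}(\vec{y}\mid\vec{z})=\infty$ whenever $d_l(\vec{z},\vec{z}_h)\geq 1$ because the denominator vanishes.

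The main obstacle is confined to the optimality half: verifying that the residual $C$ is provably independent of the candidate output $\vec{z}$ rather than carrying a term proportional to the number of corrupted character positions, which would break exact optimality over non-binary alphabets. The differentiating half is essentially immediate once one observes that $\cL_{nS}$ assigns $\infty$ to exactly the output configurations that $N_{nS}$ forbids.
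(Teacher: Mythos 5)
Your proposal is correct and follows essentially the same route as the paper's proof: optimality by writing $-\log\rho_{N_{nS}}$ as a sum of per-character terms and matching it against $L_{nS}$, and the differentiating property by observing that any length change forces $\cL_{nS}(\vec{z},\vec{y})=\infty$ on every $\vec{y}$ in the support of $\rho_{N_{nS}}(\cdot\mid\vec{z}_h)$, so the implication holds with probability~$1$ for $\epsilon=1$ and any $\gamma,\delta$. Your extra care about whether the additive residual $C$ is truly independent of the number of corrupted positions (via the alphabet-size factor hidden in ``a random character not equal to $c_i$'') is a legitimate bookkeeping point that the paper's proof silently elides.
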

\noindent We present the proof of this theorem in the Appendix (
Theorem~\ref{thm:differentiatingnsubs-A}).

Therefore, given a differentiating Input Source, the synthesis algorithm will
guarantee convergence in presence of $n$-Substitution Noise Source and
$n$-Substitution Loss Function.

\noindent{\bf Case 3:}
Consider the 1-Delete Loss Function $\cL_{1D}$ and the 1-Delete Noise Source
$N_{1D}$ studied in previous work~\cite{handa2020inductive}.

\begin{definition} {\bf 1-Delete Noise Source:} The 1-Delete Noise source
    $N_{1D}$ given an output vector $\tup{z_1, z_2, \ldots z_n}$ independently
    corrupts each output $z_i$ by deleting a random character, with probability
    $0 < \delta_{1D} < 1$. Formally,
    \[
        \rho_{N_{1D}}(\tup{y_1, \ldots y_n}~\vert~\tup{z_1, \ldots z_n}) =
        \prod_{i=1}^n\rho_{n_{1D}}(y_i~\vert~z_i)
    \]
    where
    \[
        \rho_{n_{1D}}(z~\vert~z) = 1 - \delta_{1D}
    \] \[
    \rho_{n_{1D}}(a\cdot b~\vert a \cdot x \cdot c) = \frac{1}{\mathsf{len}(a
    \cdot c \cdot b)}\delta_{1D}\]
where $c$ is a character.
\end{definition}

\begin{definition} {\bf 1-Delete Loss Function:} The 1-Delete Loss\\ Function 
    $\cL_{1D}(\tup{z_1, \ldots z_n}, \tup{y_1, \ldots y_n})$
    assigns loss 
    $-\log (1 - \delta_i)$ if the output $z_i$ from the synthesized program and the data set
    $y_i$ match exactly, $-\log \delta_i$ if a single
deletion enables the output from the synthesized program to match
    the output from the data set, and $\infty$ otherwise (for $0 < \delta_i <
    1$): 
\[
    \cL_{1D} (\tup{z_1, \ldots z_n}, \tup{y_1, \ldots y_n}) 
    = \sum\limits_{i=1}^n L_{1D}(z_i, y_i), \mbox{ where }
\]
\begin{equation*}
    L_{1D}(z, y) = \begin{cases}
        - \log (1 - \delta_i) & z = y\\
        - \log \delta_i & a \cdot x \cdot b = z \wedge a \cdot b = y \wedge \vert x \vert =
        1\\ 
        \infty & \text{ otherwise}
    \end{cases}
\end{equation*}
\end{definition}
Note that this Loss Function is a linear transformation of the 1-Delete Loss
Function proposed in~\cite{handa2020inductive}.

\begin{definition}
    {\bf DL-k Distance} The DL-k Distance metric $d_{DLk}$, given two vectors of
    strings, returns the count of string pairs with Damerau-Levenshtein distance 
    greater than equal to $k$ between them. Formally,
    \[
        d_{DL}(\tup{z_1, \ldots z_n}, \tup{z'_1, \ldots z'_n}) =
        \sum\limits_{i=1}^n \mathds{1}(L_{z_i, z'_i}(|z_i|, |z'_i|) \geq k)
    \]
where, $L_{a, b}(i, j)$ is the {\it Damerau-Levenshtein} metric~\cite{damerau1964technique}.
\end{definition}

\begin{theorem}
    1-Delete Loss Function $\cL_{1D}$ is the optimal Loss Function for
    the 1-Delete Noise Source $N_{1D}$. 
    Also, given DL-2 Distance Metric $d_{DL2}$~\cite{damerau1964technique}
    and the
    1-Delete Loss Function $\cL_{1D}$, 1-Delete Noise Source
    $N_{1D}$ is differentiating.
\end{theorem}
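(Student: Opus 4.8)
The statement has two parts, which I would prove in turn.

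\emph{Optimality of $\cL_{1D}$.} By the closed form established in Section~\ref{subsec:optimalloss}, a loss function is optimal for a noise source $N$ precisely when it agrees with $-\log\rho_N(\vec y\mid\vec z)$ up to an additive constant independent of the candidate $\vec z$. Since $\rho_{N_{1D}}$ factorizes over the $n$ examples we have $-\log\rho_{N_{1D}}(\vec y\mid\vec z)=\sum_i -\log\rho_{n_{1D}}(y_i\mid z_i)$, and $\cL_{1D}$ is piecewise, $\cL_{1D}(\vec z,\vec y)=\sum_i L_{1D}(z_i,y_i)$, so it suffices to verify the per-example identity $L_{1D}(z,y)=-\log\rho_{n_{1D}}(y\mid z)+c$ for a constant $c$. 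I would check the three cases in the definition of $\rho_{n_{1D}}$: if $z=y$ then $\rho_{n_{1D}}(y\mid z)=1-\delta_{1D}$, matching the $-\log(1-\delta_{1D})$ branch of $L_{1D}$; if $y$ is obtained from $z$ by deleting a single character then $\rho_{n_{1D}}(y\mid z)$ is proportional to $\delta_{1D}$, matching the $-\log\delta_{1D}$ branch; otherwise $\rho_{n_{1D}}(y\mid z)=0$, matching the $\infty$ branch. The delicate point is the middle case, where the $1/\mathsf{len}(\cdot)$ normalization (and a multiplicity, when $z$ repeats the deleted character) contributes a term depending only on $\vec y$; since $\vec y$ is held fixed as the candidate varies, I would fold those terms into $c$.

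\emph{$N_{1D}$ is differentiating for $d_{DL2}$.} Using optimality, I would invoke the simplified differentiating-noise-source condition recorded in Section~\ref{sec:convergence}: given $\delta,\gamma>0$ and a true vector $\vec z_h$, with $\vec y\sim\rho_{N_{1D}}(\cdot\mid\vec z_h)$, I must exhibit $k$ and $\epsilon$ so that for all $n\ge k$, with probability at least $1-\delta$, every $\vec z$ with $\rho_{N_{1D}}(\vec y\mid\vec z_h)/\rho_{N_{1D}}(\vec y\mid\vec z)\le e^{\gamma}$ has $d_{DL2}(\vec z,\vec z_h)<\epsilon$. The structural core is: (a) such a $\vec z$ has finite total loss, hence finite per-example loss, so on every index $i$ either $z_i=y_i$ or $z_i$ equals $y_i$ with exactly one character inserted; (b) $\vec y$ itself arises from $\vec z_h$ by deleting at most one character per index, so $z_{h,i}$ is likewise $y_i$ or a single insertion of $y_i$; hence (c) on indices where $\vec z_h$ was uncorrupted $d_{DL}(z_i,z_{h,i})\le 1$, while on corrupted indices $z_i$ and $z_{h,i}$ are two single insertions of the common string $y_i$, so a one-line triangle-inequality argument gives $d_{DL}(z_i,z_{h,i})\le 2$ always, with equality only when $z_i$ inserts its character far (in position) from where the deletion occurred. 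I would then combine a Chernoff bound on the number of corrupted indices (it is $\approx\delta_{1D}n$ with overwhelming probability) with a bound on the number of ``far-insertion'' indices coming from the likelihood-ratio budget, and read off $\epsilon$ as a function of $\gamma$, $\delta$, and $\delta_{1D}$.

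The main obstacle is exactly that last bound. A misplaced single insertion incurs no extra $\cL_{1D}$-loss by itself (a single insertion has the same per-example loss wherever it is placed), so the constant cannot be extracted from the loss alone; it has to be squeezed out of the likelihood-ratio constraint $\le e^{\gamma}$ together with the fine combinatorics of which single insertions are realizable relative to the deleted character and of how the $1/\mathsf{len}$ normalization and multiplicities enter. I expect this estimate to be the bulk of the work, while the first part and steps (a)--(c) are routine. (If the intended reading of $d_{DL2}$ were to count only pairs at Damerau--Levenshtein distance strictly above $2$, then steps (a)--(c) would already force $d_{DL2}(\vec z,\vec z_h)=0$ for every competitive $\vec z$, and any fixed $\epsilon$ would do with no concentration needed.)
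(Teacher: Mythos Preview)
Your optimality argument mirrors the paper's: both reduce to the per-example identity $L_{1D}(z,y)=-\log\rho_{n_{1D}}(y\mid z)$ up to a $\vec y$-dependent constant.

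For the differentiating part, the paper's proof is exactly your parenthetical, not your main route: it asserts in one line that $d_{DL2}(\vec z,\vec z_h)\ge 1$ forces $\cL_{1D}(\vec z,\vec y)=\infty$, takes $\epsilon=1$, and concludes with probability $1$---no Chernoff bound, no likelihood-ratio budget. Your elaborate programme cannot close, for the very reason you yourself flag: a single insertion anywhere in $y_i$ carries the same per-example loss $-\log\delta_{1D}$, so at every corrupted coordinate there is a competing $z_i$ with loss gap $0$ yet $d_{DL}(z_i,z_{h,i})=2$ (take $z_{h,i}=\text{abc}$, $y_i=\text{ac}$, $z_i=\text{acx}$). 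The likelihood-ratio constraint is then vacuous and the number of such coordinates is $\Theta(n)$ with high probability, so no fixed $\epsilon$ works for all $n$. The same example shows that the paper's one-line implication is literally false under its stated definition of $d_{DL2}$ (pairs at DL-distance $\ge 2$); the intended reading is the one you already wrote in parentheses (distance strictly greater than $2$), under which your steps (a)--(c) immediately give $d_{DL2}(\vec z,\vec z_h)=0$ for every finite-loss $\vec z$ and the paper's short argument goes through without any concentration estimate.
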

\noindent We present the proof of this theorem in the 
Appendix ( Theorem~\ref{thm:differentiating1D-A})

Therefore, given a differentiating Input Source, the synthesis algorithm will
guarantee convergence in presence of 1-Delete Noise Source and
1-Delete Loss Function.

\noindent{\bf Case 4:}
Consider the Damerau-Levenshtein Loss Function $\cL_{DL}$ and the 1-Delete Noise Source
$N_{1D}$ studied in previous work~\cite{handa2020inductive}.

\begin{definition}
{\bf Damerau-Levenshtein (DL) Loss Function:} The DL Loss Function 
$\cL_{DL}(p, \cD)$ uses the {\it Damerau-Levenshtein} metric~\cite{damerau1964technique},
to measure the distance between the output from the synthesized program and the
corresponding output in the noisy data set: 
\[
    \cL_{DL}(p, \cD) = \sum\limits_{(x_i, y_i) \in \cD} 
    L_{p(x_i), y_i}\big(\left\vert p(x_i)
    \right\vert, \left\vert y_i \right\vert\big)
\]
where, $L_{a, b}(i, j)$ is the 
    {\it Damerau-Levenshtein} metric~\cite{damerau1964technique}.
\end{definition}
This metric counts the number of 
single character deletions, insertions, substitutions, or transpositions
required to convert one text string into another. 
Because more than $80\%$ of all human misspellings 
are reported to be captured by a single one of these four 
operations~\cite{damerau1964technique}, the DL Loss Function may
be appropriate for computations that work with human-provided
text input-output examples. 

\begin{theorem}
   Given DL-2 Distance Metric $d_{DL2}$
    and the
    Damerau-Levenshtein Loss Function $\cL_{DL}$, the 1-Delete Noise Source
    $N_{1D}$ is differentiating.
\end{theorem}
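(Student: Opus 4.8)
The plan is to verify the differentiating--Noise--Source condition for $N_{1D}$, $\cL_{DL}$, and $d_{DL2}$ directly, in the same spirit as the $\cL_{1D}$ case above. Fix $\delta>0$ and $\gamma>0$, and write $L_{a,b}$ for the Damerau--Levenshtein distance $L_{a,b}(|a|,|b|)$. The structural fact to exploit is that $N_{1D}$ acts coordinatewise and independently and maps $z_{h,i}$ to $y_i$ by at most one character deletion, so $L_{z_{h,i},y_i}\in\{0,1\}$, equal to $1$ precisely on the coordinates where a deletion occurred. Hence, writing $M$ for the (random) number of corrupted coordinates, $\cL_{DL}(\vec z_h,\vec y)=M$, and $M$ is a sum of $n$ independent Bernoulli$(\delta_{1D})$ variables and therefore concentrates around $\delta_{1D}\,n$. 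Conditioning on $\vec y$ (equivalently, on the set of corrupted coordinates and on $M$) turns the inner ``$\forall\vec z$'' statement into a deterministic claim about strings, so the proof factors into a deterministic combinatorial part and a concentration part.

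For the deterministic part, fix $\vec y$ and let $\vec z$ be any output vector with $\cL_{DL}(\vec z,\vec y)-\cL_{DL}(\vec z_h,\vec y)\le\gamma$, i.e.\ $\sum_i L_{z_i,y_i}\le M+\gamma$ (in particular every $L_{z_i,y_i}$ is finite). For each coordinate $i$, the triangle inequality for the Damerau--Levenshtein metric gives $L_{z_i,z_{h,i}}\le L_{z_i,y_i}+L_{z_{h,i},y_i}$, so if $i$ contributes to $d_{DL2}(\vec z,\vec z_h)$ (that is, $L_{z_i,z_{h,i}}\ge 2$) then $L_{z_i,y_i}\ge 2$ when $i$ is uncorrupted and $L_{z_i,y_i}\ge 1$ when $i$ is corrupted. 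Splitting the contributing coordinates into an uncorrupted set $S_0$ and a corrupted set $S_1$ and summing, $M+\gamma\ge\sum_i L_{z_i,y_i}\ge 2|S_0|+|S_1|$, which bounds $d_{DL2}(\vec z,\vec z_h)=|S_0|+|S_1|$ in terms of $M$ and $\gamma$ uniformly over all loss--competitive $\vec z$. This is precisely the step that distinguishes $\cL_{DL}$ from the optimal $\cL_{1D}$: $\cL_{DL}$ is finite on far more output vectors, so far more $\vec z$ are loss--competitive, and it is the triangle--inequality bound that still confines all of them; note also that on the sub--family where $\cL_{1D}$ is itself finite the two losses are affine (with positive slope when $\delta_{1D}<\tfrac12$), so over that sub--family the desired bound can alternatively be inherited from the $\cL_{1D}$ analysis.

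It then remains to choose $\epsilon$ as a function of $\gamma$ and $\delta_{1D}$ and $k$ large enough that a Chernoff bound keeps $M$ within the range for which the deterministic bound forces $d_{DL2}(\vec z,\vec z_h)<\epsilon$ for every loss--competitive $\vec z$, outside an event of $\rho_{N_{1D}}$--probability less than $\delta$; this yields the differentiating condition. I expect the main obstacle to be controlling the corrupted coordinates in $S_1$: at a corrupted coordinate a competitive $\vec z$ can move to Damerau--Levenshtein distance $2$ from $z_{h,i}$ while paying essentially no extra loss (for instance by deleting one further character from $y_i$), so the crude triangle--inequality estimate only bounds $|S_1|$ by $M$. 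Pinning $\epsilon$ down sharply is therefore the crux, and is where the weaker match between the Damerau--Levenshtein Loss Function and $N_{1D}$ must be paid for; I would handle it by a finer coordinatewise case analysis of which strings lie simultaneously within DL--distance $1$ of $y_i$ and DL--distance $\ge 2$ of $z_{h,i}$, combined with the reduction to the $\cL_{1D}$ case on the finite--$\cL_{1D}$ sub--family.
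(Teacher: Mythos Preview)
The paper's proof is far shorter and takes a different route: it simply asserts that $d_{DL2}(\vec z,\vec z_h)\ge m$ implies $\cL_{DL}(\vec z,\vec y)-\cL_{DL}(\vec z_h,\vec y)\ge m$ for every $\vec y$ in the support of $N_{1D}$, sets $\epsilon=\gamma$, and concludes that the differentiating condition holds with probability $1$. There is no concentration argument on $M$ and no split into corrupted versus uncorrupted coordinates; the implication is treated as a deterministic fact. Your approach, by contrast, tracks $M$ explicitly and plans a Chernoff step, so the two arguments are structurally different.

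The obstacle you isolate in your last paragraph is real, and it is precisely the point the paper's one-line argument glosses over. At a corrupted coordinate $i$ one can choose $z_i$ with $L_{z_i,y_i}=1=L_{z_{h,i},y_i}$ and $L_{z_i,z_{h,i}}=2$ (delete one further character from $y_i$), so that coordinate contributes $1$ to $d_{DL2}$ and $0$ to the loss difference; hence the paper's claimed implication fails for such $\vec y$. Your own bookkeeping gives $2|S_0|+|S_1|\le M+\gamma$ and $|S_1|\le M$, hence only $d_{DL2}(\vec z,\vec z_h)\le M+\tfrac{\gamma}{2}$, and since $M\approx\delta_{1D}\,n$ grows with $n$ while $\epsilon$ must be fixed before $n$, no single $\epsilon$ suffices: on the high-probability event $M\ge\epsilon$, the vector $\vec z$ that deletes one further character at $\lceil\epsilon\rceil$ corrupted coordinates and equals $\vec z_h$ elsewhere has loss difference $0$ and $d_{DL2}\ge\epsilon$, defeating the $\forall\vec z\in Z^n$ quantifier. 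The ``finer coordinatewise case analysis'' you propose cannot remove this, because the obstruction lies in the quantifier order (a fixed $\epsilon$ must work for all $n\ge k$), not in the per-coordinate combinatorics. So the difficulty you flagged remains open in both your proposal and the paper's proof as written.
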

\noindent We present the proof of this theorem in the Appendix (
Theorem~\ref{thm:differentiating1DDL-A}).

Therefore, given a differentiating Input Source, the synthesis algorithm will
guarantee convergence in presence of 1-Delete Noise Source and
Damerau-Levenshtein Loss Function.

\noindent{\bf Case 5:}
Consider the Damerau-Levenshtein Loss Function $\cL_{DL}$ and the 
$n$-Substitution Noise Source
$N_{nS}$ studied in previous work~\cite{handa2020inductive}.

\begin{theorem}
   Given Length Distance Metric $d_l$
    and the
    Damerau-Levenshtein Loss Function $\cL_{DL}$, the $n$-Substitution Noise Source
    $N_{nS}$ is differentiating.
\end{theorem}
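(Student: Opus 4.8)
\noindent The plan is to exploit the one structural feature of the $n$-Substitution Noise Source that matters here: it never changes the length of an output string. Thus when $\vec{y}$ is sampled from $N_{nS}$ applied to $\vec{z}_h$ we have $|y_i| = |z_{h,i}|$ for every coordinate $i$, so the Length Distance metric $d_l(\vec{z},\vec{z}_h)$ is exactly the number of coordinates at which $|z_i| \neq |y_i|$, and at each such coordinate the Damerau-Levenshtein distance must pay for at least one insertion or deletion, $L_{z_i,y_i}(|z_i|,|y_i|) \geq 1$.

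The core of the argument is a per-coordinate lower bound on each coordinate's contribution to $\cL_{DL}(\vec{z},\vec{y}) - \cL_{DL}(\vec{z}_h,\vec{y})$. A coordinate with $z_i = z_{h,i}$ contributes exactly $0$; a coordinate with $z_i \neq z_{h,i}$ contributes $L_{z_i,y_i}(|z_i|,|y_i|) - L_{z_{h,i},y_i}(|z_{h,i}|,|y_i|)$, and the claim I would prove is that, outside a low-probability event over the noise, this is nonnegative---indeed at least $1$ when the lengths also differ. The intuition is that $z_i$ is a \emph{fixed} string whereas the substituted positions of $y_i$ are random, so the cheapest alignment of $z_i$ to $y_i$ cannot exploit them: with high probability it costs at least as much as the cheapest alignment of $z_{h,i}$ to $y_i$ (the substitution count), and on top of that it is forced to spend $\bigl|\,|z_i|-|z_{h,i}|\,\bigr|$ indels on the length mismatch, so the wrong output can gain only a small, low-probability amount, which a concentration estimate over the i.i.d.\ substitutions controls. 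Summing over coordinates then gives, with high probability, $\cL_{DL}(\vec{z},\vec{y}) - \cL_{DL}(\vec{z}_h,\vec{y}) \geq d_l(\vec{z},\vec{z}_h) - s$ for a bounded slack $s$.

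I expect this per-coordinate \emph{no cheap savings} claim to be the main obstacle: it requires handling transpositions and the optimal-alignment structure of the Damerau-Levenshtein distance carefully, and it requires making the exceptional probability small enough to survive a union bound. That union bound is also why the comparison should be run against the outputs of programs in the finite set $G$ rather than against arbitrary vectors $\vec{z}$: for $p \not\approx p_h$, a differentiating Input Source measured with $d_l$ guarantees $d_l(p[\vec{x}],p_h[\vec{x}]) \geq \epsilon$ once $n$ is large, for any $\epsilon$ fixed in advance, and there are only $|G|$ candidates to union over.

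Finally I would assemble the pieces. Given $\delta,\gamma > 0$, use the per-coordinate bound together with a Chernoff/Hoeffding estimate to pick $\epsilon$ (after $\gamma$) large enough that, for every $p \not\approx p_h$ with $d_l(p[\vec{x}],p_h[\vec{x}]) \geq \epsilon$, the loss gap exceeds $\gamma$ with probability at least $1 - \tfrac{\delta}{2|G|}$; union-bound over $G$; then choose $k$ so that for $n \geq k$ the differentiating-Input event $\{\,\forall p \not\approx p_h :\ d_l(p[\vec{x}],p_h[\vec{x}]) \geq \epsilon\,\}$ also holds with probability at least $1 - \tfrac{\delta}{2}$. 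On the intersection of these events, every $\vec{z} = p[\vec{x}]$ with $d_l(\vec{z},\vec{z}_h) \geq \epsilon$ has $\cL_{DL}(\vec{z},\vec{y}) - \cL_{DL}(\vec{z}_h,\vec{y}) > \gamma$, which is exactly the differentiating-Noise-Source condition for $N_{nS}$ relative to $d_l$ and $\cL_{DL}$.
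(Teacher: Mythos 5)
Your route is genuinely different from the paper's. The paper's entire proof is one deterministic line: since $N_{nS}$ preserves string lengths, $|y_i|=|z_{h,i}|$ always, so any coordinate with $|z_i|\neq|z_{h,i}|$ forces at least one indel in the Damerau--Levenshtein alignment, and the paper simply asserts (parenthetically ``assuming replacements take place from a very large set'') that therefore $d_l(\vec{z},\vec{z}_h)\geq m$ implies $\cL_{DL}(\vec{z},\vec{y})-\cL_{DL}(\vec{z}_h,\vec{y})\geq m$ for \emph{every} $\vec{z}$, giving the defining inequality with probability exactly $1$, with no concentration estimate, no union bound, and no reference to the Input Source. You instead treat the per-coordinate comparison as a high-probability event and control the exceptional set. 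That is more honest about the real difficulty --- a fixed $z_i$ \emph{can} align cheaply to $y_i$ if the random substitutions happen to favor it, so the paper's probability-$1$ claim over all of $Z^n$ is not actually deterministic --- but it changes what is being proved.

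The concrete gaps in your proposal are two. First, the paper's definition of a differentiating Noise Source quantifies over \emph{all} vectors $\vec{z}\in Z^n$, for all $\vec{z}_h$, with no Input Source in sight; your final assembly restricts the comparison to $\vec{z}=p[\vec{x}]$ for the finitely many $p\in G$ and invokes a differentiating Input Source to set up the union bound. That establishes a combined statement useful for the convergence theorem, but it is not the property the theorem asserts, and the restriction is doing real work: for arbitrary $\vec{z}$ (e.g.\ $\vec{z}$ obtained from the observed $\vec{y}$ by deleting one character in each corrupted coordinate) the loss gap can be nonpositive while $d_l(\vec{z},\vec{z}_h)$ grows with $n$, so the unrestricted claim cannot be rescued by your union bound. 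Second, you explicitly flag the per-coordinate ``no cheap savings'' lemma as the main obstacle and leave it unproved; since that lemma is the entire mathematical content here (the rest is bookkeeping), the proposal as written is a plan rather than a proof. The paper buys brevity by assuming the alphabet is large enough that the lemma holds outright; you buy rigor at the cost of needing that lemma, the quantifier change, and the extra hypothesis on the Input Source.
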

\noindent We present the proof of this theorem in the Appendix (
Theorem~\ref{thm:differentiatingnSDL-A}).

Therefore, given a differentiating Input Source, the synthesis algorithm will
guarantee convergence in presence of $n$-Substitution Noise Source and
Damerau-Levenshtein Loss Function.

\noindent{\bf Connections to Empirical Results:}
Handa et al. within their paper~\cite{handa2020inductive}, 
empirically evaluated Damerau-Levenshtein
and 1-Delete Loss Functions in presence of 1-Delete style noise.
Both 1-Delete Loss Function and Damerau-Levenshtein Loss Function are able to
synthesize the correct program in presence of some noise. This is in line with
our convergence results. 1-Delete Loss Function is also able to tolerate 
datasets with more noise, compared to Damerau-Levenshtein Loss Function.
Even when all input-output examples were corrupted, 
1-Delete Loss Function was able to synthesize the correct program. 
Note that 1-Delete Loss Function is the optimal Loss Function in presence of
1-Delete Noise Source, therefore it has a higher probability of synthesizing the
correct program.

They also evaluated $n$-Substitution Loss Function in presence of the
$n$-Substitution Noise Source. 
Inline with our theoretical results, their technique was able to synthesize the
correct answer over datasets corrupted by $n$-Substitution Noise Source.

\subsection{Non-Differentiating Input Distributions}
Let $G$ be a set of two programs $p_1$ and $p_2$ and let
$x^*$ be the only input on which $p_1$ and $p_2$ disagree on, i.e., $p_1(x^*) \neq
p_2(x^*)$ and $\forall x \in X, x \neq x^* \implies p_1[x] = p_2[x]$.
Let $\rho_p$ be a probability distribution over programs in $G$. 
Without loss of generality, 
we assume
$C(p_2) \geq C(p_1)$. Let $\rho_i$ be the probability distribution over
input vectors.

\noindent{\bf Case 1:} The input process 
never returns a differentiating input, i.e.,
for all $n \in \mathbb{N}$ and for all $\vec{x} \in X^n$,
$x^* \in \vec{x} \implies \rho_i(\vec{x}) = 0$.  

For all $n \in \mathbb{N}$ and for all $\epsilon > 0$, the following statement
is true for both $p_h = p_1$ and $p_2$:
\begin{equation*}
    \begin{split}
    \int\limits_{\vec{x} \in X^n} \mathds{1}(\forall p \in \tilde{G}_{p_h}.~
    d(p[\vec{x}], p_h[\vec{x}]) > \epsilon) \rho_i(d\vec{x}) 
\\
        \leq 
        \int\limits_{\vec{x} \in X^n} \mathds{1}(p_1[\vec{x}] \neq p_2[\vec{x}])
        \rho_i(d\vec{x})
        = 0
    \end{split}
    \end{equation*}
Therefore, $\rho_i$ is non-differentiating.

\begin{theorem}
In this case, $Pr[p^n_s \approx p_1~\vert~p_1, N] = 0$ 
\end{theorem}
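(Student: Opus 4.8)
The plan is to unfold the definition of $Pr[p^n_s \approx p_1~\vert~p_1, N]$ and show that the event ``the process of the convergence definition returns true'' is contained in the event $\{x^* \in \vec{x}\}$, which by the Case~1 hypothesis has probability zero under $\rho_i$.

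First I would record the shape of $G$: since $p_1(x^*) \neq p_2(x^*)$, the two programs are not equivalent, so $G^C_{p_1} = \{p_2\}$ and $G_{p_1} = \{p_1\}$. Hence, with $p_h = p_1$, the success predicate in the definition of $Pr[p^n_s \approx p_1~\vert~p_1, N]$ amounts to the single loss-plus-prior inequality
\[
\cL(p_1[\vec{x}],\vec{y}) - \log \rho_p(G_{\vec{x}, p_1[\vec{x}]}) < \cL(p_2[\vec{x}],\vec{y}) - \log \rho_p(G_{\vec{x}, p_2[\vec{x}]}),
\]
together with a complexity condition on $p_2$ that I will not even need to examine.

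Second I would observe that whenever the sampled input vector $\vec{x}$ does not contain $x^*$, all coordinates of $p_1[\vec{x}]$ and $p_2[\vec{x}]$ agree, because $x^*$ is the \emph{only} input on which $p_1$ and $p_2$ differ; thus $p_1[\vec{x}] = p_2[\vec{x}]$. Since a Loss Function depends only on the program's outputs on $\vec{x}$, this gives $\cL(p_1[\vec{x}],\vec{y}) = \cL(p_2[\vec{x}],\vec{y})$ for every noisy output $\vec{y}$; and since $G_{\vec{x}, p_1[\vec{x}]} = G_{\vec{x}, p_2[\vec{x}]}$ the two $-\log\rho_p(\cdot)$ terms are identical as well. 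The displayed inequality therefore reads $a < a$ and fails, regardless of the $\vec{y}$ drawn from $\rho_N(\cdot~\vert~p_1[\vec{x}])$. Consequently the process can return true only when $x^* \in \vec{x}$.

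Finally I would integrate: writing $Pr[p^n_s \approx p_1~\vert~p_1, N]$ as $\int_{\vec{x}}\int_{\vec{y}} \mathds{1}(\textrm{success})\,\rho_N(d\vec{y}~\vert~p_1[\vec{x}])\,\rho_i(d\vec{x})$ and using $\mathds{1}(\textrm{success}) \leq \mathds{1}(x^* \in \vec{x})$, the Case~1 hypothesis that $\rho_i$ assigns probability zero to every vector containing $x^*$ forces the whole integral to vanish. I do not anticipate a genuine obstacle here; the only point worth stressing is that the argument is completely independent of the Noise Source and of the complexity tie-breaking rule, since the harder half of the success predicate --- the strict loss-plus-prior comparison between $p_1$ and $p_2$ --- already degenerates to a falsehood on the full-$\rho_i$-measure event $\{x^* \notin \vec{x}\}$.
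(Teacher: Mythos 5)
Your proof is correct and follows essentially the same route as the paper's: both reduce $Pr[p^n_s \approx p_1~\vert~p_1, N]$ to the probability that the strict loss-plus-prior inequality between $p_1$ and $p_2$ holds, and observe that it fails $\rho_i$-almost surely because $p_1[\vec{x}] = p_2[\vec{x}]$ whenever $x^* \notin \vec{x}$. Your write-up is in fact a little cleaner, since you make explicit the step (equality of the losses and of the $-\log\rho_p$ terms) that the paper's displayed chain leaves implicit.
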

\noindent We present the proof of this theorem in the Appendix (
Theorem~\ref{thm:nondifferentiatinginputcase1-A}).

Since $p_2[\vec{x}] = p_1[\vec{x}]$ for all $\vec{x}$, the best move for any
algorithm is to always predict the simplest program (which in this case is
$p_2$). Hence, if the hidden program is not the simplest program, even in
the absence of noise, the synthesis algorithm will not converge to the correct
hidden program.

\noindent{\bf Case 2:}
The input process 
never returns a differentiating input with sufficient probability, i.e.,
there exists a $\delta_i$, such that 
for all $n \in \mathbb{N}$,
\[
\int\limits_{\vec{x} \in X^n} \mathds{1}(x^* \in \vec{x})\rho_i(d\vec{x}) <
\delta_i
\]
For all $n \in \mathbb{N}$ and for all $\epsilon > 0$, the following statement
is true for both $p_h = p_1$ or $p_2$:
\begin{equation*}
    \begin{split}
    \int\limits_{\vec{x} \in X^n} \mathds{1}(\forall p \in \tilde{G}_{p_h}.~
    d(p[\vec{x}], p_h[\vec{x}]) > \epsilon) \rho_i(d\vec{x}) 
    \\
        \leq
    \int\limits_{\vec{x} \in X^n} \mathds{1}(
    p_1[\vec{x}] \neq p_2[\vec{x}]) \rho_i(d\vec{x})  
< \delta_i
    \end{split}
\end{equation*}
Therefore, $\rho_i$ is non-differentiating. 

\begin{theorem}
In this case, 
$Pr[p^n_s \approx p_1~\vert~p_1, N] < \delta_i$. 
\end{theorem}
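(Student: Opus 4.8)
The plan is to mirror the Case~1 argument, replacing its ``probability zero'' conclusion by the required ``probability $<\delta_i$'' bound. Recall that here $G=\{p_1,p_2\}$ with $p_1\not\approx p_2$, so $G_{p_1}=\{p_1\}$ and $G^C_{p_1}=\{p_2\}$, and that $Pr[p^n_s\approx p_1~\vert~p_1,N]$ is, by the definition in Section~\ref{sec:convergence}, the probability that the process described there returns true when run with $p_h=p_1$. The crux is to show that this process can return true only when the sampled input vector $\vec{x}$ contains $x^*$.

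First I would treat the case $x^*\notin\vec{x}$. Since $x^*$ is by assumption the unique input on which $p_1$ and $p_2$ disagree, we get $p_1[\vec{x}]=p_2[\vec{x}]$; call this common output vector $\vec{c}$. Then $G_{\vec{x},p_1[\vec{x}]}=G_{\vec{x},p_2[\vec{x}]}=G$, so $\log\rho_p(G_{\vec{x},p_1[\vec{x}]})=\log\rho_p(G_{\vec{x},p_2[\vec{x}]})$; and since a Loss Function depends only on a program's output on $\vec{x}$ (Section~\ref{sec:lossFunction}), $\cL(p_1[\vec{x}],\vec{y})=\cL(\vec{c},\vec{y})=\cL(p_2[\vec{x}],\vec{y})$ for every $\vec{y}$. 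Hence, for the unique program $p_2\in G^C_{p_1}$, the strict inequality
\[
\cL(p_1[\vec{x}],\vec{y})-\log\rho_p(G_{\vec{x},p_1[\vec{x}]})\;<\;\cL(p_2[\vec{x}],\vec{y})-\log\rho_p(G_{\vec{x},p_2[\vec{x}]})
\]
that ``success'' requires reduces to $a<a$ and fails, so the process returns false irrespective of $\vec{y}$.

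Consequently the process returns true only on those $\vec{x}$ with $x^*\in\vec{x}$, where the remaining probability over $\vec{y}$ is at most $1$. Integrating against $\rho_i$ then yields
\[
Pr[p^n_s\approx p_1~\vert~p_1,N]\;\le\;\int_{\vec{x}\in X^n}\mathds{1}(x^*\in\vec{x})\,\rho_i(d\vec{x})\;<\;\delta_i,
\]
the final inequality being exactly the standing hypothesis of Case~2. I do not expect a real obstacle here; the only point demanding care is the definitional bookkeeping — unwinding the Convergence process and observing that, whenever $\vec{x}$ omits $x^*$, both the loss term and the prior term are constant across the equivalence class $\{p_1,p_2\}$, so that the strict inequality mandated for counting a run as a success can never be met.
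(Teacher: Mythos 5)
Your proof is correct and follows essentially the same route as the paper's: both reduce success to the strict-inequality condition for the single program $p_2\in G^C_{p_1}$, observe that it degenerates to $a<a$ whenever $p_1[\vec{x}]=p_2[\vec{x}]$ (i.e., whenever $x^*\notin\vec{x}$), and then bound the remaining event by the standing hypothesis $\int_{\vec{x}\in X^n}\mathds{1}(x^*\in\vec{x})\rho_i(d\vec{x})<\delta_i$. Your write-up is in fact slightly cleaner than the paper's, which passes through an intermediate equality substituting $\vec{y}=p_1[\vec{x}]$ rather than noting the indicator vanishes for every $\vec{y}$.
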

\noindent We present the proof of this theorem in the Appendix (
Theorem~\ref{thm:nondifferentiatinginputcase2-A}).

Hence, no matter how much we increase $n$ (i.e., the size of the dataset
sampled), the probability that $p_1$ will be synthesized (in presence of any
Loss Function) is less than $\delta_i$. 

\subsection{Non-Differentiating Noise Distributions}
\label{subsec:nondifferentiatingnoise}
\noindent{\bf Case 1:}
We first consider the case where the Noise Source corrupts all information
identifying the hidden program. 
Let $G$ be a set containing two programs, $p_a$ and $p_b$.
$p_a$ takes an input string $x$ and appends character $``a"$ in front of it. $p_b$, 
similarly, takes an input string $x$ and appends character $``b"$ in front of it.
\[
\begin{array}{rcl}
    p_a &:=& \mathsf{append}(``a", x)
\\
    p_b &:=& \mathsf{append}(``b", x)
\end{array}
\]
\Comment{
Let $\rho_p$ be a probability distribution over programs in $G$. 
Without loss of generality, 
we assume
$C(p_b) \geq C(p_a)$. Let $\rho_i$ be the probability distribution over
input vectors.
}

Let $N$ be a Noise Source which deletes the first
character of the output string with probability $1$. Note that in presence of
this Noise Source, no synthesis algorithm can infer which of the given programs
$p_a$ or $p_b$ is the hidden program.

\begin{theorem}
$
    Pr[p^n_s \approx p_a~\vert~p_a, N] + Pr[p^n_s \approx p_b~\vert~p_b, N] \leq
    1
$
\end{theorem}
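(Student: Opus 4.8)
The plan is to exploit that the Noise Source $N$ here is \emph{deterministic} and erases exactly the one character that distinguishes $p_a$ from $p_b$, so the algorithm cannot see anything program-dependent. First I would observe that each of $p_a,p_b$ prepends one character to its input, so on every input its output has length at least one; hence $N$, which deletes the first character with probability $1$, sends both $p_a[\vec x]$ and $p_b[\vec x]$ to $\vec x$ itself with probability $1$. Combining this with the standing assumption that the Input Source is independent of the hidden program, the process defining $Pr[p^n_s \approx p_a \mid p_a, N]$ reduces to: draw $\vec x \sim \rho_i(\vec x\mid n)$, set $\vec y = \vec x$, and return true iff the ``return true'' predicate of the Convergence process holds for hidden program $p_a$ on data $(\vec x,\vec x)$ --- and symmetrically for $p_b$. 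Writing $A(\vec x)$ and $B(\vec x)$ for these two predicates, this gives
\[
 Pr[p^n_s \approx p_a \mid p_a, N] = \int_{\vec x \in X^n} \mathds{1}\big(A(\vec x)\big)\,\rho_i(d\vec x), \qquad
 Pr[p^n_s \approx p_b \mid p_b, N] = \int_{\vec x \in X^n} \mathds{1}\big(B(\vec x)\big)\,\rho_i(d\vec x),
\]
both integrals being against the \emph{same} measure $\rho_i$.

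Next I would unfold $A$ and $B$. Since $p_a$ and $p_b$ disagree on every input, $p_a \not\approx p_b$ and even $p_a \not\approx_{\vec x} p_b$ for all $\vec x$; hence $G^C_{p_a} = \{p_b\}$, $G^C_{p_b} = \{p_a\}$, the complexity tie-break clause of the process is vacuously satisfied (take $p^n_s = p_h \in G_{p_h}$, since no program in $G^C_{p_h}$ is $\approx_{\vec x} p_h$), and $A(\vec x)$ collapses to the single strict inequality
\[
 \cL(p_a[\vec x],\vec x) - \log\rho_p(G_{\vec x, p_a[\vec x]}) < \cL(p_b[\vec x],\vec x) - \log\rho_p(G_{\vec x, p_b[\vec x]}),
\]
while $B(\vec x)$ is the very same inequality with the two sides interchanged. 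Being strict and opposite, $A(\vec x)$ and $B(\vec x)$ are mutually exclusive, so $\mathds{1}(A(\vec x)) + \mathds{1}(B(\vec x)) \le 1$ for every $\vec x$; integrating against $\rho_i$ yields
\[
 Pr[p^n_s \approx p_a \mid p_a, N] + Pr[p^n_s \approx p_b \mid p_b, N] = \int_{\vec x \in X^n} \big(\mathds{1}(A(\vec x)) + \mathds{1}(B(\vec x))\big)\,\rho_i(d\vec x) \le 1.
\]

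The content that needs care is only bookkeeping: confirming $N$ really acts with probability $1$ as described (it does, since the distinguishing first character is always present to delete), that the input measure is identical under the two conditionings (the independence-of-inputs assumption), and that the complexity clause does not silently fail --- it does not, because $p_h \in G_{p_h}$ and no competing program agrees with $p_h$ on $\vec x$. I expect the only ``obstacle'' to be this careful matching of the abstract Convergence-process definition to the concrete $\{p_a, p_b\}$ instance; the probabilistic heart of the argument is the one-line pointwise comparison of the two optimality conditions.
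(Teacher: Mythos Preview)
Your proposal is correct and follows essentially the same idea as the paper: the deterministic noise source collapses both $p_a[\vec x]$ and $p_b[\vec x]$ to the same observed $\vec y=\vec x$, so the two success events live under the \emph{same} measure and are mutually exclusive strict inequalities, giving the bound. Your write-up is in fact more careful than the paper's appendix proof, which only displays the key $\rho_N$ inequality for a generic threshold $\gamma$ and leaves the integration over $\vec x$, the identification of $\gamma$ with the log-prior difference, and the handling of the complexity tie-break clause implicit; you make all of these explicit.
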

\noindent We present the proof of this theorem in the Appendix (
Theorem~\ref{thm:nondifferentiatingnoisecase1-A}).

Therefore, any gains in improving convergence of the synthesis algorithm by
increasing $n$, assuming $p_a$ is the hidden program, will be on the cost of
convergence when $p_b$ is the hidden program. Formally, 
\[Pr[p^n_s \approx
p_a~\vert~p_a, N] \geq 1 - \delta \implies Pr[p^n_s \approx p_b~\vert~p_b, N]
\leq \delta\]

\noindent{\bf Case 2:}
The choice of Loss Function affects whether the Noise Source is differentiating or
not.
For example, consider a Noise Source which reveals 
information identifying the hidden program
with high probability, but the Loss Function does not capture this information.
1-Delete Noise Source $N_{1D}$, given the 1-Delete Loss
function $\cL_{1D}$ (and DL-2 distance metric $d_{DL2}$), is 
differentiating.

Since, $n$-Substitution Loss Function $\cL_{nS}$ penalizes a deletion with infinite
loss, the following is true:
\begin{theorem}
 Given $n$-Substitution Loss Function
$\cL_{nS}$, 1-Delete Noise Source $N_{1D}$ is non-differentiating.
In this case, the synthesis algorithm will never converge.
\end{theorem}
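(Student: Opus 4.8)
The plan is to exploit a single structural mismatch: the $n$-Substitution Loss Function $\cL_{nS}$ returns $\infty$ on any pair of strings of unequal length, whereas the 1-Delete Noise Source $N_{1D}$ shortens by exactly one character every output coordinate it corrupts, and corrupts each coordinate independently with probability $\delta_{1D} > 0$. Thus a single deletion already forces $\cL_{nS}$ to disagree infinitely with the true output, and this is what breaks both differentiation and convergence. The two halves require separate arguments, since non-differentiation alone does not formally entail non-convergence.

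For the non-differentiating claim (stated against the Length Distance Metric $d_l$), I would negate the definition of a differentiating Noise Source. Fix $\gamma_0 = 1$ and $\delta_0 = 1/4$; given any candidate threshold $k$ and any $\epsilon > 0$, I must produce an $n \ge k$ and a witness vector $\vec{z}_h$ for which the required probability bound fails. Take $\vec{z}_h = \tup{s, \ldots, s}$ with $s$ a fixed string of length $\ge 2$. For a noisy output $\vec{y}$, let $S = \{\, i : |y_i| \ne |s| \,\}$ be the corrupted coordinates, and let $\vec{z}$ be the vector that agrees with $\vec{y}$ on $S$ and with $\vec{z}_h$ off $S$. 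Every per-coordinate term of $\cL_{nS}(\vec{z}, \vec{y})$ is then finite, while $\cL_{nS}(\vec{z}_h, \vec{y}) = \infty$ as soon as $S \ne \emptyset$ (length mismatch at any $i \in S$); hence $\cL_{nS}(\vec{z}, \vec{y}) - \cL_{nS}(\vec{z}_h, \vec{y}) = -\infty \le \gamma_0$, yet $d_l(\vec{z}, \vec{z}_h) = |S|$. Since $|S|$ is a sum of $n$ i.i.d.\ Bernoulli($\delta_{1D}$) variables, choosing $n$ large makes $\rho_{N_{1D}}[\, |S| \ge \lceil \epsilon \rceil \,] > 1 - \delta_0$; on that event the inner implication is violated, so the probability in the definition is $< 1 - \delta_0$, contradicting differentiation for every choice of $k$ and $\epsilon$.

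For non-convergence I would exhibit a concrete setting: let $G$ be any finite program set containing a program $p_1$ with $\rho_p(p_1) > 0$ that produces a non-empty output on every input in the support of the Input Source (for instance $p_1 = \mathsf{append}(``a", x)$), together with at least one program $p_2 \not\approx p_1$; the Input Source may be arbitrary, even a differentiating one. Take $p_h = p_1$. Every coordinate of $p_1[\vec{x}]$ is non-empty, so with probability $1 - (1 - \delta_{1D})^n$ the Noise Source deletes a character from at least one coordinate, which forces $\cL_{nS}(p_1[\vec{x}], \vec{y}) = \infty$. On that event the strict-minimality requirement inside the definition of $Pr[p^n_s \approx p_1 \mid p_1, N_{1D}]$ fails: the quantity $\cL_{nS}(p_1[\vec{x}], \vec{y}) - \log \rho_p(G_{\vec{x}, p_1[\vec{x}]})$ equals $\infty$ (the log term is finite because $\rho_p(p_1) > 0$) and so cannot be strictly below the corresponding quantity for $p_2 \in G^C_{p_1}$. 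Hence $Pr[p^n_s \approx p_1 \mid p_1, N_{1D}] \le (1 - \delta_{1D})^n$, which does not approach $1$, so convergence fails for this (and, by the same reasoning, essentially any non-degenerate) setting.

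The main obstacle is disciplined handling of the extended-real arithmetic: the proof must fix a reading of ``$\cL(\vec{z}, \vec{y}) - \cL(\vec{z}_h, \vec{y}) \le \gamma$'' under which a finite-minus-infinite difference counts as satisfying the hypothesis and no $\infty - \infty$ comparison is ever needed, and it must confirm that $\log \rho_p(G_{\vec{x}, p_1[\vec{x}]})$ is genuinely finite. A secondary point worth making explicit is the dependence on the distance metric: the argument above is tailored to $d_l$ and goes through for any metric under which the corrected vector $\vec{z}$ stays $\Omega(n)$-far from $\vec{z}_h$, but not, e.g., for $d_{DL2}$, under which each single deletion contributes $0$.
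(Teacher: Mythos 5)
Your proposal is correct and rests on the same core observation as the paper's proof: a single deletion, which occurs with probability $1-(1-\delta_{1D})^n \to 1$, forces $\cL_{nS}(\vec{z}_h,\vec{y})=\infty$, and this simultaneously breaks the differentiation condition and the strict-minimality event underlying convergence. Your write-up is in fact more complete than the paper's two-line argument, which neither constructs the witness vector $\vec{z}$ (agreeing with $\vec{y}$ on the corrupted coordinates) needed to actually negate the definition of a differentiating Noise Source, nor treats the non-convergence claim as the separate statement it is.
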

\noindent We present the proof of this theorem in the Appendix (
Theorem~\ref{thm:nondifferentiatingnoisecase2-A}).

Handa et al. empirically showed within their paper~\cite{handa2020inductive}
that the $n$-Substitution Noise Source requires all input-output examples to be
correct (i.e., it cannot tolerate any noise) when the specific noise is
introduced by the 1-Delete Noise Source. This theorem is in line with their
experimental results.

\subsection{Necessity of the distance metric $d$}
The distance metric serves as an important link between the Input Source and the
Noise Source. 
A simple distance metric (like the counting distance) 
makes it easy to  construct a differentiating Input Source but a simple distance metric
lack the restrictions required to prove a Noise Source differentiating.
Similarly, a restricted distance metric may enable us to easily prove the Noise
Source differentiating but make it hard to construct a differentiating Input
Source.

For example, consider the following case. 
Let $G$ be a set containing two programs $p_a$ and $p_b$.
$p_a$ takes an input tuple (string and a boolean) $(x, b)$ and appends character 
$``a"$ in front of $x$ if $b$ is true, else it appends $``aa"$ in front of $x$. 
$p_b$, 
similarly, takes an input tuple (string and a boolean) $(x, b)$ and 
appends character $``b"$ in front of $x$ if $b$ is true, else it appends $``bb"$
in front of $x$. Formally,
\[
\begin{array}{rcl}
    p_a &:=& \mathsf{append}(\ife{b}{``a"}{``aa"}, x)
\\
    p_b &:=& 
    \mathsf{append}(\ife{b}{``b"}{``bb"}, x)
\end{array}
\]
\Comment{
Let $\rho_p$ be a probability distribution over programs in $G$. 
Without loss of generality, 
we assume
$C(p_b) \geq C(p_a)$. Let $\rho_i$ be the probability distribution over input
vectors.
}
Consider a Noise Source which deletes the first character with probability $1$.
For counting distance metric $d_c$,
an Input Source $\rho_i$ which only
returns inputs with $b$ set to true, is differentiating, i.e., 
the outputs produced by $p_a$ and $p_b$ will be of the from $c \cdot x$, where
$c$ is $``a"$ and $``b"$ respectively.

\begin{theorem}
There exists no Loss Functions for which the above Noise Source is
differentiating. Note that for $\rho_i$ described above, the synthesis algorithm
will not converge as well.
\end{theorem}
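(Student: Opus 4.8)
Fix the distance metric to be the counting distance $d_c$ --- the metric for which the displayed Input Source was just shown to be differentiating --- and suppose toward a contradiction that some Loss Function $\cL$ makes the first-character-deletion Noise Source differentiating with respect to $(G, d_c, \cL)$. The plan is to exploit that this Noise Source is deterministic and strongly many-to-one: any two output vectors that differ only in their leading characters are corrupted, with probability $1$, to the \emph{same} noisy vector. Because $\rho_N$ is a point mass, each ``$\rho_N[\,\cdots \mid \vec{z}_h] \ge 1-\delta$'' clause in the definition collapses (for $\delta < 1$) to the assertion that the implication inside holds outright for the unique corrupted vector.

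Concretely, I would instantiate the definition with $\delta = \tfrac12$ and $\gamma = 1$ to obtain a threshold $k$ and an $\epsilon > 0$, then pick $n \ge \max(k, \epsilon)$ and set $\vec{z}_h = \langle a, a, \dots, a\rangle$ and $\vec{z}' = \langle b, b, \dots, b\rangle$, both of length $n$ (realizable as $p_a[\vec{x}]$, $p_b[\vec{x}]$ on a suitable all-$\mathsf{true}$ input vector, though the definition quantifies over all $\vec{z}_h$ anyway). Both are corrupted to the same $\vec{y}$, and $d_c(\vec{z}_h, \vec{z}') = n \ge \epsilon$. The key step is to apply the collapsed differentiating implication \emph{in both directions}. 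With hidden vector $\vec{z}_h$ and test vector $\vec{z} = \vec{z}'$, the failure of $d_c(\vec{z}',\vec{z}_h) < \epsilon$ forces $\cL(\vec{z}',\vec{y}) - \cL(\vec{z}_h,\vec{y}) > \gamma$. With hidden vector $\vec{z}'$ instead --- legitimate, as it has length $n \ge k$ and produces the very same $\vec{y}$ --- and test vector $\vec{z} = \vec{z}_h$, we symmetrically get $\cL(\vec{z}_h,\vec{y}) - \cL(\vec{z}',\vec{y}) > \gamma$. Adding the two inequalities yields $0 > 2\gamma > 0$, the desired contradiction. Note that the argument never inspects the internals of $\cL$: it uses only that a Loss Function depends solely on the two output vectors and that the definition of a differentiating Noise Source is symmetric in which vector plays the role of $\vec{z}_h$.

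For the final sentence of the theorem I would couple the two data-generating processes. On the described Input Source every sampled input has $b = \mathsf{true}$, so for an input vector $\vec{x}$ with string parts $\langle w_1,\dots,w_n\rangle$ we have $p_a[\vec{x}] = \langle aw_1,\dots,aw_n\rangle$ and $p_b[\vec{x}] = \langle bw_1,\dots,bw_n\rangle$, and the Noise Source maps both to $\langle w_1,\dots,w_n\rangle$ with probability $1$; hence the distribution of the noisy dataset $(\vec{x},\vec{y})$ is identical whether the hidden program is $p_a$ or $p_b$. Since the synthesis algorithm is a deterministic function of $(\vec{x},\vec{y})$ returning a program in $G=\{p_a,p_b\}$ and $p_a \not\approx p_b$, for every $(\vec{x},\vec{y})$ exactly one of $p^n_s \approx p_a$ and $p^n_s \approx p_b$ holds, so taking expectations under the common distribution yields $Pr[p^n_s \approx p_a ~\vert~ p_a, N] + Pr[p^n_s \approx p_b ~\vert~ p_b, N] = 1$. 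Thus at least one of the two probabilities is $\le \tfrac12$ for every $n$, and convergence --- which would require both to exceed $1-\delta$ whenever $\rho_p$ gives positive mass to both programs --- fails for every $\delta < \tfrac12$.

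I expect the only delicate point is the deterministic-noise bookkeeping in the first part: one must pass from the probabilistic ``$\ge 1-\delta$'' statement to a pointwise implication, and then choose $n$ to lie simultaneously beyond the threshold $k$ and beyond $\epsilon$, so that the distance $d_c(\vec{z}_h,\vec{z}') = n$ genuinely violates the ``$<\epsilon$'' conclusion in both directions. Everything after that is immediate.
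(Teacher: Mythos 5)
Your proof is correct and follows essentially the same route as the paper's: both arguments exploit that the deterministic first-character-deletion noise collapses the two conditional distributions $\rho_N(\cdot\mid\vec{z}_h)$ and $\rho_N(\cdot\mid\vec{z}')$ onto the same point mass $\vec{y}$, so the two loss-gap inequalities demanded by the differentiating condition (one for each choice of hidden vector) are mutually exclusive and cannot both hold --- you phrase this as the contradiction $0 > 2\gamma$, the paper as the sum of the two probabilities being at most $1$, and the convergence half is the same coupling argument in both. Your write-up is actually more explicit about the quantifier bookkeeping (choosing $n \geq \max(k,\epsilon)$, passing from the $\geq 1-\delta$ statement to a pointwise implication), and the only nitpick is that ``exactly one of $p^n_s \approx p_a$, $p^n_s \approx p_b$ holds'' should be ``at most one'' given the strict-inequality tie-breaking in the paper's success criterion, which does not affect the conclusion.
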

\noindent We present the proof of this theorem in the Appendix (
Theorem~\ref{thm:necessitynondifferentiating-A}).

But for DL-2 Distance metric $d_{DL2}$ (which only counts
the number of disagreements have atleast 2 Damerau-Levenshtein between them),
then an Input Source $\rho_i$ has to return inputs with $b$ set to true, with
high probability, to be differentiating.

\begin{theorem}
    Consider a Loss Function $\cL_{ab}$ 
    which checks if the first character appended to a
    string is either 
$``a"$ or
$``b"$.
    Given DL-2 Distance metric $d_{DL2}$ and the Loss Function $\cL_{ab}$,
    the Noise Source described above is differentiating.

In this case,
if we pick a differentiating Input Source, our synthesis algorithm
will have convergence guarantees.
\end{theorem}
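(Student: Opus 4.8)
The plan is to push all the real work into the condition for a differentiating Noise Source and then invoke the convergence theorem of Section~\ref{sec:convergence}: once $\rho_N$ (which deletes the first character with probability $1$) is shown to be differentiating with respect to $d_{DL2}$ and $\cL_{ab}$, the second assertion --- convergence whenever the Input Source is differentiating --- is simply an instance of that theorem. I would first record two elementary facts. (i) Since $\rho_N$ is deterministic, the noisy vector $\vec{y}$ is a function of the clean vector: writing the $i$-th clean string as $c_i\cdot w_i$ with $c_i$ a single character, the noise produces $y_i=w_i$, so every clean string is recovered from $y_i$ by prepending exactly one character. (ii) $\cL_{ab}$ is the piecewise loss whose per-example part is $0$ exactly when the program's string is obtained from the data string by prepending a single $a$ or a single $b$, and $\infty$ otherwise; hence $\cL_{ab}(\vec{z},\vec{y})<\infty$ forces $z_i=c'_i\cdot y_i$ with $c'_i\in\{a,b\}$ for every $i$. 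In particular, for $\vec{z}_h=p_h[\vec{x}]$ with $p_h\in\{p_a,p_b\}$ the leading character of each $(z_h)_i$ is the constant $a$ (resp.\ $b$), so $\cL_{ab}(\vec{z}_h,\vec{y})=0$.

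Next I would verify the differentiating condition with the uniform choices $\epsilon=1$ and $k=1$, valid for all $\delta>0$ and $\gamma>0$. Fix $n$ and a clean vector $\vec{z}_h$, and let $\vec{y}$ be the induced noisy vector. Take any $\vec{z}\in Z^n$ with $\cL_{ab}(\vec{z},\vec{y})-\cL_{ab}(\vec{z}_h,\vec{y})\le\gamma$. If $\cL_{ab}(\vec{z},\vec{y})=\infty$ the antecedent fails --- because $\cL_{ab}(\vec{z}_h,\vec{y})$ is finite whenever $\vec{z}_h$ is a program output, or else by the standard convention that an infinite-loss candidate is never within $\gamma$ of the reference --- so there is nothing to prove. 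Otherwise $\cL_{ab}(\vec{z},\vec{y})$ is finite, so $z_i=c'_i\cdot y_i$ with $c'_i\in\{a,b\}$, while by (i) $(z_h)_i=c_i\cdot y_i$ for a single character $c_i$. Thus $z_i$ and $(z_h)_i$ share the suffix $y_i$ and differ by at most a substitution of their first character; their Damerau-Levenshtein distance is therefore at most $1<2$, and so position $i$ contributes $0$ to $d_{DL2}$. Summing over $i$, $d_{DL2}(\vec{z},\vec{z}_h)=0<1=\epsilon$. Since the noise is deterministic, this implication holds with probability $1\ge 1-\delta$, which is precisely the definition of a differentiating Noise Source. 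The intuition is that the only information the noise can possibly leave behind is a lone $a$ versus $b$ in the first position, and $d_{DL2}$ --- unlike the counting metric $d_c$, for which the preceding theorem shows that no loss function makes $\rho_N$ differentiating --- treats such strings as being at distance $0$.

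Finally, with $\rho_N$ differentiating under $d_{DL2}$ and $\cL_{ab}$, the convergence theorem of Section~\ref{sec:convergence} yields the stated conclusion: for any Input Source differentiating with respect to $d_{DL2}$, the synthesis algorithm converges to the hidden program on $G=\{p_a,p_b\}$. For completeness I would note that differentiating Input Sources exist here --- for instance, any i.i.d.\ source placing positive probability on an input on which $p_a$ and $p_b$ prepend a length-two string, since then $d_{DL2}(p_a[\vec{x}],p_b[\vec{x}])$ grows with $n$ and the i.i.d.\ differentiating-input criterion is met.

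I expect the one genuine subtlety to be definitional rather than computational: the argument depends on reading $\cL_{ab}$ so that a \emph{finite} loss pins each $z_i$ down to ``$y_i$ with a single $\{a,b\}$-character prepended'' --- were $\cL_{ab}$ to tolerate an arbitrarily long prefix of $a$'s and $b$'s, a $\vec{z}$ carrying such a long prefix would have finite loss yet large $d_{DL2}$-distance from $\vec{z}_h$, so $\rho_N$ would fail to be differentiating --- together with the (harmless) convention that an infinite-loss candidate lies outside every $\gamma$-ball, which is consistent with how the differentiating Noise Source definition is used in the convergence proof, where the competing vectors are always program outputs.
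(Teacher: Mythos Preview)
Your proposal is correct and follows essentially the same route as the paper: show that any $\vec{z}$ with finite $\cL_{ab}$-loss must coincide with $\vec{z}_h$ up to a single leading-character substitution at each position, hence $d_{DL2}(\vec{z},\vec{z}_h)=0$, so the differentiating-Noise-Source condition holds with probability $1$; then invoke Theorem~\ref{thm:convergence-A}. The paper phrases the key step contrapositively (``$d_{DL2}\ge 1\Rightarrow$ loss gap $=\infty$'') and then re-derives the convergence bound by hand, whereas you work directly with the definition and appeal to the general theorem --- your write-up is in fact more careful, and your closing remark about which inputs witness a differentiating Input Source (those on which a length-two prefix is prepended) is the right one.
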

\noindent We present the proof of this theorem in the Appendix (
Theorem~\ref{thm:necessitydifferentiating-A}).

But note that for space of programs considered in
Subsection~\ref{subsec:nondifferentiatingnoise} (case 1), even through the Noise
Source is differentiating, the complex 
$d_{DL2}$ distance metric will not work as a differentiating
Input Source is impossible to construct for $d_{DL2}$ distance.

\section{Related Work}

\noindent{\bf Noise-Free Programming-by-examples:}
The problem of learning programs from a set of correct input-output 
examples has been studied extensively~\cite{shaw1975inferring, 
gulwani2011automating, singh2016transforming}.
Even though these techniques provide correctness guarantees (and convergence
guarantees if all inputs which allow us to differentiate it from other programs
are provided), these techniques cannot handle noisy datasets.

\noindent{\bf Noisy Program Synthesis:}
There has been recent interest in synthesizing programs over noisy
datasets~\cite{handa2020inductive, peleg2020perfect}. 
These systems do not formalize the requirements for their algorithms to converge to the
correct underlying program. Handa el al's work uses Loss Functions as a
parameter in their synthesis procedure to tailor it to different noise sources. 
It does not provide any process to either pick or design loss functions, given
information about the noise source. 

\noindent{\bf Neural Program Synthesis:}
There is extensive work that uses 
machine learning/deep neural networks to synthesize programs~\cite{raychev2016learning, 
devlin2017robustfill, balog2016deepcoder}. These techniques require a training
phase, a differentable loss function, and provide no formal 
correctness or convergence guarantees.
There also has been a lack of work on designing an appropriate loss
function, given information about the noise source.

\noindent{\bf Learning Theory:}
Learning theory captures the formal aspects of learning models over noise 
data~\cite{illeris2018overview, bolles1975learning, kearns1994introduction}.
Our work takes concepts from learning theory and applies them to the specific
context of synthesizing programs over noisy data. To the best of our knowledge,
the special case of noisy program synthesis has never been explored in learning
theory.

There has considerable work done on designing loss functions for training neural
networks~ 
\cite{xu2019l_dmi, ghosh2017robust}. To the best of our knowledge, these works
do not theoretically prove the optimality of their loss functions with respect
to a given noise source.

\section{Conclusion}
Learning models from noisy data with guarantees is an important problem. 
There has been recent work on synthesizing programs over noisy
datasets using loss functions. Even though these systems have delivered
effective program synthesis over noisy datasets, they do not provide any guidance
for constructing loss functions given prior information about the noise source, nor
do they comment on the convergence of their algorithms to the correct underlying
program.

We are the first paper to formalize the hidden process which samples a hidden
program and constructs a noisy dataset, thus formally specifying the correct
solution to a given noisy synthesis problem.
We are the first paper to formalize the concept of an optimal loss function for
noisy synthesis and provide a closed form expression for such a loss function,
in presence of perfect and imperfect information about the noise source.
We are the first paper of formalize the constraints on the input source, the
noise source, and the loss function which allow our synthesis algorithm to
eventually converge to the correct underlying program.
The case studies highlight why these constraints are necessary.
We also provide proofs of convergence for the noisy program synthesis problems
studied in literature.

\Comment{
\begin{acks}                            
  This material is based upon work supported by the
  \grantsponsor{GS100000001}{National Science
    Foundation}{http://dx.doi.org/10.13039/100000001} under Grant
  No.~\grantnum{GS100000001}{nnnnnnn} and Grant
  No.~\grantnum{GS100000001}{mmmmmmm}.  Any opinions, findings, and
  conclusions or recommendations expressed in this material are those
  of the author and do not necessarily reflect the views of the
  National Science Foundation.
\end{acks}
}


\bibliography{citation}
\balance
\appendix
\onecolumn
\section{Appendix}
\subsection{Convergence}
\label{subsec:convergence-A}
\begin{theorem}\label{thm:convergence-A}
    Given a finite set of programs $G$, a prior 
    distribution $\rho_p$, 
    a Loss Function $\cL$, 
    a {\bf differentiating Input Source} $\rho_i$, and  
    a {\bf differentiating Noise Source} $\rho_N$, the synthesis
    algorithm~\ref{alg:regsyn} will have convergence guarantees.
\end{theorem}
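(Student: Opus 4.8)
The plan is to show that, with probability at least $1-\delta$ over the random dataset $(\vec{x},\vec{y})$ generated from $p_h$, the output vector $p_h[\vec{x}]$ is the \emph{unique} minimizer over $\vec{c}\in G[\vec{x}]$ of the objective $\vec{c}\mapsto \cL(\vec{c},\vec{y})-\log\rho_p(G_{\vec{x},\vec{c}})$, and moreover every program that agrees with $p_h$ on $\vec{x}$ is globally equivalent to $p_h$. By the description of Algorithm~\ref{alg:regsyn} these two facts force the synthesized program $p^n_s$ into $G_{p_h}$, i.e.\ they imply the event whose probability is $Pr[p^n_s\approx p_h~\vert~p_h,N]$. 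So I would fix $\delta>0$, split it as $\delta=\delta_i+\delta_N$ with $\delta_i,\delta_N>0$, and produce a single threshold $k=\max(k_i,k_N)$ and a single distance tolerance $\epsilon$ that work simultaneously for every positive-probability $p_h$.

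\noindent\textbf{Step 1: bound the prior term by a $p_h$-independent constant.} Because $G$ is finite, $\mu:=\min\{\rho_p(p):p\in G,\ \rho_p(p)>0\}>0$. For any $\vec{x}$ and any $\vec{c}\in G[\vec{x}]$ we have $\rho_p(G_{\vec{x},\vec{c}})\le 1$, and since $p_h\in G_{\vec{x},p_h[\vec{x}]}$ with $\rho_p(p_h)>0$ we also have $\rho_p(G_{\vec{x},p_h[\vec{x}]})\ge\mu$. Hence for every $p\in G^C_{p_h}$, writing $\vec{c}=p[\vec{x}]$,
\[
\log\rho_p(G_{\vec{x},\vec{c}}) - \log\rho_p(G_{\vec{x},p_h[\vec{x}]}) \ \le\ -\log\mu \ =:\ \gamma_0,
\]
so the objective at $p_h[\vec{x}]$ is strictly below the objective at $\vec{c}$ as soon as $\cL(\vec{c},\vec{y})-\cL(p_h[\vec{x}],\vec{y})>\gamma_0$ (this also covers the degenerate case $\rho_p(G_{\vec{x},\vec{c}})=0$, where the objective at $\vec{c}$ is $+\infty$). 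Thus it suffices to make the \emph{loss gap} against $p_h[\vec{x}]$ exceed the fixed constant $\gamma_0$ at every $\vec c\ne p_h[\vec x]$ in $G[\vec x]$.

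\noindent\textbf{Step 2: chain the two differentiating hypotheses and combine.} First invoke the differentiating Noise Source with tolerance $\delta_N$ and gap $\gamma:=\gamma_0$ to obtain $k_N$ and some $\epsilon>0$ such that for every $n\ge k_N$ and every length-$n$ vector $\vec{z}_h$, with probability at least $1-\delta_N$ over $\vec{y}\sim\rho_N(\cdot~\vert~\vec{z}_h)$ every $\vec{z}$ with $d(\vec{z},\vec{z}_h)\ge\epsilon$ satisfies $\cL(\vec{z},\vec{y})-\cL(\vec{z}_h,\vec{y})>\gamma_0$ (the contrapositive of the stated implication). Then feed this same $\epsilon$, together with $\delta_i$, into the differentiating Input Source to get $k_i$ such that for $n\ge k_i$, with probability at least $1-\delta_i$ over $\vec{x}$, every $p\in G^C_{p_h}$ has $d(p_h[\vec{x}],p[\vec{x}])>\epsilon$. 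Set $k=\max(k_i,k_N)$. Now fix a positive-probability $p_h$ and $n\ge k$, instantiate the Noise Source event with $\vec{z}_h=p_h[\vec{x}]$, and take a union bound: with probability at least $1-\delta_i-\delta_N\ge 1-\delta$ both events hold. On this intersection, any $\vec{c}\in G[\vec{x}]$ with $\vec{c}\ne p_h[\vec{x}]$ is $p[\vec{x}]$ for some $p\in G^C_{p_h}$, so $d(\vec{c},p_h[\vec{x}])>\epsilon$, hence $\cL(\vec{c},\vec{y})-\cL(p_h[\vec{x}],\vec{y})>\gamma_0$, which by Step~1 makes the objective strictly larger at $\vec{c}$ than at $p_h[\vec{x}]$; thus $p_h[\vec{x}]$ is the unique minimizer. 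Moreover no $q\in G^C_{p_h}$ can satisfy $q[\vec{x}]=p_h[\vec{x}]$ (that would force $d(p_h[\vec{x}],q[\vec{x}])=0$), so $G_{\vec{x},p_h[\vec{x}]}\subseteq G_{p_h}$; hence Algorithm~\ref{alg:regsyn} picks the accepting state for $p_h[\vec{x}]$ and returns a cost-minimal program of $G_{\vec{x},p_h[\vec{x}]}$, which lies in $G_{p_h}$, i.e.\ $p^n_s\approx p_h$. Since the same $k$ works for every positive-probability $p_h$, this is exactly convergence.

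\noindent\textbf{Main obstacle.} The difficulty is entirely in the quantifier bookkeeping rather than in any estimate: the tolerance $\epsilon$ and the threshold $k$ must be fixed before $p_h$ and $\vec{x}$ are revealed. This is why finiteness of $G$ is essential — it lets Step~1 bound the prior term by the single constant $\gamma_0$, uniformly over $p_h$ and $\vec{x}$ — and why the order of invocation matters: the differentiating Noise Source is permitted to choose $\epsilon$ only after seeing $\gamma$ (but uniformly in $\vec{z}_h$), and only then can we hand that $\epsilon$ to the Input Source. A secondary subtlety to get right is the tie-breaking clause in the definition of the convergence process; it is precisely the differentiating-input event that, with high probability, eliminates every $p\in G^C_{p_h}$ agreeing with $p_h$ on $\vec{x}$, making that clause vacuous so that $p^n_s$ may be taken to be any cost-minimal element of $G_{p_h}$.
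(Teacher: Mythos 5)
Your proposal is correct and follows essentially the same route as the paper's proof: bound the prior-term gap by a single constant depending only on the finite $G$, feed that constant as $\gamma$ to the differentiating Noise Source to obtain $\epsilon$ and $k_N$, feed that $\epsilon$ to the differentiating Input Source to obtain $k_i$, and combine the two high-probability events (you use a union bound where the paper multiplies $(1-\delta_N)(1-\delta_i)$; both suffice for $\delta=\delta_i+\delta_N$). Your Step 1 and the tie-breaking remark are in fact stated more carefully than in the paper, which writes the constant as $\gamma=\min_{p\in G}\rho_p(G_p)$ rather than its negative logarithm and leaves the complexity-minimization clause implicit.
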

\begin{proof}
    Given a $\delta > 0$, let $\delta_i > 0$ and $\delta_N > 0$ be two real
    numbers such that
    $\delta = \delta_i + \delta_N$. 
    
    For all positive probability programs $p_h \in G$:
    \begin{equation*}
        \begin{split}  
  Pr[p^n_s \approx p_h~\vert~p_h, N] 
            \geq 
    \int\limits_{\vec{x} \in X^n, 
    \vec{y} \in Y^n}   
            \mathds{1}\big(\forall p \in G^C_{p_h}.
            \cL(p[\vec{x}], \vec{y}) - \log \rho_P( G_{\vec{x}, p[\vec{x}]}) >
            \cL(p_h[\vec{x}], \vec{y})  - \log \pi (G_{\vec{x},
            p_h[\vec{x}]})\big) 
            \\ 
            \rho_i(d\vec{x})\rho_N(d\vec{y}~\vert~p_h[\vec{x}])
        \end{split}
    \end{equation*}
    Let $\gamma = \min_{p \in G} \rho_p(G_p)$. Note that 
    $\log \pi( G_{\vec{x}, p[\vec{x}]})   - \log \pi (G_{\vec{x},
            p_h[\vec{x}]}) > \gamma$
     \begin{equation*}
        \begin{split}
        Pr[p^n_s \approx p_h~\vert~p_h, N]  \geq 
    \int\limits_{\vec{x} \in X^n, 
    \vec{y} \in Y^n}   
            \mathds{1}\big(\forall p \in G^C_{p_h}.
            \cL(p[\vec{x}], \vec{y}) - 
            \cL(p_h[\vec{x}], \vec{y})  > \gamma \big)
            \rho_i(d\vec{x})\rho_N(d\vec{y}~\vert~p_h[\vec{x}])
        \end{split}
    \end{equation*}
      \begin{equation*}
        \begin{split}
        \geq 
    \int\limits_{\vec{x} \in X^n, 
    \vec{y} \in Y^n}   
            \mathds{1}\big(\forall p \in G^C_{p_h}.
            \cL(p[\vec{x}], \vec{y}) - 
            \cL(p_h[\vec{x}], \vec{y})  > \gamma \big) 
        \rho_N(d\vec{y}~\vert~p_h[\vec{x}])
            \\ \mathds{1}(\forall p \in G^C_{p_h}. d(p_h[\vec{x}], p[\vec{x}]) \geq \epsilon)~\rho_i(d\vec{x})
        \end{split}
    \end{equation*}
    If $\forall p \in G^C_{p_h}. d(p_h[\vec{x}], p[\vec{x}]) \geq \epsilon$,
$\big(\forall \vec{z} \in Z^n. 
            d(p_h[\vec{x}], \vec{z}) \geq \epsilon \implies  
            \cL(\vec{z}, \vec{y}) - 
            \cL(p_h[\vec{x}], \vec{y})  > \gamma\big) $
         $\implies$
$\big(\forall p \in G^C_{p_h}.
            \cL(p[\vec{x}], \vec{y}) - 
            \cL(p_h[\vec{x}], \vec{y})  > \gamma \big) $.
            Therefore,
    \begin{equation*}
        \begin{split}
         \geq 
    \int\limits_{\vec{x} \in X^n, 
    \vec{y} \in Y^n}   
            \mathds{1}\big(\forall \vec{z} \in Z^n. 
            d(p_h[\vec{x}], \vec{z}) \geq \epsilon \implies  
            \cL(\vec{z}, \vec{y}) - 
            \cL(p_h[\vec{x}], \vec{y})  > \gamma\big) 
       \\ \rho_N(d\vec{y}~\vert~p_h[\vec{x}])
        \mathds{1}(\forall p \in G. d(p_h[\vec{x}], p[\vec{x}]) \geq \epsilon)~\rho_i(d\vec{x})
        \end{split}
    \end{equation*}
     Since $\rho_N$ is differentiating, we can assume that that for $\delta_N >
     0$, 
     If $n \geq n_N$, $\epsilon \geq \epsilon_N$, $\delta_N > 0$,
       the following statement is true, for any $z_h$ and $\gamma = \log \pi
       (G^C_{p_h}) - \log \pi(G_{p_h})$:
\begin{equation*}
    \begin{split}
    \rho_N\Big[
        \forall \vec{z} \in Z^n.~d(\vec{z}, \vec{z}_h) \geq \epsilon \implies 
        \cL(\vec{z}, \vec{y}) - \cL(\vec{z}_h, \vec{y}) > \gamma
~\Big\vert~\vec{z}_h \Big] \geq (1 - \delta_N) 
    \end{split}
    \end{equation*}
    Since $\rho_i$ is differentiating, we can assume that for
    $\delta_i > 0$ there exists an $n_i$, 
       for $n \geq n_i$, the following is true:
\[
    \int\limits_{\vec{x} \in X^n} \mathds{1}(
    \forall p \in G^C_{p_h}.~d(p_h[\vec{x}], p[\vec{x}]) 
    > \epsilon)\rho_i(d \vec{x}) \geq (1 - \delta_i)
\]
       Then for $n \geq \max(n_N, n_i)$, the following is true:
       \[
        Pr[p^n_s \approx p_h~\vert~p_h, N]             
\geq (1 - \delta_N)
    \int\limits_{\vec{x} \in X^n} \mathds{1}(
    \forall p \in G^C_{p_h}.~d(p_h[\vec{x}], p[\vec{x}]) 
    > \epsilon)\rho_i(d \vec{x}) 
\geq (1 - \delta_N)(1 - \delta_i) \geq (1 - \delta)
        \]
\end{proof}

\subsection{Supplementary Material: Case Studies}

\noindent{\bf Differentiating Input Distributions}
\begin{theorem}\label{thm:differentiatinginput-A}
    If $\rho_I$ is differentiating then $\rho_i$ is differentiating. 
\end{theorem}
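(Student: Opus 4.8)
The plan is to reduce the vector-level differentiating condition for $\rho_i$ to a concentration statement about how often a fixed, finitely-chosen input appears among $n$ i.i.d.\ draws from $\rho_I$, exploiting that $G$ is finite.

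First I would enumerate the finitely many ordered pairs $(p_h, p)$ with $p_h \in G$ and $p \in G^C_{p_h}$. For each such pair, the hypothesis that $\rho_I$ is differentiating supplies an input $x_{p_h,p}$ with $\rho_I(x_{p_h,p}) > 0$ and $d_i\big(p(x_{p_h,p}), p_h(x_{p_h,p})\big) > 0$. Since there are finitely many pairs, I can set $q = \min \rho_I(x_{p_h,p}) > 0$ and $\eta = \min d_i\big(p(x_{p_h,p}), p_h(x_{p_h,p})\big) > 0$, the minima taken over all these pairs. Given the target separation $\epsilon > 0$, I fix $M = \lfloor \epsilon/\eta \rfloor + 1$, so that $M\eta > \epsilon$.

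The key structural observation is that $d$ splits as $d(\vec{z}, \vec{z}_h) = \sum_{j} d_i(z_j, z'_j)$ with nonnegative summands, so if $\vec{x}$ contains the value $x_{p_h,p}$ in at least $M$ of its coordinates, then $d(p_h[\vec{x}], p[\vec{x}]) \ge M\eta > \epsilon$. Consequently, the good event ``$\forall p \in G^C_{p_h}.\ d(p_h[\vec{x}], p[\vec{x}]) > \epsilon$'' is implied by the purely combinatorial event ``for every $p \in G^C_{p_h}$, the value $x_{p_h,p}$ occurs at least $M$ times among the $n$ coordinates of $\vec{x}$'', so it suffices to lower-bound the probability of the latter. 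For a single pair, the number of coordinates of $\vec{x}$ equal to $x_{p_h,p}$ is $\mathrm{Binomial}(n, \rho_I(x_{p_h,p}))$ with success probability $\ge q > 0$; a standard Chernoff/large-deviation bound shows the probability of seeing fewer than the fixed number $M$ of occurrences is at most some $\beta(n)$ depending only on $q$ and $M$ with $\beta(n) \to 0$. A union bound over the finitely many pairs then gives a failure probability $\le |G|^2\,\beta(n)$, uniformly over $p_h \in G$. Choosing $k$ large enough that $|G|^2\beta(k) < \delta$ then yields the differentiating condition for $\rho_i$ with threshold $k$, for every $n \ge k$ and every $p_h \in G$.

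I expect the main obstacle to be the bookkeeping for uniformity: the definition of a differentiating Input Source demands a single threshold $k$ working for all hidden programs $p_h$ simultaneously, so the argument must take the minima ($q$, $\eta$) and the union bound over the entire finite pair set at once rather than pair by pair — this is precisely where finiteness of $G$ is essential. The concentration estimate itself is routine: any off-the-shelf binomial tail bound works, and since $M$ is fixed while $n \to \infty$ even a second-moment (Chebyshev) estimate suffices, giving $\beta(n) = O(1/n)$.
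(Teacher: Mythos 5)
Your proposal is correct and follows essentially the same route as the paper's proof: pick, for each non-equivalent program, a witness input with positive probability and positive per-example distance, use finiteness of $G$ to extract uniform lower bounds on those quantities, reduce the $\epsilon$-separation event to the event that each witness appears at least $\lceil \epsilon/\eta\rceil$ times among the $n$ draws, and kill the binomial tail as $n \to \infty$. The only cosmetic difference is that you combine the per-program events by a union bound where the paper uses a block/product decomposition, and you are somewhat more careful than the paper about making the threshold $k$ uniform over the hidden program $p_h$ by minimizing over all pairs at once.
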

\begin{proof}
    Let for all $p \in G^C_{p_h}$,
    $x_p$ be an input such that $\rho_I(x_p) > 0$
    \[
        d_i(p(x_p), p_h(x_p))> 0
    \]
    Let $\delta_i$ and $\epsilon_i$ be the largest rational numbers such that,
    for all $p$, $\rho_I(x_p) \geq \delta_i$ and $d_i(p(x_p), p_h(x_p)) \geq
    \epsilon_i$.

    Let $\epsilon$ and $\delta$ be any rational number greater than $0$.
    Let $m$ and $n_0$ be natural numbers such that $m \geq
    \frac{\epsilon}{\epsilon_i}$, for $n \geq |G|n_0$,
    \[
        \sum\limits_{j=0}^m C^{n_0}_j \delta_i^j(1-\delta_i)^{{n_0}-j} \leq
        \frac{\delta}{|G|}
    \]
 \[
    \int\limits_{\vec{x} \in X^n} \mathds{1}(
    \forall p \in G^C_{p_h}.~d(p_h[\vec{x}], p[\vec{x}]) 
    > \epsilon)\rho_i(d \vec{x})
\]
  \[
  \geq  \int\limits_{\vec{x} \in X^n} \mathds{1}(
    \forall p \in G^C_{p_h}.~d(p_h[\vec{x}], p[\vec{x}]) 
    > m\epsilon_i)\rho_i(d \vec{x})
\]
    \[
        \geq \int\limits_{\vec{x} \in X^n} \mathds{1}(\forall~p \in G^C_{p_h}.
        \text{atleast }m~x_p\text{ occur in } \vec{x}) \rho_i(d \vec{x})
    \]
      \[
          = (1 - \sum\limits_{j=0}^{m} C^{n_0}_j \delta_i^j(1-\delta_i)^{n_0 -
          j})^{|G|}\geq  (1 - \frac{\delta}{|G|})^{|G|} \geq 1-\delta 
      \]
\end{proof}

\noindent{\bf Differentiating Noise Distributions and optimal Loss Function}

\noindent{\bf Case 2:}
\begin{theorem}\label{thm:differentiatingnsubs-A}
    $n$-Substitution Loss Function $\cL_{nS}$ is the optimal Loss Function for
    the $n$-Substitution Noise Source $N_{nS}$. 
    Also, given Length Distance Metric $d_{l}$
    and the
    $n$-Substitution Loss Function $\cL_{nS}$, $n$-Substitution Noise Source
    $N_{nS}$ is differentiating.
\end{theorem}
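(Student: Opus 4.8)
By the closed-form characterization of optimal Loss Functions derived in Section~\ref{subsec:optimalloss}, it is enough to show that $\cL_{nS}(\vec{z},\vec{y})$ coincides with $-\log\rho_{N_{nS}}(\vec{y}\mid\vec{z})$ up to an additive constant. First I would expand the noise probability: since $N_{nS}$ corrupts each coordinate independently, $\rho_{N_{nS}}(\vec{y}\mid\vec{z}) = \prod_{i=1}^{n}\rho_{nS}(y_i\mid z_i)$ for the per-coordinate corruption law $\rho_{nS}$. Because a substitution never changes a string's length, $\rho_{nS}(y_i\mid z_i)=0$ whenever $|y_i|\neq|z_i|$, so $-\log\rho_{N_{nS}}(\vec{y}\mid\vec{z})=\infty$ exactly on the event matching the $\infty$ branch of $L_{nS}$. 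When all lengths agree, each position is left intact with probability $1-\delta_{nS}$ and otherwise flipped uniformly to one of the $|\Sigma|-1$ other characters of the alphabet $\Sigma$; taking $-\log$ and summing over positions and coordinates reproduces $\sum_{i=1}^{n}L_{nS}(z_i,y_i)$ once the per-position normalization $\log(|\Sigma|-1)$ is absorbed (either into the loss parameters $\delta_i$ or into the free additive constant $C$). Hence $\cL_{nS}(\vec{z},\vec{y}) = -\log\rho_{N_{nS}}(\vec{y}\mid\vec{z}) + C$, which is precisely the optimal Loss Function for $N_{nS}$.

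\textbf{Differentiating.} For the second claim I would instantiate the definition of a differentiating Noise Source with $d=d_l$ and $\cL=\cL_{nS}$, and exploit the same length-preservation property. Fix $\delta>0$ and $\gamma>0$; I claim $k=1$ and $\epsilon=1$ suffice. Let $\vec{z}_h\in Z^n$ with $n\geq 1$ and draw $\vec{y}\sim\rho_{N_{nS}}(\cdot\mid\vec{z}_h)$. Every $\vec{y}$ in the support satisfies $|y_i|=|z_{h,i}|$ for all $i$, so $\cL_{nS}(\vec{z}_h,\vec{y})$ is finite (with probability $1$). Now take any $\vec{z}\in Z^n$. If $d_l(\vec{z},\vec{z}_h)\geq 1$ then $|z_i|\neq|z_{h,i}|=|y_i|$ for some $i$, so $L_{nS}(z_i,y_i)=\infty$ and therefore $\cL_{nS}(\vec{z},\vec{y})=\infty$; since $\cL_{nS}(\vec{z}_h,\vec{y})$ is finite, $\cL_{nS}(\vec{z},\vec{y})-\cL_{nS}(\vec{z}_h,\vec{y})=\infty>\gamma$, so the antecedent of the implication fails and it holds vacuously. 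If instead $d_l(\vec{z},\vec{z}_h)=0$, then the consequent $d_l(\vec{z},\vec{z}_h)<1=\epsilon$ holds outright. Either way the implication holds for every $\vec{z}\in Z^n$, so the event inside $\rho_{N_{nS}}[\cdots\mid\vec{z}_h]$ has probability $1\geq 1-\delta$, establishing the condition.

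\textbf{Main obstacle.} The theorem is more a careful bookkeeping exercise than a deep argument. The one point that genuinely needs attention is that the differentiating condition requires $\cL_{nS}(\vec{z}_h,\vec{y})$ to be finite almost surely, so that the difference $\infty-\cL_{nS}(\vec{z}_h,\vec{y})$ really is $\infty$ and exceeds $\gamma$; this is exactly what length preservation of $N_{nS}$ supplies, and it is also the reason $d_l$ (rather than a coarser metric such as the counting distance) is the natural distance to pair with $\cL_{nS}$ here. In the optimality part, the only subtlety is tracking the $|\Sigma|-1$ normalization and confirming that the ``linear transformation'' remark of the cited work is, for our purposes, just an additive shift, which is all the optimal-Loss-Function criterion needs.
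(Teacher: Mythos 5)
Your proof is correct and follows essentially the same route as the paper's: expand $\rho_{N_{nS}}$ as a product over coordinates and positions to identify $\cL_{nS}$ with $-\log\rho_{N_{nS}}$, then use the fact that a length mismatch forces $\cL_{nS}(\vec{z},\vec{y})=\infty$ while length preservation of $N_{nS}$ keeps $\cL_{nS}(\vec{z}_h,\vec{y})$ finite almost surely, so the differentiating condition holds with probability $1$ for $k=1$, $\epsilon=1$. You are in fact slightly more careful than the paper, which silently drops the $1/(|\Sigma|-1)$ per-substitution normalization; of your two suggested fixes, absorbing it into the parameters $\delta_i$ is the right one (it cannot go into the additive constant $C$, since the number of substituted positions depends on $\vec{z}$).
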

\begin{proof}
\begin{equation*}
\begin{split}
- \log \rho_{N_{nS}
}(\tup{y_1, \ldots y_n}~\vert~\tup{z_1, \ldots z_n}) = 
\sum\limits_{i=1}^n -\log \rho_{n_{nS}}(y_i~\vert~z_i)
\end{split}
\end{equation*}
where:
\begin{equation*}
\begin{split}
- \log \rho_{n_{nS}}(s'_1 \cdot \ldots \cdot s'_k~\vert~
s_1 \cdot \ldots \cdot s_k) = \sum\limits_{i=1}^n (- \mathds{1}(s'_i = s_i)\log
(1 - \delta_i)) + (- \mathds{1}(s'_i \neq s_i) \log \delta_i )
\end{split}
\end{equation*}
Note that $\cL_{nS} = - \log \rho_{N_{nS}}$. Hence $n$-Substitution Loss
Function $\cL_{nS}$ is the optimal
Loss Function for $n$-Substitution Noise Source.

    if $d_l(\vec{z}, \vec{z}_h) \geq 1$, then for all samples $\vec{y}$,
    $\cL_{nS}(\vec{z}, \vec{y}) = \infty$.
 Therefore for all $\gamma > 0$, $\delta > 0$, 
\[
\rho_N[
\forall~\vec{z} \in Z^n.d_{l}(\vec{z}, \vec{z}_h) \geq 1 \implies 
    \cL(\vec{z}, \vec{y}) - \cL(\vec{z}_h, \vec{y}) = \infty >
    \gamma~\vert~\vec{z}_h
] = 1 \geq (1- \delta)
\]

\end{proof}

\noindent{\bf Case 3:}
\begin{theorem}\label{thm:differentiating1D-A}
    1-Delete Loss Function $\cL_{1D}$ is the optimal Loss Function for
    the 1-Delete Noise Source $N_{1D}$. 
    Also, given DL-2 Distance Metric $d_{DL2}$
    and the
    1-Delete Loss Function $\cL_{1D}$, 1-Delete Noise Source
    $N_{1D}$ is differentiating.
\end{theorem}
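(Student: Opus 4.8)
The plan is to dispatch the two claims with the machinery already in place. For \textbf{optimality}, recall from Subsection~\ref{subsec:optimalloss} that, given the noise source $N_{1D}$, any loss of the form $-\log\rho_{N_{1D}}(\vec{y}\mid\vec{c})+C$ is optimal; so it suffices to verify that $\cL_{1D}(\vec{z},\vec{y})$ equals $-\log\rho_{N_{1D}}(\vec{y}\mid\vec{z})$ up to an additive constant. Since $\rho_{N_{1D}}$ factors coordinatewise, $-\log\rho_{N_{1D}}(\vec{y}\mid\vec{z})=\sum_{i=1}^{n}-\log\rho_{n_{1D}}(y_i\mid z_i)$, and each summand is $-\log(1-\delta_{1D})$ when $y_i=z_i$, is $-\log\delta_{1D}$ plus a term depending only on lengths and multiplicities when $y_i$ is a single deletion of $z_i$, and is $\infty$ otherwise. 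Matching this against the three cases of $L_{1D}$ and folding the length/multiplicity terms and the per-example $\delta_i$ into the constant --- the ``linear transformation'' already noted in the text --- yields $\cL_{1D}=-\log\rho_{N_{1D}}+C$, mirroring the proof of Theorem~\ref{thm:differentiatingnsubs-A}.

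For the \textbf{differentiating} claim, having shown $\cL_{1D}$ optimal, I would use the reduced form of the condition displayed just after the definition of a differentiating Noise Source: it suffices that for all $\delta,\gamma>0$ there are $k$ and $\epsilon$ such that for $n\geq k$ and every hidden vector $\vec{z}_h$, with $\rho_{N_{1D}}$-probability at least $1-\delta$ over $\vec{y}$, every $\vec{z}$ with $d_{DL2}(\vec{z},\vec{z}_h)\geq\epsilon$ satisfies $\cL_{1D}(\vec{z},\vec{y})-\cL_{1D}(\vec{z}_h,\vec{y})>\gamma$. The key structural point is that $\cL_{1D}(\vec{z}_h,\vec{y})$ is always finite (each $y_i$ is reachable from $z_{h,i}$ by at most one deletion, by construction of $N_{1D}$), so it is enough to force $\cL_{1D}(\vec{z},\vec{y})=\infty$, for which it suffices to exhibit one coordinate $i$ with $\mathsf{DL}(z_i,z_{h,i})\geq 2$ at which $L_{1D}(z_i,y_i)=\infty$. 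At any such coordinate, the triangle inequality for the Damerau--Levenshtein distance $\mathsf{DL}$ together with $\mathsf{DL}(z_{h,i},y_i)\leq 1$ gives $\mathsf{DL}(z_i,y_i)\geq 1$, so $z_i\neq y_i$; and the only remaining finite case of $L_{1D}$, ``$y_i$ is a single deletion of $z_i$'', forces $|z_i|=|y_i|+1$, which rules out the noise-free branch $y_i=z_{h,i}$ (it would give $\mathsf{DL}(z_i,z_{h,i})\leq 1$) and leaves only the branch in which a deletion actually occurred at $i$ and $y_i$ is a uniformly random single deletion of $z_{h,i}$.

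The crux is then a probabilistic estimate: at a coordinate where a deletion occurred, the probability (over which position was deleted) that the resulting $y_i$ is also a single deletion of a given string at $\mathsf{DL}$-distance $\geq 2$ from $z_{h,i}$ must be bounded, so that --- using that $N_{1D}$ acts independently across coordinates --- preserving finiteness simultaneously at all of the $\geq\epsilon$ relevant coordinates is unlikely once $\epsilon$ and $k$ are chosen in terms of $\delta$ and $\gamma$, after a union bound over the finitely many candidates $\vec{z}$ that are consistent with a fixed $\vec{y}$. The main obstacle I expect is making this per-coordinate bound both quantitative and uniform: uniform over the adversarial candidate $z_i$, and uniform over all hidden vectors $\vec{z}_h$, including those whose components $z_{h,i}$ are short, where one must instead argue directly that no far candidate can be a single-deletion parent of $y_i$ at all; and arranging the union bound --- indexed by the set of coordinates at which $\mathsf{DL}\geq 2$ and the finitely many one-insertion expansions of each $y_i$, rather than by all of $Z^n$ --- so that it does not degrade as $n$ grows.
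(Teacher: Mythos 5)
For the \textbf{optimality} half your argument coincides with the paper's: both expand $-\log\rho_{N_{1D}}$ coordinatewise and match it against the three cases of $L_{1D}$. You are in fact slightly more honest than the appendix, which silently replaces $\rho_{n_{1D}}(y\,\vert\,z)=\frac{1}{\mathsf{len}(\cdot)}\delta_{1D}$ by $\delta_{1D}$; but note that ``folding the length/multiplicity terms into the constant'' is only legitimate for the length term (which depends on $\vert y_i\vert$ alone, hence is constant across candidates for fixed $\vec{y}$), not for the multiplicity term (the number of deletion positions in $z_i$ producing $y_i$ depends on $z_i$, e.g.\ $``aab"$ vs.\ $``cab"$ with $y=``ab"$). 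This imprecision is shared with the paper, so I count this half as matching.

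For the \textbf{differentiating} half you and the paper diverge, and here is the genuine gap. The paper's entire proof is the one\nobreakdash-line assertion that $d_{DL2}(\vec{z},\vec{z}_h)\geq 1$ forces $\cL_{1D}(\vec{z},\vec{y})=\infty$, so the probability is exactly $1$ and no estimate is needed. You correctly observe that this implication is not deterministic: when a deletion actually occurs at coordinate $i$, the observed $y_i$ can be a single-deletion child of both $z_{h,i}$ and of some $z_i$ with Damerau--Levenshtein distance exactly $2$ from $z_{h,i}$ (take $z_{h,i}=``ab"$, $y_i=``b"$, $z_i=``bc"$), in which case $L_{1D}(z_i,y_i)$ is finite and the per-coordinate loss gap is $0$. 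So the probabilistic estimate you defer to is not a technicality --- it is the entire content of the claim, and your proof does not supply it. Worse, the route you sketch cannot supply it: since the quantifier $\forall\vec{z}\in Z^n$ sits \emph{inside} the probability, the adversarial candidate is chosen after $\vec{y}$ is revealed and may agree with $\vec{z}_h$ everywhere except at the coordinates where the collision above occurred; each coordinate independently admits such a collision with probability on the order of $\delta_{1D}$, so for any fixed $\epsilon$ the number of exploitable coordinates grows linearly in $n$ and your union bound over $\binom{n}{\epsilon}$-many candidates diverges rather than vanishes. Closing the argument requires changing something --- restricting $\vec{z}$ to outputs realizable by the finite program set $G$, strengthening the metric (e.g.\ DL-$3$), or accepting the paper's deterministic claim, which the counterexample rules out. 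As written, your proof is incomplete at its crux, and the step it is missing is precisely the one the paper's proof asserts without justification.
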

\begin{proof}
\begin{equation*}
\begin{split}
- \log \rho_{N_{1D}
}(\tup{y_1, \ldots y_n}~\vert~\tup{z_1, \ldots z_n}) = 
\sum\limits_{i=1}^n -\log \rho_{n_{1D}}(y_i~\vert~z_i)
\end{split}
\end{equation*}
where $\rho_{n_{1D}}(z~\vert~z) = (1 - \delta_i)$ and $\rho_{n_{1D}}(y~\vert~z)
= \delta_i$ if $y$ has exactly one character deleted with respect to $z$.

Note that $\cL_{1D} = - \log \rho_{N_{1D}}$. Hence 1-Delete Loss
Function $\cL_{1D}$ is the optimal
Loss Function for 1-Delete Noise Source.

If $d_{DL2}(\vec{z}, \vec{z}_h) \geq 1$ then $\cL_{1D}(\vec{z}, \vec{y}) =
\infty$.
Therefore for all $\gamma > 0$, $\delta > 0$, 
\[
\rho_N[
\forall~\vec{z} \in Z^n.d_{DL2}(\vec{z}, \vec{z}_h) \geq 1 \implies 
\cL(\vec{z}, \vec{y}) - \cL(\vec{z}_h, \vec{y}) = \infty >
    \gamma~\vert~\vec{z}_h] = 1 \geq (1- \delta)
\]
\end{proof}

\noindent{\bf Case 4:}
\begin{theorem}\label{thm:differentiating1DDL-A}
   Given DL-2 Distance Metric $d_{DL2}$
    and the
    Damerau-Levenshtein Loss Function $\cL_{DL}$, the 1-Delete Noise Source
    $N_{1D}$ is differentiating.
\end{theorem}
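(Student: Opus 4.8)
The plan is to verify the definition of a differentiating Noise Source head-on, exploiting the single structural feature of $N_{1D}$: it alters each coordinate of $\vec{z}_h$ by at most one character deletion. Write $K = K(\vec{y}) = |\{\, i : y_i \neq z_{h,i}\,\}|$ for the number of corrupted coordinates of a sample $\vec{y}\sim\rho_{N_{1D}}(\cdot\mid\vec{z}_h)$. Since a deletion always shortens a string, $y_i\neq z_{h,i}$ forces $L_{z_{h,i},y_i}(|z_{h,i}|,|y_i|)=1$ while $y_i=z_{h,i}$ contributes $0$, so $\cL_{DL}(\vec{z}_h,\vec{y})=K$. This plays, for the never-infinite DL loss, the role that the ``loss is $\infty$'' observation plays in Cases~2 and~3.

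First I would prove a coordinate-wise comparison. Fix an arbitrary candidate output vector $\vec{z}\in Z^n$ and a coordinate $i$ counted by $d_{DL2}$, i.e.\ $L_{z_i,z_{h,i}}(|z_i|,|z_{h,i}|)\geq 2$. Using the triangle inequality for the Damerau--Levenshtein metric together with $L_{z_{h,i},y_i}\leq 1$,
\[
  L_{z_i,y_i}(|z_i|,|y_i|) \;\geq\; L_{z_i,z_{h,i}}(|z_i|,|z_{h,i}|) - L_{z_{h,i},y_i}(|z_{h,i}|,|y_i|) \;\geq\; 1,
\]
with the subtracted term equal to $0$ (hence $L_{z_i,y_i}\geq 2$) when coordinate $i$ is uncorrupted. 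Summing over all coordinates and discarding the nonnegative contributions of coordinates not counted by $d_{DL2}$ gives $\cL_{DL}(\vec{z},\vec{y})\geq d_{DL2}(\vec{z},\vec{z}_h)$. Combined with $\cL_{DL}(\vec{z}_h,\vec{y})=K$, this shows that every $\vec{z}$ with $\cL_{DL}(\vec{z},\vec{y})-\cL_{DL}(\vec{z}_h,\vec{y})\leq\gamma$ satisfies $d_{DL2}(\vec{z},\vec{z}_h)\leq K+\gamma$; the bound is uniform in $\vec{z}$, so a single estimate covers the ``$\forall\vec{z}$'' quantifier.

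Next I would bound $K$ probabilistically. Under $N_{1D}$ the coordinates are corrupted independently with probability $\delta_{1D}\in(0,1)$, so $K$ is a sum of $n$ i.i.d.\ $\mathrm{Bernoulli}(\delta_{1D})$ variables and a Chernoff/Hoeffding bound yields, for any $\beta>0$, a threshold $k_0(\delta,\beta)$ with $\rho_{N_{1D}}[\,K\leq(1+\beta)\,n\,\delta_{1D}\mid\vec{z}_h\,]\geq 1-\delta$ for all $n\geq k_0$ and all $\vec{z}_h$. On that event the previous step gives $d_{DL2}(\vec{z},\vec{z}_h)\leq(1+\beta)\,n\,\delta_{1D}+\gamma$ for every low-loss $\vec{z}$, so taking $\epsilon=(1+\beta)\,n\,\delta_{1D}+\gamma+1$ and $k=k_0$ discharges the definition; this is the same bookkeeping that Theorem~\ref{thm:convergence-A} feeds into the differentiating Input Source condition, so I would keep $\epsilon$ in that form rather than prematurely fixing a constant.

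The main obstacle is the tightness of the coordinate-wise step. At a \emph{corrupted} coordinate a candidate can take $z_i$ to be a short edit of the corrupted string $y_i$ and still be at DL-distance $\geq 2$ from $z_{h,i}$ while paying only loss $1$ there (e.g.\ $z_{h,i}=\texttt{abc}$, $y_i=\texttt{ab}$, $z_i=\texttt{ad}$), so the per-coordinate slack at corrupted coordinates can be zero. This is exactly why the estimate on $d_{DL2}$ must carry the random term $K$ and why the concentration step cannot be avoided. I expect the bulk of the work to be (i) the case analysis underlying $\cL_{DL}(\vec{z},\vec{y})\geq d_{DL2}(\vec{z},\vec{z}_h)$ — splitting on whether coordinate $i$ is corrupted and on $|z_i|$ versus $|z_{h,i}|$ — and (ii) arranging the parameter dependencies so that $\epsilon$ and $k$ depend only on $\delta$, $\gamma$, and $\delta_{1D}$, matching the way this lemma is consumed in the convergence proof.
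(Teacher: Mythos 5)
Your decomposition is genuinely different from the paper's, and in one respect it is sharper: the paper's entire proof is the one-line claim that $d_{DL2}(\vec{z},\vec{z}_h)\geq m$ forces $\cL_{DL}(\vec{z},\vec{y})-\cL_{DL}(\vec{z}_h,\vec{y})\geq m$, and your example ($z_{h,i}=\texttt{abc}$, $y_i=\texttt{ab}$, $z_i=\texttt{ad}$) is a correct refutation of that claim at corrupted coordinates: there the per-coordinate loss difference is $0$ even though the coordinate is counted by $d_{DL2}$ (and at coordinates \emph{not} counted by $d_{DL2}$ the difference can even be $-1$). Your replacement chain --- $\cL_{DL}(\vec{z},\vec{y})\geq d_{DL2}(\vec{z},\vec{z}_h)$ by the triangle inequality, $\cL_{DL}(\vec{z}_h,\vec{y})=K$, hence $d_{DL2}(\vec{z},\vec{z}_h)\leq K+\gamma$ for every low-loss $\vec{z}$, plus a Chernoff bound on $K$ --- is correct as far as it goes.

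The gap is that what you end up proving is not the property the paper defines. The definition of a differentiating Noise Source has the quantifier order ``for all $\delta,\gamma>0$ there exist $k$ and $\epsilon\in\mathbb{R}^+$ such that for all $n\geq k$ \dots,'' so $\epsilon$ must be chosen uniformly over $n$; your $\epsilon=(1+\beta)n\delta_{1D}+\gamma+1$ grows linearly in $n$ and therefore does not discharge that definition. Worse, your own counterexample shows the growth is unavoidable: for large $n$ there are, with probability tending to $1$, about $n\delta_{1D}$ corrupted coordinates, and replacing each corrupted $z_{h,i}$ by a string at DL-distance $1$ from $y_i$ but $\geq 2$ from $z_{h,i}$ yields a $\vec{z}$ with loss difference $0\leq\gamma$ and $d_{DL2}(\vec{z},\vec{z}_h)\approx n\delta_{1D}$, so no fixed $\epsilon$ can work over all $\vec{z}\in Z^n$. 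In other words, your argument establishes only an $n$-dependent variant of the property (which would also need the differentiating Input Source to separate programs at a rate growing with $n$ before it could be fed into the convergence theorem), whereas the paper's proof obtains the fixed-$\epsilon$ statement only by asserting the per-coordinate inequality that your example shows to be false. You should either flag the theorem/definition mismatch explicitly or restrict the quantification over $\vec{z}$ (e.g., to $G[\vec{x}]$) in a way that rules out the adversarial $\vec{z}$ built from the corrupted $\vec{y}$; as written, the proposal does not prove the stated theorem.
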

\begin{proof}
If $d_{DL2}(\vec{z}, \vec{z}_h) \geq m$,
then $\cL_{DL}(\vec{z}, \vec{y}) - \cL_{DL}(\vec{z}, \vec{y}) \geq m$ .
Therefore for all $\gamma > 0$, $\delta > 0$, 
\[
\rho_N[
\forall~\vec{z} \in Z^n.d_{DL2}(\vec{z}, \vec{z}_h) \geq \gamma \implies 
\cL(\vec{z}, \vec{y}) - \cL(\vec{z}_h, \vec{y}) > \gamma~\vert~\vec{z}_h
] = 1 \geq (1- \delta)
\]
\end{proof}

\noindent{\bf Case 5:}
\begin{theorem}\label{thm:differentiatingnSDL-A}
   Given Length Distance Metric $d_{l}$ and the
    Damerau-Levenshtein Loss Function $\cL_{DL}$, the $n$-Substitution Noise Source
    $N_{nS}$ is differentiating.
\end{theorem}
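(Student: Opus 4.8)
The plan is to follow the template of the earlier case studies, especially the proof of Theorem~\ref{thm:differentiating1DDL-A} (Case~4), and to exploit the one structural property that makes $N_{nS}$ tractable: a substitution never changes the length of a string, so $\rho_{N_{nS}}(\vec{y}\mid\vec{z}_h)>0$ only when $|y_i|=|z_{h,i}|$ for every coordinate $i$. Hence with probability $1$ over the draw of $\vec{y}$ the noisy vector agrees coordinatewise in length with $\vec{z}_h$, which gives the identity $d_l(\vec{z},\vec{z}_h)=d_l(\vec{z},\vec{y})$ for every candidate vector $\vec{z}\in Z^n$. This identity is what lets us translate a statement about the metric $d_l$, which is phrased relative to $\vec{z}_h$, into a statement about the loss, which is phrased relative to $\vec{y}$.

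The core of the proof is to establish, for every $\vec{y}$ in the support of $\rho_{N_{nS}}(\cdot\mid\vec{z}_h)$ and every $\vec{z}\in Z^n$, the inequality
\[
\cL_{DL}(\vec{z},\vec{y})-\cL_{DL}(\vec{z}_h,\vec{y})\ \ge\ d_l(\vec{z},\vec{z}_h),
\]
which plays the role that the $d_{DL2}$ bound plays in Case~4. The easy half is that the per-example Damerau--Levenshtein loss dominates the absolute difference of lengths, $L_{a,b}(|a|,|b|)\ge\lvert\,|a|-|b|\,\rvert$, so each coordinate $i$ with $|z_i|\ne|y_i|$ contributes at least $1$ to $\cL_{DL}(\vec{z},\vec{y})$; summing and using the identity above gives $\cL_{DL}(\vec{z},\vec{y})\ge d_l(\vec{z},\vec{z}_h)$. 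The genuinely delicate half is a coordinatewise accounting: at a coordinate where $|z_i|=|z_{h,i}|=|y_i|$ the adversary may choose $z_i=y_i$ and contribute $0$, which can be less than $L_{z_{h,i},y_i}$, and one must show these savings are paid back by the forced $+1$ penalties on the $d_l$-disagreeing coordinates. I would attack this with the triangle inequality for $L$ applied coordinate by coordinate, together with the fact that an $n$-substitution corruption alters only finitely many characters, so each $L_{z_{h,i},y_i}$ is finite and controlled.

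Granting the displayed inequality, the definition of a differentiating Noise Source follows as in Cases~2--4: given $\delta>0$ and $\gamma>0$, take $k=1$ and $\epsilon=\gamma+1$. For every $n\ge k$ and every $\vec{z}_h$ of length $n$, if $\cL_{DL}(\vec{z},\vec{y})-\cL_{DL}(\vec{z}_h,\vec{y})\le\gamma$ then the inequality forces $d_l(\vec{z},\vec{z}_h)\le\gamma<\epsilon$; since this holds for every $\vec{y}$ compatible with $\vec{z}_h$,
\[
\rho_{N_{nS}}\Big[\,\forall\vec{z}\in Z^n.\ \cL_{DL}(\vec{z},\vec{y})-\cL_{DL}(\vec{z}_h,\vec{y})\le\gamma\ \Longrightarrow\ d_l(\vec{z},\vec{z}_h)<\epsilon\ \Big\vert\ \vec{z}_h\,\Big]=1\ \ge\ 1-\delta .
\]
Together with a differentiating Input Source this also gives convergence, by Theorem~\ref{thm:convergence-A}.

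The main obstacle is precisely the coordinatewise accounting of the second paragraph: bounding how much loss the adversary can save on the length-matching coordinates by the loss it is forced to pay on the length-mismatching ones. If that accounting turns out to need $\cL_{DL}(\vec{z}_h,\vec{y})$ to be small, the fallback is to carry out the argument only in the regime in which it is actually used inside Theorem~\ref{thm:convergence-A}, where the competing vectors $\vec{z}$ are program outputs $p[\vec{x}]$ for $p$ in the finite set $G$ and $\gamma$ is the fixed prior-gap constant, so that the extra slack can be absorbed by enlarging $k$; the length-preservation observation and the length-difference lower bound on $L$ are routine.
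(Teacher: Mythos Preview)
Your approach is essentially the paper's: both arguments reduce to the claim that $d_l(\vec{z},\vec{z}_h)\ge m$ forces $\cL_{DL}(\vec{z},\vec{y})-\cL_{DL}(\vec{z}_h,\vec{y})\ge m$ on the support of $\rho_{N_{nS}}(\cdot\mid\vec{z}_h)$, and both then conclude the probability is $1\ge 1-\delta$ with $\epsilon$ chosen just above $\gamma$. Your length-preservation observation and the length-difference lower bound on the per-example DL loss are exactly the ingredients the paper uses (implicitly).

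Where you and the paper part ways is on the ``delicate half.'' You are right that the displayed inequality can fail as stated: take a single coordinate with $z_{h,i}=\texttt{aaaa}$, $y_i=\texttt{bbbb}$ (all four characters substituted), and $z_i=\texttt{bbbbb}$; then $d_l=1$ but $\cL_{DL}(\vec{z},\vec{y})-\cL_{DL}(\vec{z}_h,\vec{y})=1-4=-3$. The paper does not repair this by a coordinatewise accounting; it simply inserts the parenthetical hypothesis ``Assuming replacements take place from a very large set,'' i.e.\ it treats the alphabet as large enough that such coincidences (the noisy $y_i$ landing close in DL distance to a length-mismatched competitor $z_i$) do not occur. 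Under that assumption the subtracted term $\cL_{DL}(\vec{z}_h,\vec{y})$ is dominated and the inequality goes through verbatim.

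So you should not expect your triangle-inequality accounting to close the gap in full generality---it cannot, by the counterexample above---and your fallback of restricting to program outputs $p[\vec{x}]$ is outside the definition of a differentiating Noise Source, which quantifies over all $\vec{z}_h\in Z^n$. The cleanest fix is to adopt the paper's extra large-alphabet assumption explicitly; with it, your write-up is already more careful than the paper's one-line proof.
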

\begin{proof}
If $d_{l}(\vec{z}, \vec{z}_h) \geq m$,
    then $\cL_{DL}(\vec{z}, \vec{y}) - \cL_{DL}(\vec{z}, \vec{y}) \geq m$
    (Assuming replacements take place from a very large set).
Therefore for all $\gamma > 0$, $\delta > 0$, 
\[
\rho_N[
\forall~\vec{z} \in Z^n.d_{l}(\vec{z}, \vec{z}_h) \geq \gamma \implies 
\cL(\vec{z}, \vec{y}) - \cL(\vec{z}_h, \vec{y}) > \gamma~\vert~\vec{z}_h
] = 1 \geq (1- \delta)
\]
\end{proof}

\noindent{\bf Non-Differentiating Input Distributions}

\noindent{\bf Case 1:} 
\begin{theorem}\label{thm:nondifferentiatinginputcase1-A}
In this case, $Pr[p^n_s \approx p_1~\vert~p_1, N] = 0$ 
\end{theorem}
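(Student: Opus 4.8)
The plan is to show that the randomized procedure defining $Pr[p^n_s \approx p_1~\vert~p_1, N]$ can only return \textbf{true} on an input vector $\vec{x}$ that actually contains the distinguishing input $x^*$, and that such vectors are sampled with probability $0$ under $\rho_i$.

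First I would specialize the procedure to the hidden program $p_h = p_1$. Since $p_1$ and $p_2$ disagree on $x^*$, they are not equivalent, so $G^C_{p_1} = \{p_2\}$; hence the first clause of the procedure reduces to the single strict inequality
\[
\cL(p_1[\vec{x}], \vec{y}) - \log \rho_p(G_{\vec{x}, p_1[\vec{x}]}) < \cL(p_2[\vec{x}], \vec{y}) - \log \rho_p(G_{\vec{x}, p_2[\vec{x}]}).
\]
Next I would observe that whenever $x^* \notin \vec{x}$, the Case~1 hypothesis $\forall x \neq x^*.~p_1[x] = p_2[x]$ gives $p_1[\vec{x}] = p_2[\vec{x}]$. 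Consequently $\cL(p_1[\vec{x}], \vec{y}) = \cL(p_2[\vec{x}], \vec{y})$, because a Loss Function depends only on the program's outputs on $\vec{x}$, and $G_{\vec{x}, p_1[\vec{x}]} = G_{\vec{x}, p_2[\vec{x}]}$, so both sides of the displayed inequality coincide and the strict inequality fails. Therefore the procedure returns \textbf{false} on every $\vec{x}$ with $x^* \notin \vec{x}$, regardless of the sampled noisy output vector $\vec{y}$.

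Finally I would average over the input and noise sources. By the Case~1 assumption, $\rho_i(\vec{x}) = 0$ for every $\vec{x} \in X^n$ containing $x^*$; thus the event $x^* \notin \vec{x}$ has probability $1$, and on this event the procedure returns \textbf{false}. Hence
\[
Pr[p^n_s \approx p_1~\vert~p_1, N] = \int\limits_{\vec{x} \in X^n,~\vec{y} \in Y^n} \mathds{1}(\text{procedure returns true})~\rho_i(d\vec{x})\,\rho_N(d\vec{y}~\vert~p_1[\vec{x}]) = 0.
\]
I do not expect a genuine obstacle here; the only subtlety worth flagging is that it is the \emph{strictness} of the inequality in the procedure's definition that does the work — on non-differentiating inputs $p_1$ and $p_2$ are indistinguishable except through the complexity measure, and since we only assumed $C(p_2) \geq C(p_1)$ rather than a strict inequality, the procedure correctly declines to certify that $p_1$ is recovered.
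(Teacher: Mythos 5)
Your proposal is correct and follows essentially the same route as the paper: both arguments reduce to the observation that $\rho_i$-almost surely $x^* \notin \vec{x}$, hence $p_1[\vec{x}] = p_2[\vec{x}]$, so the strict inequality $\cL(p_1[\vec{x}],\vec{y}) - \log\rho_p(G_{\vec{x},p_1[\vec{x}]}) < \cL(p_2[\vec{x}],\vec{y}) - \log\rho_p(G_{\vec{x},p_2[\vec{x}]})$ compares two identical quantities and fails, forcing the procedure to return false. If anything, your version is slightly more careful than the paper's, which substitutes $\vec{y} = p_1[\vec{x}]$ into the loss when integrating out the noise, whereas you correctly note the indicator vanishes for every $\vec{y}$ because the Loss Function depends only on $p[\vec{x}]$ and $\vec{y}$.
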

\begin{proof}
\begin{equation*}
    \begin{split}
        Pr[p^n_s \approx p_1~\vert~p_1, N] \leq
       \int\limits_{\vec{x} \in X^n, \vec{y} \in Y^n} 
        \mathds{1}\big(
        \cL(p_2[\vec{x}], \vec{y}) - \\ 
        \log \rho_p(G_{\vec{x}, p_2[\vec{x}]}) 
        > \cL(p_1[\vec{x}], \vec{y}) - \log \rho_p(G_{\vec{x}, p_1[\vec{x}]})
        \big)
         \rho_N(d\vec{y}~\vert~p_1[\vec{x}])\rho_i(d\vec{x})
        \\
        =
       \int\limits_{\vec{x} \in X^n} 
        \mathds{1}\big(
        \cL(p_2[\vec{x}], p_1[\vec{x}]) - 
         \log \rho_p(G_{\vec{x}, p_2[\vec{x}]}) 
         > \cL(p_1[\vec{x}], p_1[\vec{x}]) - \log \rho_p(G_{\vec{x}, p_1[\vec{x}]})
        \big) \rho_i(d\vec{x})
\\ = 0
    \end{split}
\end{equation*}
\end{proof}
\noindent{\bf Case 2:}
\begin{theorem}\label{thm:nondifferentiatinginputcase2-A}
In this case, 
$Pr[p^n_s \approx p_1~\vert~p_1, N] < \delta_i$. 
\end{theorem}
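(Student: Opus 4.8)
The plan is to follow the same route as the Case~1 proof (Theorem~\ref{thm:nondifferentiatinginputcase1-A}), except now I keep track of the event $x^* \in \vec{x}$ rather than discarding it. Since $G = \{p_1, p_2\}$ we have $G^C_{p_1} = \{p_2\}$, so by the definition of $Pr[p^n_s \approx p_1~\vert~p_1, N]$ as the success probability of the ``return true'' process, and because that process can succeed only if the strict inequality it requires holds for the one program $p_2 \in G^C_{p_1}$, we get
\[
Pr[p^n_s \approx p_1~\vert~p_1, N] \leq \int\limits_{\vec{x} \in X^n,~\vec{y} \in Y^n} \mathds{1}\big(\cL(p_2[\vec{x}], \vec{y}) - \log \rho_p(G_{\vec{x}, p_2[\vec{x}]}) > \cL(p_1[\vec{x}], \vec{y}) - \log \rho_p(G_{\vec{x}, p_1[\vec{x}]})\big)\,\rho_N(d\vec{y}~\vert~p_1[\vec{x}])\,\rho_i(d\vec{x}).
\]

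The key step is to split the outer integral on whether $x^* \in \vec{x}$. On the event $x^* \notin \vec{x}$, every coordinate $x_j$ of $\vec{x}$ satisfies $x_j \neq x^*$, so the defining property of $x^*$ gives $p_1(x_j) = p_2(x_j)$ for all $j$, i.e.\ $p_1[\vec{x}] = p_2[\vec{x}]$. Consequently $G_{\vec{x}, p_1[\vec{x}]} = G_{\vec{x}, p_2[\vec{x}]}$ and $\cL(p_1[\vec{x}], \vec{y}) = \cL(p_2[\vec{x}], \vec{y})$, so the two sides of the comparison inside the indicator coincide and the indicator is $0$ --- this is exactly the computation that collapsed the Case~1 bound to zero. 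Hence the integrand is supported on $\{x^* \in \vec{x}\}$; bounding the remaining indicator by $1$ and integrating out $\vec{y}$ (which contributes a factor $1$, as $\rho_N(\cdot~\vert~p_1[\vec{x}])$ is a probability distribution) yields
\[
Pr[p^n_s \approx p_1~\vert~p_1, N] \leq \int\limits_{\vec{x} \in X^n} \mathds{1}(x^* \in \vec{x})\,\rho_i(d\vec{x}) < \delta_i,
\]
the last inequality being the Case~2 hypothesis on $\rho_i$.

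The one point that needs care is that the ``return true'' condition is a conjunction of a \emph{strict} inequality with a complexity-minimality condition, and it is the strictness that does the work: were the inequality non-strict, the algorithm could still output $p_1$ via the complexity tie-break (recall $C(p_1) \le C(p_2)$), and the bound would fail. Beyond pinning that down, the argument is a routine refinement of the Case~1 calculation. I would also stress, as the takeaway, that the resulting bound $\delta_i$ does not depend on $n$: no matter how much the dataset is enlarged, the probability of recovering $p_1$ stays below $\delta_i$, so the algorithm cannot converge.
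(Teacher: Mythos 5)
Your proof is correct and follows essentially the same route as the paper's: bound the success probability by the probability of the strict loss inequality against $p_2$, observe that on $\{x^* \notin \vec{x}\}$ we have $p_1[\vec{x}] = p_2[\vec{x}]$ so the indicator vanishes, and invoke the Case~2 hypothesis. If anything, your version is slightly cleaner, since you bound the indicator before integrating out $\vec{y}$ rather than substituting $\vec{y} = p_1[\vec{x}]$ as the paper's intermediate step does.
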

\begin{proof}
\begin{equation*}
    \begin{split}
        Pr[p^n_s \approx p_1~\vert~p_1, N] \leq
        \\
       \int\limits_{\vec{x} \in X^n, \vec{y} \in Y^n} 
        \mathds{1}\big(
        \cL(p_2[\vec{x}], \vec{y}) -  
        (\log \rho_p(G_{\vec{x}, p_2[\vec{x}]})) 
        > \cL(p_1[\vec{x}], \vec{y}) - \log \rho_p(G_{\vec{x}, p_1[\vec{x}]})
        \big)
         \rho_N(d\vec{y}~\vert~p_1[\vec{x}])\rho_i(d\vec{x})
        \\
        =
       \int\limits_{\vec{x} \in X^n} 
        \mathds{1}\big(
        \cL(p_2[\vec{x}], p_1[\vec{x}]) - 
         (\log \rho_p(G_{\vec{x}, p_2[\vec{x}]})) 
        > \cL(p_1[\vec{x}], p_1[\vec{x}]) - \log \rho_p(G_{\vec{x},
        p_1[\vec{x}]})
        \big) \rho_i(d\vec{x}) \\
        \leq \int\limits_{\vec{x} \in X^n} \mathds{1}(p_1[\vec{x}] \neq
        p_2[\vec{x}])\rho_i(d\vec{x}) < \delta_i
    \end{split}
\end{equation*}
\end{proof}

\noindent{\bf Non-Differentiating Noise Distributions}\\
\noindent{\bf Case 1:}
\begin{theorem}\label{thm:nondifferentiatingnoisecase1-A}
$
    Pr[p^n_s \approx p_a~\vert~p_a, N] + Pr[p^n_s \approx p_b~\vert~p_b, N] \leq
    1
$
\end{theorem}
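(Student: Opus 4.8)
The plan is to exploit the fact that, under the given Noise Source, the noisy dataset carries no information whatsoever about which of $p_a$ or $p_b$ was the hidden program. First I would observe that for every input $x$, both $p_a(x)$ and $p_b(x)$ are strings whose first character is fixed (differing between the two programs) and whose remaining suffix is exactly $x$; since $N$ deterministically deletes the leading character, the noisy output produced from either program on input $x$ is precisely $x$. Hence, writing $\mu$ for the distribution on datasets $(\vec{x}, \vec{y})$ obtained by drawing $\vec{x}$ from $\rho_i(\vec{x}~\vert~n)$ and setting $\vec{y} = \vec{x}$, the dataset distribution conditioned on $p_h = p_a$ and the one conditioned on $p_h = p_b$ are \emph{both} equal to $\mu$; here I use the paper's standing assumption that $\rho_i$ is independent of the hidden program.

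Next I would note that Algorithm~\ref{alg:regsyn} is a deterministic function of the dataset $(\vec{x}, \vec{y})$ and, since $G = \{p_a, p_b\}$, it returns either $p_a$ or $p_b$. Let $A$ be the set of datasets on which the algorithm returns a program equivalent to $p_a$, and $B$ the set on which it returns a program equivalent to $p_b$. Because $p_a \not\approx p_b$ (they disagree on every input), a run cannot simultaneously return a program equivalent to $p_a$ and one equivalent to $p_b$, so $A$ and $B$ are disjoint events.

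Finally, unwinding the definition of $Pr[p^n_s \approx p_h~\vert~p_h, N]$ and using the observation that in both conditioning cases the dataset is distributed as $\mu$, we get $Pr[p^n_s \approx p_a~\vert~p_a, N] = \mu(A)$ and $Pr[p^n_s \approx p_b~\vert~p_b, N] = \mu(B)$. Since $A$ and $B$ are disjoint and $\mu$ is a probability measure, $\mu(A) + \mu(B) \leq 1$, which is exactly the claimed inequality; the corollary $Pr[p^n_s \approx p_a~\vert~p_a, N] \geq 1 - \delta \implies Pr[p^n_s \approx p_b~\vert~p_b, N] \leq \delta$ then follows immediately.

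I do not anticipate a genuine obstacle here. The only points that need care are (i) spelling out that the two dataset distributions genuinely coincide, which rests on the first-character-deletion behavior of $N$ together with the independence of $\rho_i$ from $p_h$, and (ii) the tie-breaking in the nested $\argmin$ steps of Algorithm~\ref{alg:regsyn} --- but even in the presence of ties the algorithm returns one well-defined program per dataset, so the disjointness of $A$ and $B$, and hence the bound, is unaffected.
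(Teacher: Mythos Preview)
Your proposal is correct and rests on the same key observation as the paper: because $N$ deterministically strips the leading character, the noisy dataset has the \emph{same} distribution whether the hidden program is $p_a$ or $p_b$, so the two success events live under one common measure and are disjoint, giving the sum bound.

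The only difference is presentational. The paper works directly with the loss-inequality characterization from the definition of $Pr[p^n_s \approx p_h \mid p_h, N]$: it writes the two events as $\{\cL(\vec{y}, p_b[\vec{x}]) - \cL(\vec{y}, p_a[\vec{x}]) > \gamma\}$ and its mirror, swaps the conditioning $\rho_N[\,\cdot \mid p_a[\vec{x}]]$ for $\rho_N[\,\cdot \mid p_b[\vec{x}]]$ (valid since both yield $\vec{y} = \vec{x}$), and then notes the two strict-inequality events are mutually exclusive. You instead abstract one level up, treating Algorithm~\ref{alg:regsyn} as a deterministic map from datasets to programs and letting $A$, $B$ be the preimages of $p_a$, $p_b$. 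Your framing is arguably cleaner and makes the structure of the argument more transparent; the paper's version stays closer to the formal definition and avoids any discussion of tie-breaking, since the definition already uses a strict inequality (so ties make \emph{both} events fail, and disjointness is immediate). Either route is fine here.
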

\begin{proof}
For distance metric $d_c$, an Input Source $\rho_i$ which only returns $\tup{x,
\mathsf{true}}$ is differentiating, as for all $x$, $p_a(\tup{x, \mathsf{true}})
\neq p_b(\tup{x, \mathsf{true}})$.
Note that for all Loss Functions $\cL$,
\begin{equation*}
\begin{split}
\rho_N[
\cL(\tup{x_1, \ldots x_n}, \tup{b \cdot x_1, \ldots b \cdot x_n}) - 
\cL(\tup{x_1, \ldots x_n}, \tup{a \cdot x_1, \ldots a \cdot x_n}) >
\gamma~\vert~\tup{a~\cdot x_1, \ldots a \cdot x_n}
] 
\\ + 
\rho_N[
\cL(\tup{x_1, \ldots x_n}, \tup{a \cdot x_1, \ldots a \cdot x_n}) - 
\cL(\tup{x_1, \ldots x_n}, \tup{b \cdot x_1, \ldots b \cdot x_n}) >
\gamma~\vert~\tup{b~\cdot x_1, \ldots b \cdot x_n}
] 
\\ = \rho_N[
\cL(\tup{x_1, \ldots x_n}, \tup{b \cdot x_1, \ldots b \cdot x_n}) - 
\cL(\tup{x_1, \ldots x_n}, \tup{a \cdot x_1, \ldots a \cdot x_n}) >
\gamma~\vert~\tup{b~\cdot x_1, \ldots b \cdot x_n}
] 
\\ + 
\rho_N[
\cL(\tup{x_1, \ldots x_n}, \tup{a \cdot x_1, \ldots a \cdot x_n}) - 
\cL(\tup{x_1, \ldots x_n}, \tup{b \cdot x_1, \ldots b \cdot x_n}) >
\gamma~\vert~\tup{b~\cdot x_1, \ldots b \cdot x_n}
] 
 \\ \leq 1
\end{split}
\end{equation*}
\end{proof}

\noindent{\bf Case 2:}
\begin{theorem}\label{thm:nondifferentiatingnoisecase2-A}
 Given $n$-Substitution Loss Function
$\cL_{nS}$, 1-Delete Noise Source $N_{1D}$ is non-differentiating.
In this case, the synthesis algorithm will never converge.
\end{theorem}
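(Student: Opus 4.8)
The plan is to pin down a concrete synthesis setting $(G, d, \rho_i, \rho_p)$ in which $N_{1D}$ violates the differentiating condition for the loss $\cL_{nS}$ and in which the success probability $Pr[p^n_s \approx p_h \mid p_h, N_{1D}]$ does not approach $1$. Everything rests on one structural observation: the per-example loss of $\cL_{nS}$ is $\infty$ whenever $|z| \neq |y|$, and a $1$-delete corruption always shortens the string it touches; hence, the moment \emph{any} coordinate of the noisy output $\vec y$ has been corrupted, $\cL_{nS}(p_h[\vec x], \vec y) = \infty$. I would take $G = \{p_h, p'\}$ where $p_h$ is the constant program returning $\texttt{aa}$ and $p'$ the constant program returning $\texttt{bb}$ (any DSL containing two such non-equivalent programs works), any prior $\rho_p$ with $\rho_p(p_h) > 0$, and $d = d_{DL2}$ --- the same metric under which $\cL_{1D}$ makes $N_{1D}$ differentiating (Theorem~\ref{thm:differentiating1D-A}), so that the contrast is genuinely about the loss. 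Since $p_h$ and $p'$ are constant and $\mathrm{DL}(\texttt{aa}, \texttt{bb}) = 2$, \emph{any} $\rho_i$ with nonempty support is differentiating for this $G$ and $d_{DL2}$ by the i.i.d.\ criterion of Theorem~\ref{thm:differentiatinginput-A}; the whole argument also goes through verbatim with the length metric $d_l$ in place of $d_{DL2}$.

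\emph{$N_{1D}$ is non-differentiating.} Fix $\delta = \tfrac12$ and any $\gamma > 0$, and let $k$ and $\epsilon > 0$ be arbitrary. I would take $\vec z_h = p_h[\vec x] = \tup{\texttt{aa}, \ldots, \texttt{aa}}$ of length $n$, where $n \geq k$ is chosen large enough that the number $m$ of coordinates corrupted by $N_{1D}$ (a $\mathrm{Binomial}(n, \delta_{1D})$ variable) satisfies $\rho_{N_{1D}}[\, m \geq \max(1, \lceil \epsilon \rceil)\,] > \tfrac12$. On the event $m \geq 1$ we have $\cL_{nS}(\vec z_h, \vec y) = \infty$, and I would take as witness the vector $\vec z$ defined by: on each uncorrupted coordinate $i$ put $z_i = y_i = \texttt{aa}$, and on each corrupted coordinate $j$ (where $y_j = \texttt{a}$) put $z_j = \texttt{b}$. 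Then $|z_i| = |y_i|$ for all $i$, so $\cL_{nS}(\vec z, \vec y)$ is a finite sum of $-\log(1-\delta_i)$ and $-\log \delta_i$ terms, whence $\cL_{nS}(\vec z, \vec y) - \cL_{nS}(\vec z_h, \vec y) = -\infty \leq \gamma$; but $\mathrm{DL}(\texttt{b}, \texttt{aa}) = 2$, so $d_{DL2}(\vec z, \vec z_h) = m$. Thus, whenever $m \geq \max(1, \lceil \epsilon \rceil)$, this $\vec z$ falsifies the implication inside the differentiating condition, so $\rho_{N_{1D}}[\,\forall \vec z\,.\ \cL_{nS}(\vec z,\vec y) - \cL_{nS}(\vec z_h,\vec y) \leq \gamma \implies d_{DL2}(\vec z, \vec z_h) < \epsilon \mid \vec z_h\,] \leq \rho_{N_{1D}}[\, m < \max(1, \lceil\epsilon\rceil)\,] < \tfrac12 = 1 - \delta$. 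Since $k, \epsilon$ were arbitrary, no pair $(k, \epsilon)$ witnesses the differentiating condition.

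\emph{Non-convergence.} I would run the process defining $Pr[p^n_s \approx p_h \mid p_h, N_{1D}]$ directly. Because $\rho_p(p_h) > 0$, the term $-\log \rho_p(G_{\vec x, p_h[\vec x]})$ is a finite number; so whenever at least one coordinate of $\vec y$ has been corrupted, the left-hand side $\cL_{nS}(p_h[\vec x], \vec y) - \log \rho_p(G_{\vec x, p_h[\vec x]})$ equals $\infty$, and the \emph{strict} inequality that the process requires, $\infty < \cL_{nS}(p'[\vec x], \vec y) - \log \rho_p(G_{\vec x, p'[\vec x]})$, is impossible (its right-hand side is at most $\infty$). Hence the process returns true only when $\vec y$ has no corrupted coordinate, i.e.\ $Pr[p^n_s \approx p_h \mid p_h, N_{1D}] \leq (1 - \delta_{1D})^n$, which tends to $0$ as $n \to \infty$. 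Taking $\delta = \tfrac12$ in the definition of convergence, there is no threshold $k$ past which this probability stays $\geq 1 - \delta$, so the synthesis algorithm does not converge in this setting.

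\emph{Main obstacle.} The arithmetic is routine; the one genuinely delicate point is the choice of witness $\vec z$ for the non-differentiating claim. The naive candidate $\vec z = \vec y$ has finite $\cL_{nS}$-loss but $d_{DL2}(\vec y, \vec z_h) = 0$, since a single deletion has Damerau--Levenshtein distance $1 < 2$; so one must instead overwrite the deleted-coordinate entries with a fresh letter ($\texttt{b}$), which raises each such coordinate's DL distance to $2$ while preserving lengths --- and hence keeping $\cL_{nS}$ finite. A secondary point is that the non-convergence argument deliberately uses only the strict-inequality clause of the convergence predicate, which already fails as soon as $\cL_{nS}(p_h[\vec x], \vec y) = \infty$; this avoids having to reason about how Algorithm~\ref{alg:regsyn} breaks ties among outputs all of whose losses are $\infty$.
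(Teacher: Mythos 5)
Your proof is correct and rests on exactly the same core observation as the paper's: a single $1$-delete corruption changes a string's length, which forces $\cL_{nS}(\vec{z}_h,\vec{y})=\infty$, and the probability of zero corruptions, $(1-\delta_{1D})^n$, vanishes as $n\to\infty$. The paper's own proof is a two-line sketch of this; your version additionally supplies the witness vector $\vec{z}$ (length-preserving but DL-far from $\vec{z}_h$) that is genuinely needed to falsify the differentiating condition, and correctly notes that the naive witness $\vec{z}=\vec{y}$ would fail under $d_{DL2}$ --- so you have filled in a step the paper glosses over rather than deviated from its approach.
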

\begin{proof}
If even a single deletions happen in $\vec{y}$, then $\cL_{nS}(\vec{z}_h,
\vec{y}) = \infty$.
The prob of no deletions is equal to $(1 - \delta_i)^n$ which decreases with
$n$, therefore in this case 1-Delete Noise Source is non-differentiating.
\end{proof}

\noindent{\bf Necessity of the distance metric $d$}
\begin{theorem}\label{thm:necessitynondifferentiating-A}
There exists no Loss Functions for which the above Noise Source is
differentiating. Note that for $\rho_i$ described above, the synthesis algorithm
will not converge as well.
\end{theorem}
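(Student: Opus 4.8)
The plan is to establish the negation of the \textbf{differentiating Noise Source} condition for \emph{every} Loss Function $\cL$ (with respect to the counting distance metric $d_c$), and then to read off non-convergence from the fact that, under the given $\rho_i$, the two programs induce identical noisy-output distributions.

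First I would set up two witness output vectors for each length $n$. Fix any input vector $\vec{x} = \tup{(x_1,\mathsf{true}),\ldots,(x_n,\mathsf{true})}$ in which every boolean component is true, and let $\vec{u} = p_a[\vec{x}]$ and $\vec{v} = p_b[\vec{x}]$. Since the Noise Source deletes the leading character with probability $1$, both $\vec{u}$ and $\vec{v}$ are corrupted deterministically to the \emph{same} vector $\vec{y}^\ast = \tup{x_1,\ldots,x_n}$; moreover $d_c(\vec{u},\vec{v}) = n$, because $\vec{u}$ and $\vec{v}$ disagree in every coordinate (leading characters $a$ versus $b$).

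Next I would run the quantifier bookkeeping against the definition of a differentiating Noise Source. Fix $\delta = 1/4$ and $\gamma = 1$; given any candidate $k$ and $\epsilon \in \mathbb{R}^+$, choose $n \geq \max(k,\lceil \epsilon \rceil)$. Because the noise is deterministic here, for the hidden vector $\vec{z}_h = \vec{u}$ the conditional probability in the definition is either $0$ or $1$, and it is $1$ only if the implication $\cL(\vec{z},\vec{y}^\ast) - \cL(\vec{u},\vec{y}^\ast) \leq \gamma \Rightarrow d_c(\vec{z},\vec{u}) < \epsilon$ holds for all $\vec{z} \in Z^n$; symmetrically for $\vec{z}_h = \vec{v}$. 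Now one of the two numbers $\cL(\vec{u},\vec{y}^\ast)$ and $\cL(\vec{v},\vec{y}^\ast)$ is at most the other; write $\cL(\vec{w}_1,\vec{y}^\ast) \leq \cL(\vec{w}_2,\vec{y}^\ast)$ with $\{\vec{w}_1,\vec{w}_2\} = \{\vec{u},\vec{v}\}$. Taking hidden vector $\vec{z}_h = \vec{w}_2$ and test vector $\vec{z} = \vec{w}_1$, the antecedent $\cL(\vec{w}_1,\vec{y}^\ast) - \cL(\vec{w}_2,\vec{y}^\ast) \leq 0 \leq \gamma$ holds while the consequent fails, since $d_c(\vec{w}_1,\vec{w}_2) = n \geq \epsilon$. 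Hence the conditional probability for $\vec{z}_h = \vec{w}_2$ equals $0 < 1 - \delta$, so the defining inequality fails at this $n$ and $\vec{z}_h$. As $k$ and $\epsilon$ were arbitrary, no $\cL$ makes the Noise Source differentiating.

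Finally, for non-convergence under the $\rho_i$ described above: because $\rho_i$ returns only inputs with $b$ set to true, for every sampled $\vec{x}$ both $p_a[\vec{x}]$ and $p_b[\vec{x}]$ are corrupted, with probability $1$, to $\vec{y}^\ast$, so the distribution of datasets $(\vec{x},\vec{y})$ is identical whether the hidden program is $p_a$ or $p_b$. The algorithm's output on any dataset is equivalent to at most one of $p_a,p_b$ (they disagree on inputs with $b$ false, e.g.\ $p_a$ prepends $``aa"$ while $p_b$ prepends $``bb"$), so $Pr[p^n_s \approx p_a~\vert~p_a, N] + Pr[p^n_s \approx p_b~\vert~p_b, N] \leq 1$ for all $n$; this forbids both summands from being $\geq 1 - \delta$ once $\delta < 1/2$, so convergence fails. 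The only delicate points I anticipate are the quantifier order in the definition of differentiating, and the convention for $\infty - \infty$ should $\cL$ be infinite on both witnesses; the same case split still handles the latter, since then $\cL(\vec{u},\vec{y}^\ast) = \cL(\vec{v},\vec{y}^\ast)$ and the antecedent $0 \leq \gamma$ continues to hold.
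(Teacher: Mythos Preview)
Your proposal is correct and follows essentially the same approach as the paper: both arguments exploit that the deterministic first-character deletion maps $p_a[\vec{x}]$ and $p_b[\vec{x}]$ to the \emph{same} noisy vector, so for any loss function at least one of the two hidden vectors fails the differentiating inequality, and the identical dataset distributions force $Pr[p^n_s \approx p_a\mid p_a,N] + Pr[p^n_s \approx p_b\mid p_b,N] \leq 1$. Your write-up is in fact more careful about the quantifier bookkeeping than the paper's rather terse proof, but the underlying idea is the same.
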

\begin{proof}
For distance metric $d_c$, a $\rho_i$ which only returns $\tup{x,
\mathsf{true}}$ is differentiating, as for all $x$, $p_a(\tup{x, \mathsf{true}})
\neq p_b(\tup{x, \mathsf{true}})$.
Note that for all Loss Functions $\cL$,
\begin{equation*}
\begin{split}
\rho_N[
\cL(\tup{x_1, \ldots x_n}, \tup{b \cdot x_1, \ldots b \cdot x_n}) - 
\cL(\tup{x_1, \ldots x_n}, \tup{a \cdot x_1, \ldots a \cdot x_n}) >
\gamma~\vert~\tup{a~\cdot x_1, \ldots a \cdot x_n}
] 
\\ + 
\rho_N[
\cL(\tup{x_1, \ldots x_n}, \tup{a \cdot x_1, \ldots a \cdot x_n}) - 
\cL(\tup{x_1, \ldots x_n}, \tup{b \cdot x_1, \ldots b \cdot x_n}) >
\gamma~\vert~\tup{b~\cdot x_1, \ldots b \cdot x_n}
] 
 \\ \leq 1
\end{split}
\end{equation*}
Therefore $\rho_N$ is not differentiating.

Similarly,
\begin{equation*}
\begin{split}
Pr[p^n_s \approx p_a~\vert~p_a] + Pr[p^n_s \approx p_b~\vert~p_b] \leq 1
\end{split}
\end{equation*}
Therefore, if for any $n$, 
$Pr[p^n_s \approx p_a~\vert~p_a]
 \geq (1 - \delta)$ then 
$
Pr[p^n_s \approx p_b~\vert~p_b] \leq \delta$
\end{proof}

\begin{theorem}\label{thm:necessitydifferentiating-A}
    Consider a Loss Function which checks if the first character appended to a
    string is either 
$``a"$ or
$``b"$.
    Given the above Loss Function and DL-2 Distance metric $d_{DL2}$, 
    the Noise Source described above is differentiating.

In this case,
if we pick a differentiating Input Source, our synthesis algorithm
will have convergence guarantees.
\end{theorem}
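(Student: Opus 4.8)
The plan is to show that the deterministic ``delete the leading character'' Noise Source $\rho_N$ is differentiating with respect to the Loss Function $\cL_{ab}$ and the distance metric $d_{DL2}$, and then to derive convergence by combining this with the assumed differentiating Input Source through the general convergence result, Theorem~\ref{thm:convergence-A}.

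First I would fix the two facts that drive everything. Since $\rho_N$ deletes the first symbol with probability $1$, from any realizable true output vector $\vec{z}_h = \tup{z_{h,1}, \ldots, z_{h,n}}$ (so $\vec{z}_h = p_h[\vec{x}]$ for some $p_h \in G$ and some input vector $\vec{x}$) the unique noisy sample $\vec{y}$ has $y_i$ equal to $z_{h,i}$ with its first symbol removed; and since every output of $p_a$ or $p_b$ starts with the character $a$ or the character $b$, we get $z_{h,i} = c^h_i \cdot y_i$ with $c^h_i \in \{a,b\}$. For $\cL_{ab}$ I would take the natural per-example reading of ``checks whether the appended leading character is $a$ or $b$'' under front-deletion noise: $L_{ab}(z,y) = 0$ when $z \in \{y,\ a \cdot y,\ b \cdot y\}$ (i.e. $z$ is consistent with one leading deletion of a character in $\{a,b\}$) and $L_{ab}(z,y) = +\infty$ otherwise, so that $\cL_{ab}(\vec{z}_h, \vec{y}) = 0$, the minimum possible value.

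The main step is then immediate. Fix $\delta > 0$ and $\gamma > 0$ (we may assume $\gamma < \infty$), put $\epsilon = 1$, and take $n$ arbitrary. As the noise is deterministic there is only one $\vec{y}$ to consider, so it suffices to prove that any $\vec{z} \in Z^n$ with $\cL_{ab}(\vec{z}, \vec{y}) - \cL_{ab}(\vec{z}_h, \vec{y}) \le \gamma$ has $d_{DL2}(\vec{z}, \vec{z}_h) < 1$. From the inequality and $\cL_{ab}(\vec{z}_h, \vec{y}) = 0$, finiteness of $\gamma$ forces $\cL_{ab}(\vec{z}, \vec{y}) < \infty$, hence $z_i \in \{y_i,\ a \cdot y_i,\ b \cdot y_i\}$ for each $i$. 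Together with $z_{h,i} = c^h_i \cdot y_i$, the strings $z_i$ and $z_{h,i}$ share the suffix $y_i$ and differ only by a prefix of length at most one, so their Damerau--Levenshtein distance is $\le 1$ at every position; thus no position contributes to $d_{DL2}$, giving $d_{DL2}(\vec{z}, \vec{z}_h) = 0 < \epsilon$. The relevant probability is therefore $1 \ge 1 - \delta$, so $\rho_N$ is differentiating, and since $G$ is finite with prior $\rho_p$, $\cL_{ab}$ is the Loss Function, and $\rho_i$ is differentiating for $d_{DL2}$ by hypothesis, Theorem~\ref{thm:convergence-A} yields the convergence guarantee.

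I expect the only delicate part to be the bookkeeping imposed by the coarseness of $d_{DL2}$. Because $d_{DL2}$ discards single-symbol discrepancies, $\cL_{ab}$ need \emph{not} separate $p_a$ from $p_b$ on inputs with $b = \mathsf{true}$ (there the outputs $a \cdot x_i$ and $b \cdot x_i$ are at Damerau--Levenshtein distance $1$ and hence $d_{DL2}$-indistinguishable), while it does separate them on $b = \mathsf{false}$ inputs; this asymmetry is exactly what makes the argument succeed for $d_{DL2}$ even though the analogous statement for the counting distance fails. A second point I would state explicitly is that the quantifier ``for all $\vec{z}_h$ of length $n$'' in the definition of a differentiating Noise Source is invoked only for realizable $\vec{z}_h \in G[\vec{x}]$, for which $\cL_{ab}(\vec{z}_h, \vec{y})$ is finite --- that finiteness, together with determinism of the noise, is what makes the success probability exactly $1$ rather than merely $\ge 1 - \delta$.
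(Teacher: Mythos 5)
Your proof is correct and follows essentially the same route as the paper's: you show that any $\vec{z}$ with finite $\cL_{ab}$-loss against the (deterministic) noisy output differs from $\vec{z}_h$ by at most one leading character in each position, hence $d_{DL2}(\vec{z},\vec{z}_h)=0<\epsilon$, which is just the contrapositive of the paper's claim that $d_{DL2}(\vec{z},\vec{z}_h)\geq 1$ forces an infinite loss gap, and you then invoke Theorem~\ref{thm:convergence-A} where the paper instead recomputes the convergence probability directly. Your write-up is in fact the tidier one: you make the loss function concrete, you explicitly restrict the ``for all $\vec{z}_h$'' quantifier to realizable output vectors, and you correctly identify the $b=\mathsf{false}$ inputs (outputs $aa\cdot x$ versus $bb\cdot x$, at Damerau--Levenshtein distance $2$) as the ones a $d_{DL2}$-differentiating Input Source must supply, where the paper's prose and appendix write $\mathsf{true}$, apparently in error.
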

\begin{proof}
If 
$
\cL(\tup{y_1, \ldots y_n}, \tup{``aa" \cdot x_1, \ldots ``aa" \cdot x_1}) =
\infty$
 if for some $i$, $y_i = ``b" \cdot x_i$. Similarly, we can define this
in the case of 
Therefore if for two vectors $\vec{z}_1$ $\vec{z}_2$, $d_{DL2}(\vec{z}_1,
\vec{z}_2) \geq 1$, then $\cL(\vec{z}_2, \vec{y}) - \cL(\vec{z}_1, \vec{y}) =
\infty$ if $\rho_N(\vec{y}~\vert~\vec{z}) > 0$.
\[
\rho_N[
\forall \vec{z} \in Z^n.~d_{DL2}(\vec{z}, \vec{z}_h) \geq 1 \implies
\cL(\vec{z}, \vec{y}) - \cL(\vec{z}_h, \vec{y}) = \infty~\vert~\vec{z}_h] = 1
] = 1
\]

Note that if the Input Source is differentiating then the following is true.
\[
  \mathds{1}(\tup{?, \mathsf{true}} \in \vec{x})\rho_i(d\vec{x}) \geq (1 -
\delta)
\]
Note that 
\[
Pr[p^n_s \approx p_a~\vert~p_a, N] =  
  \mathds{1}(\tup{?, \mathsf{true}} \in \vec{x})\rho_i(d\vec{x}) \geq (1 -
\delta)
\]
Similarly,
\[
Pr[p^n_s \approx p_b~\vert~p_b, N] =  
  \mathds{1}(\tup{?, \mathsf{true}} \in \vec{x})\rho_i(d\vec{x}) \geq (1 -
\delta)
\]

\end{proof}


\end{document}